\newcolumntype{d}[1]{D{.}{.}{#1}}
\renewcommand\@makefntext[1]{%
  \noindent\makebox[1em][r]{\@makefnmark}#1}
\renewcommand*{\@fnsymbol}[1]{\ifcase#1\or*\else\@alph{\numexpr#1-1\relax}\fi}
\def\citetpos#1{\citeauthor{#1}'s (\citeyear{#1})}
\newcommand\fnote[1]{\smallskip\captionsetup{font=footnotesize}\caption*{#1}}
\newtheorem*{theorem}{Theorem}
\newtheorem{assumption}{Assumption}
\newtheorem{proposition}{Proposition}
\theoremstyle{remark}
\newtheorem*{algorithm}{Algorithm}
\begin{document}
\newgeometry{top=0.3in, bottom=0.8in}

\title{The unbearable lightness of equilibria in a low interest rate environment}
\author{Guido Ascari and Sophocles Mavroeidis\thanks{Ascari: Department of Economics and Management, University of Pavia, Via San Felice 5, 27100 Pavia, Italy and De Nederlandsche Bank, guido.ascari@unipv.it. Mavroeidis: Department of Economics, University of Oxford, Manor Road, OX1 3UQ, sophocles.mavroeidis@economics.ox.ac.uk. We would like to thank Boragan Arouba, Mikkel Plagborg-M{\o}ller, Anton Nakov, Frank Schorfheide, Sebastian Schmidt, Nathaniel Throckmorton and the participants of the NBER-EFSF meeting on methods and applications for DSGE models at the Federal Reserve Bank of Philadelphia in October 2019, the North American Summer meeting of the Econometric Society, the 27th International Conference on Computing in Economics and Finance, the 1st Sailing Macro Workshop in Ventotene, and seminar participants at the ECB, University of Oxford, for useful comments and discussion. We also thank Julian Ashwin, Angus Groom, David Murakami and Sriram Tolety for research assistance. This research is funded by the European Research Council via Consolidator grant
number 647152. Views expressed are those of the authors and do
not necessarily reflect official positions of De Nederlandsche Bank.}}
\date{\today}
\maketitle

\begin{abstract}
Structural models with no solution are incoherent, and those with multiple solutions are incomplete. We show that models with occasionally binding constraints are not generically coherent. Coherency requires restrictions on the parameters or on the support of the distribution of the shocks. In presence of multiple shocks, the support restrictions cannot be independent from each other, so the assumption of orthogonality of structural shocks is incompatible with coherency. Models whose coherency is based on support restrictions are generically incomplete, admitting a very large number of minimum state variable solutions. 

\bigskip
Keywords: incompleteness, incoherency, rational expectations, zero lower
bound, DSGE

JEL codes: C62, E4, E52

\end{abstract}

\setcounter{page}{0}
\thispagestyle{empty}
\pagebreak \newpage
\restoregeometry

\section{Introduction}

It is well-known that in structural models with
occasionally binding constraints, equilibria may not exist (incoherency) or
there may be multiple equilibria (incompleteness). \cite{GourierouxLaffontMonfort1980} (henceforth GLM) studied this problem in the context of simultaneous equations models with endogenous regime switching, and derived conditions for existence and uniqueness of solutions, which are known as `coherency and completeness' (CC) conditions. \cite{AruobaMlikotaSchorfheideVillalvazo2021} and \cite{Mavroeidis2019} derived these conditions for structural vector autoregressions with occasionally binding constraints. However, to the best of our knowledge, there are no general results about the conditions for existence and uniqueness of equilibria in dynamic forward-looking models with rational expectations when some variables are subject to occasionally binding constraints. This is despite the fact that there is a large and expanding literature on solution algorithms for such models \citep[see][]{FernandezRubioSchorheide2016hdk} applied for example to models with a zero lower bound (ZLB) constraint on the interest rate \citep[see e.g.,][]{FernandezGordonGuerronRubio2015,GuerrieriIacoviello2015,  AruobaCubaBordaSchorfheide2018, gustetal2017AER, AruobaCubaBordaHigaFloresSchorfheideVillalvazo2021,eggertsson2021toolkit}.

In this paper, we attempt to fill that gap in the literature. We show that the question of existence of equilibria (coherency) is a nontrivial problem in models with a ZLB constraint on the nominal interest rate. Our main finding is that, under rational expectations, coherency requires restrictions on the support of the distribution of the exogenous shocks, and these restrictions are difficult to interpret. 

The intuition for this result can be gauged from a standard New Keynesian (NK) model. Coherency of the model requires that the aggregate demand (AD) and supply (AS) curves intersect for all possible values of the shocks. If the curves are straight lines, then the model is coherent if and only if the curves are not parallel. Therefore, linear models are generically coherent. However, models with a ZLB constraint are at most piecewise linear even if the Euler equations of the agents are linearized. In those models coherency is no longer generic, because the curves may not intersect. This depends on the slope of the curves and their intercept. The former depends on structural parameters, while the latter depends on the shocks. 
In fact, many applications in the literature feature parameters and distribution of shocks that place them in the incoherency region (e.g., a monetary policy rule that satisfies the Taylor principle, structural shocks with unbounded support). Given the parameters, coherency can only be restored by restricting the support of the distribution of the shocks, so the AD and AS curves never fail to intersect. In other words, we need to exclude the possibility of sufficiently adverse shocks causing rational expectations to diverge.

We derive our main result first in a simple model that consists of an active Taylor rule with a ZLB constraint and a nonlinear Fisher equation with a single discount factor (AD) shock that can take two values. This setup has been used, amongst others, by \cite{EggertssonWoodford2003} and \cite{AruobaCubaBordaSchorfheide2018}, and it suffices to study the problem analytically and convey the main intuition. The main takeaway from this example is that when the Taylor rule is active, there exist no bounded fundamental or sunspot equilibria unless negative AD shocks are sufficiently small. Because this restriction on the support of the distribution of the shock is asymmetric, this finding is not equivalent to restricting the variance of the shock. 

We then turn to (piecewise) linear models, and focus on the question of existence of minimum state variable (MSV) solutions, which are the solutions that most of the literature typically focuses on. A key insight of the paper is that when the support of the distribution of the exogenous variables is discrete, these models can be cast into the class of piecewise linear simultaneous equations models with endogenous regime switching analysed by GLM. We can therefore use the main existence theorem of GLM to study their coherency properties. 
Applying this methodology to a prototypical three-equation NK model, we find that the model is not generically coherent both when the Taylor rule is active and when monetary policy is optimal under discretion. The restrictions on the support that are needed to restore an equilibrium depend on the structural parameters as well as the past values of the state variables. When there are multiple shocks, the support restrictions are such that the shocks cannot have `rectangular' support, meaning that they cannot be independent from each other. For example, the range of values that the monetary policy shock can be allowed to take depends on the realizations of the other shocks. So, the assumption of orthogonality of structural shocks is incompatible with coherency.

When the CC condition is violated, imposing the necessary support restrictions to guarantee existence of a solution causes incompleteness, i.e., multiplicity of MSV solutions. We show that there may be up to $2^k$ MSV equilibria, where $k$ is the number of states that the exogenous variables can take.
The literature on the ZLB stressed from the outset the possibility of multiple steady states and/or multiple equilibria, and of sunspots solutions due either to indeterminacy or to belief-driven fluctuations between the two steady states \citep[e.g.,][]{AruobaCubaBordaSchorfheide2018,MertensRavn2014}. Here, we stress a novel source of multiplicity: the multiplicity of MSV solutions.

Finally, we identify possible ways out of the conundrum of incoherency and incompleteness of the NK model. These call for a different modelling of monetary policy. A first possibility would be to assume that monetary policy steps in with a different policy reaction, e.g., unconventional monetary policy (UMP), to catastrophic shocks that cause the economy to collapse. However, this policy response would need to be incorporated in the model, affecting the behavior of the economy also in normal times (i.e., when shocks are small). A more straightforward approach is to assume that UMP can relax the ZLB constraint sufficiently to restore the generic coherency of the model without support restrictions. 
This underscores another potentially important role of UMP not emphasized in the literature so far: UMP does not only help take the economy out of a liquidity trap, but it is also useful in ensuring the economy does not collapse in the sense that there is no bounded equilibrium.

A number of theoretical papers provide sufficient conditions for existence of MSV equilibria in NK models \cite[see][]{egg2011nberma, Bonevaetal2016JME, Armenter2018, Christianoetal2018wpuniqueness,Nakata2018,Nakataschmidt2019JME}. Our contribution relative to this literature is to provide both necessary and sufficient conditions that can be applied more generally. \cite{Holden2021} analyses existence under perfect foresight, so his methodology is complementary to ours. \cite{Mendes2011} provides existence conditions on the variance of exogenous shocks in models without endogenous or exogenous dynamics, while 
\cite{richthrock2015bej} report similar findings based on simulations. Our analysis provides a theoretical underpinning of these findings and highlights that existence generally requires restrictions on the \textit{support} of the distribution of the shocks rather than their variance.

The structure of the paper is as follows. 
Section \ref{s: coherency problem} presents the main findings of the paper regarding the problem of incoherency (i.e., non-existence of equilibria). Section \ref{s: incompleteness} looks at the problem of incompleteness (i.e., multiplicity of MSV solutions). 
Section \ref{s: conclusions} concludes. All proofs are given in the Appendix available online.

\section{The incoherency problem}\label{s: coherency problem}

This Section illustrates the main results of the paper that concern coherency, i.e., existence of a solution, in models with a ZLB constraint. Subsection \ref{s: ACS simple example} presents the simplest nonlinear example. Subsection \ref{s: piecewise linear} turns to piecewise (log)linear models, including the three-equation NK model, and  introduces a general method for analysing their coherency properties. Subsection \ref{s: support restrictions} highlights the nature of the support restrictions needed for coherency allowing for continuous stochastic shocks, using a convenient forward-looking Taylor rule example. Subsection \ref{s: cc conditions k} derives the conditions on the Taylor rule coefficient for coherency and completeness in the simple NK model. Subsection \ref{s: UMP} shows how unconventional monetary policy can restore coherency in the NK model with an active Taylor rule. Finally, Subsection \ref{s: endog} examines the implications of endogenous dynamics.

\subsection{The incoherency problem in a simple example\label{s: ACS simple example}}
We illustrate the main results of the paper using the simplest possible model that is analytically tractable and suffices to illustrate our point in a straightforward way. It should be clear that the problem that we point out is generic and not confined to this simple setup. 

The model is taken from Section 2 in \cite{AruobaCubaBordaSchorfheide2018} (henceforth ACS). It consists of two equations: a consumption Euler equation
\begin{equation}
1=E_{t}\left(  M_{t+1}\frac{R_{t}}{\pi_{t+1}}\right)  \label{eq: EE}%
\end{equation}
and a simple Taylor rule subject to a ZLB constraint
\begin{equation}
R_{t}=\max\left\{  1,r\pi_{\ast}\left(  \frac{\pi_{t}}{\pi_{\ast}}\right)
^{\psi}\right\}  ,\quad\psi> 1,\label{eq: Taylor}%
\end{equation}
where $R_{t}$ is the gross nominal interest rate, $\pi_{t}$ is the gross inflation rate, $\pi_{\ast}$ is the target of the central bank for the gross inflation rate, $M_{t+1}$ is the stochastic discount factor, and $r$ is the steady-state value of $1/M_{t+1}$, which is also the steady-state value of the gross real interest rate $R_t/E_t(\pi_{t+1})$. To complete the specification of the model, we need to specify the law of motion of $M_t$. 
\begin{assumption}\label{ass: M nonlinear absorbing}
$M_{t}$ is a 2-state Markov-Chain process with an absorbing state
$r^{-1}$, and a transitory state $r^{-1}e^{-r^{L}}>r^{-1}$ that persists with probability
$p>0$. 
\end{assumption}
This is a common assumption in the theoretical literature \citep[see, e.g.,][]{EggertssonWoodford2003,CEE2011JPE, egg2011nberma}. $r^{L}<0$
can be interpreted as negative real interest rate shock, which captures the
possibility of a temporary liquidity trap.

Substituting for $R_t$ in (\ref{eq: EE}) using (\ref{eq: Taylor}), we obtain
\begin{equation}
1=\max\left\{  1,r\pi_{\ast}\left(  \frac{\pi_{t}}{\pi_{\ast}}\right)
^{\psi}\right\}  E_{t}\left(  \frac{M_{t+1}}{\pi_{t+1}}\right), \quad \psi>1.  \label{eq: ACS nonlinear}%
\end{equation}
Let $\Omega_t$ denote the information set at time $t$, such that $E_t(\cdot):=E(\cdot|\Omega_t)$. In the words of \cite{Blan80}, a solution $\pi_t$ of the model is a sequence of functions of variables in $\Omega_t$ that satisfies (\ref{eq: ACS nonlinear}) for all possible realizations of these variables. Like \cite{Blan80}, we focus on bounded solutions. 

The following proposition provides, in the context of the present example, the main message of the paper, that coherency of the model (i.e., existence of a solution) requires restrictions on the support of the distribution of the state variable $M_t$. 
\begin{proposition}\label{prop: nonlinear ACS}
Under Assumption \ref{ass: M nonlinear absorbing} and $\psi>1$, a fundamental solution to (\ref{eq: ACS nonlinear}) exists if and only if the exogenous process $M_t$ satisfies the support restrictions
\begin{equation}\label{eq: support restr nonlin ACS}
r^{-1}\leq\pi_{\ast},\quad\text{and\quad}-r^{L}\leq\log\left(  \frac
{r\pi_{\ast}-1+p}{p}\right)-\frac{1}{\psi}\log\left(  r\pi_{\ast}\right).
\end{equation}
\end{proposition}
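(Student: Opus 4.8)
The plan is to exploit the absorbing-state structure of the Markov chain to solve the model backwards, first pinning down inflation in the absorbing state and then in the transitory (liquidity-trap) state. Because the absorbing state $M_t = r^{-1}$ persists forever once reached, the bounded solution there must be the deterministic steady state. Setting $M_{t+1}=r^{-1}$ and $\pi_{t+1}=\pi_t=\pi_H$ (the absorbing-state inflation) in equation (\ref{eq: ACS nonlinear}), I would solve for $\pi_H$. Since the Taylor rule is active ($\psi>1$), I expect the relevant bounded steady state to be the targeted one, $\pi_H=\pi_\ast$, provided the ZLB is slack there; checking that $r\pi_\ast\left((\pi_\ast/\pi_\ast)\right)^\psi=r\pi_\ast\geq 1$ is slack requires $r\pi_\ast\geq 1$, i.e.\ $r^{-1}\leq\pi_\ast$, which is precisely the first support restriction in (\ref{eq: support restr nonlin ACS}).

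Next I would turn to the transitory state $M_t=r^{-1}e^{-r^L}$, writing $\pi_L$ for inflation there. Conditional on being in the low state at $t$, with probability $p$ the chain stays (so $M_{t+1}=r^{-1}e^{-r^L}$, $\pi_{t+1}=\pi_L$) and with probability $1-p$ it jumps to the absorbing state ($M_{t+1}=r^{-1}$, $\pi_{t+1}=\pi_H=\pi_\ast$). I would substitute these into the expectation $E_t(M_{t+1}/\pi_{t+1})$ in (\ref{eq: ACS nonlinear}), giving a single scalar equation in the unknown $\pi_L$. The key modelling step is to determine whether the ZLB binds in the low state: a genuine liquidity trap with a sufficiently negative $r^L$ should push the notional interest rate below one, so I would posit $R_t=1$ and verify consistency at the end. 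Under the binding-ZLB branch, equation (\ref{eq: ACS nonlinear}) reduces to $1 = E_t(M_{t+1}/\pi_{t+1})$, an explicit linear relation determining $\pi_L$.

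With $\pi_L$ in hand, existence of a bounded fundamental solution reduces to two consistency checks. First, the posited binding of the ZLB in the low state must actually hold, i.e.\ the notional rate $r\pi_\ast(\pi_L/\pi_\ast)^\psi$ must not exceed one. Taking logs of this inequality and using the closed form for $\pi_L$ is, I anticipate, exactly what produces the second support restriction in (\ref{eq: support restr nonlin ACS}), namely the bound on $-r^L$ involving $\log\!\left(\frac{r\pi_\ast-1+p}{p}\right)-\frac{1}{\psi}\log(r\pi_\ast)$. Second, I would confirm the solution is bounded and that no alternative branch (ZLB slack in the low state) yields a bounded solution when the support restrictions fail, which is what upgrades the ``if'' to an ``only if.''

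The main obstacle I foresee is the \emph{only if} direction: showing that when either inequality in (\ref{eq: support restr nonlin ACS}) is violated, \emph{no} bounded fundamental solution exists, not merely that the particular branch I guessed breaks down. This requires ruling out the competing regime assignments — ZLB slack in the low state, or a non-targeted absorbing-state inflation — and arguing that each either forces the notional rate into a contradiction with the assumed regime or drives inflation to diverge (violating boundedness) as the relevant iteration is pushed forward. Handling the boundary/knife-edge cases where the notional rate equals exactly one, and verifying that the forward iteration implied by an active Taylor rule genuinely explodes off the steady state, will be the delicate part of closing the argument.
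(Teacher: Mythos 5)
Your ``if'' direction is essentially correct and matches the paper's construction: guessing a ZIR in the transitory state with continuation at the target steady state, solving $1=E_t(M_{t+1}/\pi_{t+1})$ for $\pi_L$, and imposing that the notional rate not exceed one does reproduce exactly the bound $-r^L\leq\log\bigl(\frac{r\pi_*-1+p}{p}\bigr)-\frac{1}{\psi}\log(r\pi_*)$, and the first restriction $r^{-1}\leq\pi_*$ comes from the absorbing state just as you say. The genuine gap is the ``only if'' direction, which is the substantive content of the proposition, and your framework as set up cannot deliver it for two reasons. First, the proposition concerns \emph{fundamental} solutions $\pi_t=f(M^t)$, which may depend on the entire history $M^t$ (in particular, on how long the economy has been in the trap), not just on the current state; ruling out the four regime assignments of a two-valued MSV candidate is therefore not enough. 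The paper handles this by deriving, along the path that remains in the transitory state, a scalar piecewise difference equation for $\pi^L_t$ (equation (\ref{eq: trans})) and showing that when its kink lies above the $\ang{45}$ line no path of that equation stays bounded --- every candidate, constant or not, breaks down in finite time. Your proposal gestures at ``pushing the iteration forward'' but never sets up this within-state dynamical system, which is the object the divergence argument must be run on.

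Second, your premise that the absorbing-state value ``must be the deterministic steady state'' $\pi_H=\pi_*$ is false, and this error propagates into the only-if direction. Because the ZIR steady state of the absorbing-state map $\hat\pi_{t+1}=\max\{-\mu,\psi\hat\pi_t\}$ is stable, \emph{every} value $\bar\pi\leq\pi_*$ at the moment of absorption yields a bounded continuation; there is a continuum of admissible continuation values, and the transitory-state expectation $E_t(M_{t+1}/\pi_{t+1})$ depends on which one agents coordinate on. To conclude incoherency you must show that the transitory dynamics admit no bounded path for \emph{every} $\bar\pi\leq\pi_*$, not merely for $\bar\pi=\pi_*$. The paper closes this by proving the implied bound on $-r^L$ is increasing in $\bar\pi$ and hence slackest at $\bar\pi=\pi_*$ (the second inequality in (\ref{eq: supprestr nonlinear ACS})); your choice $\pi_H=\pi_*$ happens to be the most favorable continuation, which is why your algebra reproduces the stated bound, but without the monotonicity-in-$\bar\pi$ step the failure of the restriction at $\bar\pi=\pi_*$ does not rule out solutions supported by other continuations. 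Supplying the within-state difference equation and this monotonicity argument is what would close your proof.
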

Here we sketch graphically the argument for the first of the support restrictions in (\ref{eq: support restr nonlin ACS}) in order to convey the main intuition for why a solution fails to exist when the shocks are sufficiently large. Note that the upper bound on $(-r^L)$ in (\ref{eq: support restr nonlin ACS}) is increasing in the Taylor rule coefficient $\psi$. So, for some values of the shock $(-r^L)$, the model may be coherent with a sufficiently active Taylor rule and incoherent with a less active one. Moreover, both support restrictions in (\ref{eq: support restr nonlin ACS}) become slacker as the inflation target $\pi_*$ increases. The proposition also shows that coherency does not depend on the variance of the exogenous process per se.\footnote{Raising $p$ reduces the variance, but it also reduces the upper bound for coherency on the shock $(-r^L)$ in (\ref{eq: support restr nonlin ACS}). Thus, a model with a higher variance of $M_t$ may be coherent, while a model with a lower variance of $M_t$ may be incoherent.}  

Suppose that $M_t$ is in the absorbing state $r^{-1}$. Then, there is no uncertainty in $\pi_{t+1}$ along a fundamental solution, so (\ref{eq: ACS nonlinear}) becomes a deterministic difference equation that can be represented in terms of $\hat{\pi}_t:=\log{(\pi_t/\pi_*)}$ (no approximation is involved) as
\[
\hat{\pi}_{t+1}=\max\left\{-\log{r\pi_*},\psi\hat{\pi}_{t}\right\}. 
\]
Figure \ref{fig: coherency} plots the right hand side of the above equation along with a $\ang{45}$ line. It is clear from the graph on the right that if $r\pi_*<1$, $\pi_{t+s}$ diverges for any initial value of $\pi_t$, i.e., there is no bounded solution. This is because the stable point to which $\pi_{t}$ would jump to in the absence of the constraint (i.e., the origin in the figure) violates the constraint, so it is infeasible. In contrast, when $r\pi_*\geq 1$, there exist many bounded solutions with $\pi_t \leq \pi_*$, which is a stable manifold in this case. In this simple example, $r$ corresponds to the steady-state value of the gross real interest rate, so it is fairly innocuous to assume $r\geq 1$ and the inflation target is typically nonnegative ($\pi_*\geq1$). But the same basic intuition applies in the transitory state: coherency of the model requires that the transitory shock is such that there exist stable paths which $\pi_t$ can jump to, or in other words, that the curve representing the transitory dynamics intersects with the $\ang{45}$ line, see Figure \ref{fig: ACS nonlinear} in  \ref{app: s: nonlinear ACS}. 
\begin{figure}[h!]
    \centering
    \includegraphics[scale=0.5]{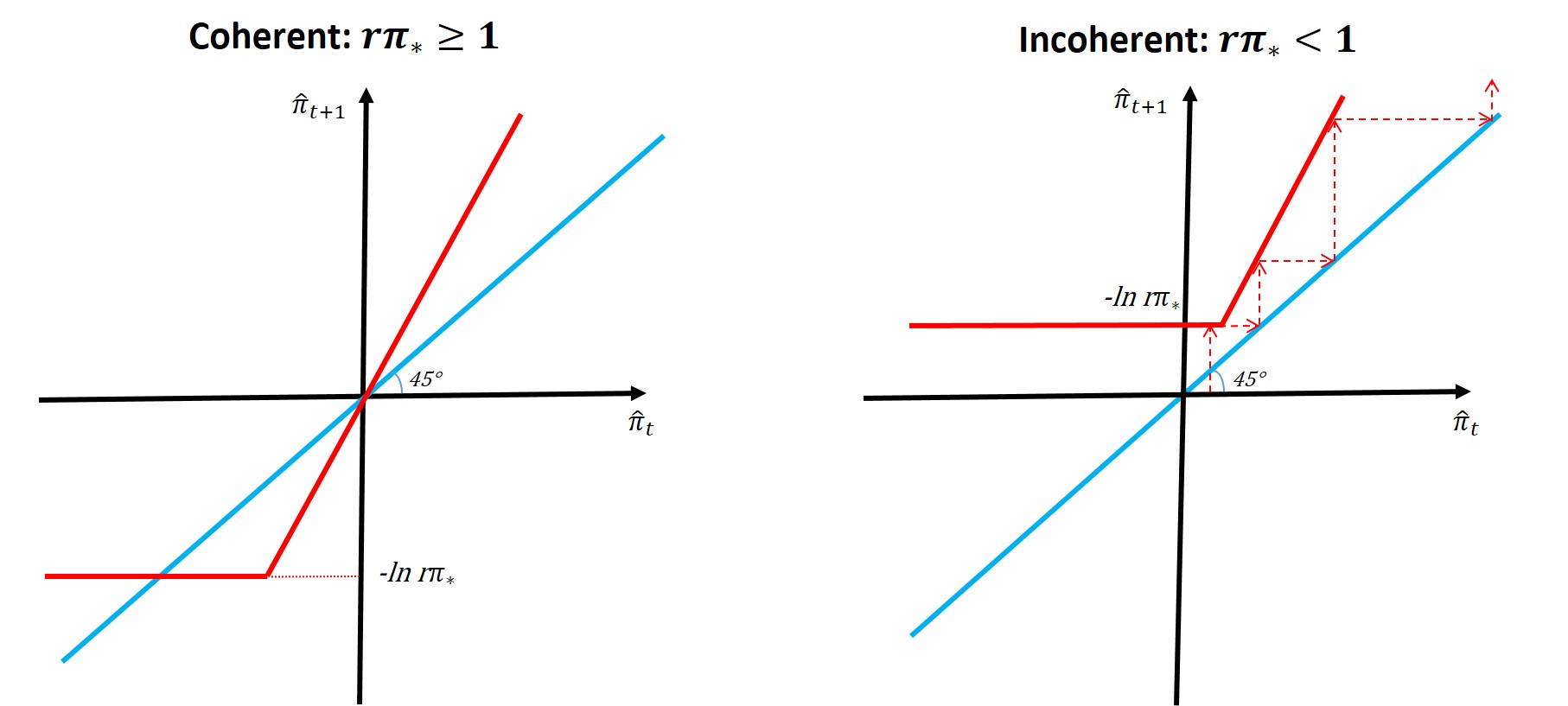}
    \caption{Illustration of coherency restriction $r\pi_*\geq 1$ under the absorbing state in Proposition \ref{prop: nonlinear ACS}. The red line plots $\hat{\pi}_{t+1}=\max\left\{-\log{r\pi_*},\psi\hat{\pi}_{t}\right\}$ with $\psi>1$ for two different values of $r\pi_*$. When $r\pi_*<1$, no bounded solution exists.}
    \label{fig: coherency}
\end{figure}

Proposition \ref{prop: nonlinear ACS} focused only on the case $\psi>1$, but it is easy to see from the proof, as well as from the argument in Figure \ref{fig: coherency}, that no support restrictions are needed when $\psi<1$: the model is always coherent when the Taylor rule is passive.

When the coherency condition in Proposition \ref{prop: nonlinear ACS} holds, the stationary solutions of the transition equations represent fundamental solutions at which $\pi_t$ depends only on $M_t$ and not on its lags. Such solutions are also known as minimum state variable (MSV) solutions in the literature, because they involve the smallest number of state variables (in this case, only one). So, for this model, the same coherency condition that is required for existence of bounded fundamental solutions is also necessary and sufficient for the existence of MSV solutions, which is a subset of all fundamental solutions. This is noteworthy because many of the solution methods in the literature focus on MSV solutions, e.g., \cite{FernandezGordonGuerronRubio2015}, \cite{richthrock2015bej}. 

We conclude our analysis of this simple example by considering sunspot solutions. For simplicity, we assume there are no fundamental shocks, as in \cite{MertensRavn2014}. 
\begin{proposition}\label{prop: nonlinear ACS sunspots}
Suppose $M_t=r^{-1}$ with probability 1 and $\psi>1$, and let $\varsigma_t\in \{0,1\}$ be a first-order Markovian sunspot process that belongs to agents' information set $\Omega_t$. Sunspot solutions to (\ref{eq: ACS nonlinear}) exist if and only if $r^{-1}\leq\pi_{\ast}$.
\end{proposition}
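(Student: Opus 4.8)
The plan is to eliminate $R_t$, reduce the model to a single scalar forward-looking difference equation driven only by the sunspot, and then settle existence by a direct characterization of its two-state sunspot solutions.

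First I would substitute $M_{t+1}=r^{-1}$ into (\ref{eq: ACS nonlinear}) and solve for the conditional expectation. Writing $u_t:=1/\pi_t>0$, this collapses the model to the single equation
\[
E_t u_{t+1}=g(u_t),\qquad g(u):=\min\bigl\{\,r,\ \pi_*^{\psi-1}u^{\psi}\,\bigr\},
\]
whose two branches are exactly the slack and binding regimes of the ZLB. The map $g$ is increasing, kinked, and capped at $r$; its fixed points are $u=1/\pi_*$ (the targeted steady state $\pi=\pi_*$) and $u=r$ (the deflationary steady state $\pi=1/r$), and a short computation shows both are admissible precisely when $r\pi_*\ge 1$. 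Boundedness of $\pi_t$ in $(0,\infty)$ is equivalent to $u_t$ staying in a compact subset of $(0,\infty)$.

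Next I would look for a first-order Markov sunspot equilibrium in the natural form $u_t=u(\varsigma_t)\in\{u_0,u_1\}$ with $u_0<u_1$ and transition matrix $(q_{ij})$. The equilibrium conditions reduce to $q_{i0}u_0+q_{i1}u_1=g(u_i)$ for $i=0,1$. Since the left-hand side is a convex combination of $u_0$ and $u_1$, valid transition probabilities exist if and only if $g(u_0)$ and $g(u_1)$ both lie in $[u_0,u_1]$, so a genuine two-valued sunspot exists iff one can choose $u_0<u_1$ with this property. For the ``only if'' direction, when $r^{-1}>\pi_*$ (i.e.\ $r\pi_*<1$) one checks on each branch that $g(u)<u$ for all $u>0$: on the slack branch the whole range lies below $1/\pi_*$, and on the capped branch above $r$. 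Then in the low state $g(u_0)<u_0\le q_{00}u_0+q_{01}u_1$, a contradiction; more generally, for any bounded solution $u_t$ lives on a compact set where $u-g(u)\ge\delta>0$, so $E_t u_{t+1}\le u_t-\delta$ and hence $Eu_t\le Eu_0-t\delta\to-\infty$, contradicting $u_t\ge0$. Thus no bounded solution of any kind exists, which yields necessity of $r^{-1}\le\pi_*$.

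For the ``if'' direction when $r^{-1}<\pi_*$ (i.e.\ $r\pi_*>1$), I would exhibit an explicit sunspot in the flat region. Here the cap binds on $[u^*,\infty)$ with $u^*:=r^{1/\psi}\pi_*^{(1-\psi)/\psi}<r$, so $g\equiv r$ there; choosing any $u^*<u_0<r<u_1$ gives $g(u_0)=g(u_1)=r\in(u_0,u_1)$, and the state-independent probabilities $q_{i1}=(r-u_0)/(u_1-u_0)\in(0,1)$ for $i=0,1$ solve the system. This is a bounded, genuinely two-valued solution in which the economy remains at the ZLB and inflation fluctuates between $1/u_0$ and $1/u_1$ around the deflationary steady state, exactly the mechanism of \cite{MertensRavn2014}; the sunspot must live in the binding region because the active rule makes the targeted steady state a source ($g'(1/\pi_*)=\psi>1$), precluding local sunspots there. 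I expect the main obstacle to be the knife-edge $r^{-1}=\pi_*$, at which $u^*=r=1/\pi_*$, the two steady states coalesce, the flat region shrinks to $[r,\infty)$, and the communicating two-state construction degenerates (one state becomes absorbing); the boundary then has to be recovered by a limiting argument and hinges on precisely which class of sunspot processes one admits.
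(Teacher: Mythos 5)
Your reduction to $E_t u_{t+1}=g(u_t)$ with $u_t:=1/\pi_t$ and $g(u)=\min\{r,\pi_*^{\psi-1}u^{\psi}\}$ is correct, and the necessity half of your argument is a genuinely different — and cleaner — route than the paper's. The paper's proof (Appendix \ref{app: s: nonlinear ACS}) works directly with the two state-contingent values $\pi_t^{\varsigma}$, solves the ZIR-state equation for one of them, substitutes to obtain a rational difference equation ``of the same shape'' as in the proof of Proposition \ref{prop: nonlinear ACS}, and then reuses that graphical stability argument case by case. Your drift argument — on any compact set where $g(u)<u$ one has $E u_{t+1}\le E u_t-\delta$, contradicting $u_t>0$ — rules out \emph{every} bounded solution (fundamental, stationary two-state, or history-dependent, under any information structure) in one stroke, which is something the paper only obtains through case analysis. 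That part of your proposal I would keep as is.

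The gap is in the sufficiency direction. The proposition, as the paper states and proves it, takes the sunspot process as \emph{given}: the paper's proof carries arbitrary transition probabilities $q_0,q_1$ throughout and concludes existence for whatever transition matrix the process has. Your construction instead \emph{chooses} the transition matrix: you need $q_{i1}=(r-u_0)/(u_1-u_0)$ for both $i=0,1$, i.e., an i.i.d.\ sunspot with a specific switching probability tied to your chosen $u_0,u_1$. For a given matrix with $q_{01}\neq q_{11}$ this flat-region construction cannot work: writing $q_{i0}=1-q_{i1}$, the system $q_{i0}u_0+q_{i1}u_1=r$, $i=0,1$, implies $(q_{11}-q_{01})(u_0-u_1)=0$, hence $u_0=u_1=r$, so the both-ZIR ``sunspot'' degenerates to the constant deflationary steady state. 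Genuine sunspot dependence under a non-i.i.d.\ given matrix requires mixed configurations (one state ZIR, one PIR), which is exactly what the paper's first case analyzes and which you never treat. So, as written, you have proved: (i) no bounded solution of any kind exists when $r^{-1}>\pi_*$ (fully general, stronger than needed), and (ii) sunspot equilibria exist for \emph{some} admissible first-order Markov sunspot process when $r^{-1}<\pi_*$ — but not existence for an \emph{arbitrary given} sunspot process, which is the reading the paper's own proof supports. The knife-edge $r^{-1}=\pi_*$, which you flag honestly, is a second, smaller loose end; the paper's argument there also only delivers the constant solution $\pi_t=\pi_*$, so the issue is largely definitional, but your proof should state what it delivers at the boundary rather than defer to an unspecified limiting argument.
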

Proposition \ref{prop: nonlinear ACS sunspots} shows that the support restriction for the existence of sunspot solutions is exactly the same as for the existence of fundamental solutions (see the condition corresponding to the absorbing state in Proposition \ref{prop: nonlinear ACS}). Thus allowing for sunspot equilibria does not alter the essence of the coherency problem, as we further show in the next subsection.

\subsection{Checking coherency of piecewise linear models} \label{s: piecewise linear}

Many of the solution methods in the literature apply to (log)linear models, whose only nonlinearity arises from the lower bound constraint on interest rates, e.g., \cite{EggertssonWoodford2003}, \cite{GuerrieriIacoviello2015}, \cite{KulishMorleyRobinson2017},  \cite{Holden2021}.\footnote{These models are often motivated as (log)linear approximations to some originally nonlinear model under the assumption that the equilibria of the linear model are close to the equilibria of the original nonlinear model \citep[see][]{Bonevaetal2016JME,eggsingh19jedc}. This assumption implicitly imposes conditions for the existence of these equilibria. The coherency of the approximating linear model is therefore a necessary precondition that needs to be checked.} Let $Y_{t}$ be a $n\times1$ vector of endogenous variables, $X_{t}$ be a
$n_x\times1$ vector of exogenous state variables, which could include a sunspot shock whose coefficients in the model are zero, $Y_{t+1|t}:=E\left(
Y_{t+1}|\Omega_{t}\right)  $, $X_{t+1|t}:=E\left(  X_{t+1}|\Omega_{t}\right)
,$ and $s_{t}\in\left\{  0,1\right\}  $ an indicator variable that takes the
value 1 when some inequality constraint is slack and zero otherwise. We
consider models that can be written in the canonical form%
\begin{equation}%
\begin{tabular}
[c]{l}%
$A_{s_{t}}Y_{t}+B_{s_{t}}Y_{t+1|t}+C_{s_{t}}X_{t}+D_{s_{t}}X_{t+1|t}=0$\\
$s_{t}=1_{\left\{  a^{\prime}Y_{t}+b^{\prime}Y_{t+1|t}+c^{\prime}%
X_{t}+d^{\prime}X_{t+1|t}>0\right\}  },$%
\end{tabular}
\label{eq: canon}
\end{equation}
where $A_{s},B_{s},C_{s},D_{s}$ are coefficient matrices, $a,b,c,d$ are
coefficient vectors and $1_A$ is the indicator function that takes the value 1 if $A$ holds and zero otherwise.\footnote{Although we focus on a single inequality constraint, the methodology we discuss here readily applies to more than one constraints. An example of a model with an additional ZLB on inflation expectations \citep{GorodnichenkoSergeyev21} is discussed in Appendix 
\ref{app: s: ZLB expectations}.} 

\paragraph{Example ACS}\label{ex: ACS}Taking a log-linear approximation of
(\ref{eq: EE}) around $M_{t}=r^{-1}$ and $\pi_{t}=\pi_{\ast}$ we obtain
$\hat{\pi}_{t+1|t}=\hat{R}_{t}+\hat{M}_{t+1|t},$ where $\hat{\pi}_{t}%
:=\log\left(  \pi_{t}/\pi_{\ast}\right)  ,$ $\hat{M}_{t}:=\log\left(
rM_{t}\right)  ,$ $\hat{R}_{t}:=\log R_{t}-\mu,$ $\mu:=\log\left(  r\pi_{\ast
}\right)  .$ Taking logs of (\ref{eq: Taylor}) (no approximation) yields
$\hat{R}_{t}=\max\left\{  -\mu,\psi\hat{\pi}_{t}\right\}  $ and combining the
two equations yields $\hat{\pi}_{t+1|t}-\hat{M}_{t+1|t}-\max\left\{  -\mu,\psi\hat{\pi}_{t}\right\}
=0.$
The regime indicator is $s_{t}=1_{\left\{  \psi\hat{\pi}_{t}+\mu>0\right\}
}.$ This model can be put in the canonical form
(\ref{eq: canon}) with $Y_{t}=\hat{\pi}_{t},$ $X_{t}=\left(  \hat{M}%
_{t},1\right)  ^{\prime},$ $A_{0}=0,$ $A_{1}=-\psi,$ $B_{0}=B_{1}=1,$
$C_{0}=\left(  0,\mu\right)  ,$ $C_{1}=\left(  0,0\right)  ,$ $D_{0}%
=D_{1}=\left(  -1,0\right)  $, $a=\psi,$ $b=0,$ $c=\left(  0,\mu\right)
^{\prime}$ and $d=\left(  0,0\right)  ^{\prime}$. 
\openbox

\paragraph{Example NK-TR}\label{ex: NK}The basic three-equation New Keynesian model, consisting of a Phillips curve, an Euler equation and a Taylor rule, is
\begin{subequations}\label{eq: NK}%
\begin{align}\label{eq: NK NKPC}
\hat{\pi}_{t} & =\beta\hat{\pi}_{t+1|t}+\lambda\hat{x}_{t}+u_{t}    \\
\hat{x}_{t} & =\hat{x}_{t+1|t}-\sigma\left(  \hat{R}_{t}-\hat{\pi}%
_{t+1|t}\right)  +\epsilon_{t} \label{eq: NK EE} \\
\hat{R}_{t} & =\max\left\{  -\mu,\psi\hat{\pi}_{t}+\psi_{x}\hat{x}_{t}%
+\nu_t\right\} \label{eq: NK TR}
\end{align}
\end{subequations}
where $\hat{\pi}_{t},\hat{R}_{t}$ were defined in the previous example and
$\hat{x}_{t}$ is the output gap. It can be put in the canonical form
(\ref{eq: canon}) with $Y_{t}=\left(  \hat{\pi}_{t},\hat{x}_{t}\right)
^{\prime},$ $X_{t}=\left(  u_{t},\epsilon_{t},\nu_t,1\right)
^{\prime}$, and coefficients given in \ref{app: s: coeffcanonical}.
\openbox

\paragraph{Example NK-OP}\label{ex: NK-OP}The NK model with optimal discretionary policy replaces (\ref{eq: NK TR}) with
\begin{equation}
    \gamma \hat{x}_t + \lambda \hat{\pi}_t = 0, \quad \text{if} \quad \hat{R}_t > -\mu, \quad \text{or} \quad \gamma \hat{x}_t + \lambda \hat{\pi}_t < 0, \quad \text{if} \quad \hat{R}_t = -\mu, \label{eq: NK OP}
\end{equation}
where $\gamma\geq0$ is the weight the monetary authority attaches to output stabilization relative to inflation stabilization, see \cite{Armenter2018}, \cite{Nakata2018} or \cite{Nakataschmidt2019JME} for details. Substituting for $\hat{R}_t=-\mu$ in (\ref{eq: NK EE}) when the ZLB binds, the model can be written in terms of two equations: (\ref{eq: NK NKPC}) and $\hat{x}_t = (1-s_t)\left[\hat{x}_{t+1|t}-\sigma\left(-\mu-\hat{\pi}%
_{t+1|t}\right)  +\epsilon_{t} \right]-s_t \frac{\lambda}{\gamma}\hat{\pi}_t$, where $s_{t}=1_{\left\{\hat{\pi}_{t+1|t}+\frac{\hat{x}_{t+1|t} -\hat{x}_{t}  +\epsilon_{t}}{\sigma}+\mu>0\right\}  }$.
This can be put in the canonical representation (\ref{eq: canon}) with $Y_{t}=\left(  \hat{\pi}_{t},\hat{x}_{t}\right)
^{\prime}$, $X_{t}=\left(  u_{t},\epsilon_{t},1\right)$, and coefficients given in \ref{app: s: coeffcanonical}.
\openbox\medskip

A special case of (\ref{eq: canon}) without expectations of the endogenous
variables, i.e., $B_{0}=B_{1}=0$ and $b=0,$ is a piecewise linear simultaneous
equations model with endogenous regime switching, whose coherency was analysed
by GLM. We will now show how (\ref{eq: canon}) with expectations can be cast into the model analysed by GLM when the shocks are Markovian with discrete support. This is a key insight of the paper.

Without much loss of generality, we assume that the state variables
$X_{t}$ are first-order Markovian. We also focus on the existence of MSV solutions that can be represented as
$Y_{t}=f\left(  X_{t}\right)  $ for some function $f\left(  \cdot\right)  .$
Therefore, from now on, coherency of the model (\ref{eq: canon}) is understood to mean existence of some function $f\left(  \cdot\right)  $ such that
$Y_{t}=f\left(  X_{t}\right)  $ satisfies (\ref{eq: canon}). 

Assume that $X_{t}$ can be represented as a $k$-state stationary first-order
Markov chain process with transition matrix $K$, and collect all the possible
states $i=1,...,k$ of $X_{t}$ in a $n_x\times k$ matrix $\mathbf{X}$. Let
$e_{i}$ denote the $i$th column of $I_k$, the identity matrix of dimension $k$, so
that $\mathbf{X}e_{i}$ -- the $i$th column of $\mathbf{X}$ -- is the $i$th
state of $X_{t}$. Note that the elements of the transition kernel are $K_{ij} = \Pr\left(X_{t+1} = \mathbf{X}e_j|X_{t} = \mathbf{X}e_i\right)$ and hence, $E\left(  X_{t+1}|X_{t}=\mathbf{X}e_{i}\right)
=\mathbf{X}K^{\prime}e_{i}.$ Let $\mathbf{Y}$ denote the $n\times k$ matrix
whose $i$th column, $\mathbf{Y}e_{i}$, gives the value of $Y_{t}$ that
corresponds to $X_{t}=\mathbf{X}e_{i}$ along a MSV solution. Therefore,
along a MSV solution we have $E\left(  Y_{t+1}|Y_{t}=\mathbf{Y}e_{i}\right)
=E\left(  Y_{t+1}|X_{t}=\mathbf{X}e_{i}\right)  =\mathbf{Y}K^{\prime}e_{i}.$
Substituting into (\ref{eq: canon}), $\mathbf{Y}$ must satisfy the following
system of equations
\begin{align}
0  & =\left(  A_{s_{i}}\mathbf{Y}+B_{s_{i}}\mathbf{Y}K^{\prime}+  C_{s_{i}}\mathbf{X}+D_{s_{i}}\mathbf{X}K^{\prime}\right)
e_{i}\label{eq: canon i}\\
s_{i}  & =1_{\left\{  \left(  a^{\prime}\mathbf{Y}+b^{\prime}\mathbf{Y}%
K^{\prime}+  c^{\prime}\mathbf{X}+d^{\prime}%
\mathbf{X}K^{\prime}\right)  e_{i}>0\right\}  },\quad i=1,...,k. \nonumber
\end{align}
This system of equations can be expressed in the form $F\left(  \mathbf{Y}%
\right)  =\kappa\left(  \mathbf{X}\right)  $, where $\kappa\left(
\cdot\right)  $ is some function of $\mathbf{X,}$ and $F\left(  \cdot\right)
$ is a piecewise linear continuous function of $\mathbf{Y}$. Specifically, let
$J$ be a subset of $\left\{  1,...,k\right\}  .$ Then, we can write $F\left(
\cdot\right)  $ as
\begin{equation}
F\left(  \mathbf{Y}\right)  =\sum_{J}\mathcal{A}_{J}1_{\mathcal{C}_{J}%
}vec\left(  \mathbf{Y}\right)  ,\label{eq: F}%
\end{equation}
where $\mathcal{C}_{J}=\left\{  \mathbf{Y}:\mathbf{Y\in\Re}^{n\times k}%
,s_{i}=1_{\left\{  i\in J\right\}  }\right\}  $ is defined by a particular
configuration of regimes over the $k$ states given by $J$.
If the piecewise linear function $F\left(  \cdot\right)$ in (\ref{eq: F}) is invertible, then the system is coherent. This can be checked using Theorem 1 from GLM reproduced below. 

\begin{theorem}[GLM]\label{th: GLM}
Suppose that the mapping $F\left(  \cdot\right)  $ defined in
(\ref{eq: F}) is continuous. A necessary and sufficient condition for
$F\left(  \cdot\right)  $ to be invertible is that all the determinants
$\det\mathcal{A}_{J},$ $J\subseteq\left\{  1,...,k\right\}  $ have the same
sign.\footnote{We only need to
check the determinants over all $2^{k}$ subsets of $\left\{  1,...,k\right\}
$ rather than $2^{nk}$ subsets of $\left\{  1,...,nk\right\}  ,$ because the
$A_{J}$ will be the same for all $n$-dimensional blocks of $vec\left(
\mathbf{Y}\right)  $ that belong to the same state $i=1,...,k.$ }
\end{theorem}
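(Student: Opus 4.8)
The plan is to regard $F$ as a continuous piecewise-linear self-map of $\mathbb{R}^{m}$ with $m=nk$, where on the polyhedral region $\mathcal{C}_J$ it agrees with the linear map $\mathcal{A}_J$, and to prove that \emph{invertibility of $F$ as a map $\mathbb{R}^m\to\mathbb{R}^m$ is equivalent to all the $\det\mathcal{A}_J$, $J\subseteq\{1,\dots,k\}$, sharing a common nonzero sign}. Since the $2^k$ regions $\mathcal{C}_J$ cover $\mathbb{R}^m$ and $F$ is assumed continuous in (\ref{eq: F}), invertibility here means $F$ is a homeomorphism of $\mathbb{R}^m$ onto itself, so I would establish the two implications separately, controlling orientation for necessity and global topology for sufficiency.

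For necessity, suppose $F$ is a homeomorphism. First, each full-dimensional region must carry an invertible piece: if $\det\mathcal{A}_J=0$ then $\mathcal{A}_J$ has a nontrivial kernel, and because $\mathcal{C}_J$ contains an open set, $F$ restricted there is a non-injective affine map, contradicting injectivity of $F$; hence every $\det\mathcal{A}_J\neq0$. To force the common sign I would use that a homeomorphism of $\mathbb{R}^m$ is globally orientation-preserving or globally orientation-reversing: on the interior of each region $F$ is a diffeomorphism with constant Jacobian $\mathcal{A}_J$, so its local topological degree equals $\mathrm{sgn}\det\mathcal{A}_J$, and the invariance of that local degree across a homeomorphism of a connected oriented manifold forces all these signs to coincide.

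For sufficiency, I would assume all $\det\mathcal{A}_J$ share a sign, without loss of generality positive (otherwise compose $F$ with a reflection, which flips every determinant sign simultaneously). The argument then proceeds in three steps. (i) $F$ is a local homeomorphism everywhere: on a region interior this is immediate from $\det\mathcal{A}_J\neq0$, and at a boundary (``crease'') point one glues the adjacent linear pieces, which agree on the shared faces by continuity, with coherent orientation being exactly the condition that prevents the map from folding back, so $F$ stays locally injective and open. (ii) $F$ is coercive, hence proper: writing $F=\mathcal{A}_J\,\mathrm{vec}(\mathbf{Y})+\beta_J$ on $\mathcal{C}_J$ with the $\beta_J$ bounded, one obtains $\|F(\mathbf{Y})\|\geq\big(\min_J\sigma_{\min}(\mathcal{A}_J)\big)\|\mathbf{Y}\|-\max_J\|\beta_J\|\to\infty$ as $\|\mathbf{Y}\|\to\infty$, the minimal singular value being positive because every $\mathcal{A}_J$ is invertible. (iii) A proper local homeomorphism from the connected space $\mathbb{R}^m$ onto $\mathbb{R}^m$ is a covering map, and a covering of the simply connected $\mathbb{R}^m$ with connected total space has a single sheet, so $F$ is a global homeomorphism, i.e.\ invertible.

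The hard part will be step (i), the local homeomorphism property at the creases, since this is precisely where the coherent-orientation hypothesis does its work and where the subset bookkeeping enters. I would reduce it to the local structure of the polyhedral fan of regions meeting at a boundary point: across a single separating hyperplane the two adjacent pieces agree on that hyperplane, and the two-piece glued map is a local homeomorphism if and only if the two determinants have equal sign (opposite signs produce a fold); the general configuration of several regions abutting a lower-dimensional face is then handled by induction on the codimension of the face, repeatedly invoking the same two-piece criterion. The footnote's reduction from $2^{nk}$ to $2^{k}$ determinants is what keeps this bookkeeping finite and tractable, because the regime configuration, and hence $\mathcal{A}_J$, is constant across all $n$-dimensional coordinate blocks of $\mathrm{vec}(\mathbf{Y})$ belonging to a given state $i$.
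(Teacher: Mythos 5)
A preliminary remark: the paper does not prove this statement at all --- it is quoted verbatim as Theorem 1 of \cite{GourierouxLaffontMonfort1980} --- so there is no internal proof to compare yours against; your attempt has to stand on its own. It does not, and the failure is exactly at the step you flag as ``the hard part.''

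Step (i) of your sufficiency argument is not just hard, it is false as a claim about general continuous piecewise linear maps, and the proposed induction on the codimension of faces via the two-piece criterion cannot repair it. Concrete counterexample: the piecewise linear angle-doubling map on $\Re^{2}$. Partition the plane into eight sectors of equal angle around the origin, and on each sector take the linear map sending its two boundary rays to the rays at twice the angle. Every piece has positive determinant (coherent orientation holds), adjacent pieces agree on their common ray (so the map is continuous), every pair of adjacent pieces glues to an \emph{injective} map on their union (your two-piece criterion is satisfied across every crease), and the map is proper, since each piece is a linear isomorphism and the map is positively homogeneous. Yet the map wraps the plane around the origin twice: it is $2$-to-$1$ away from $0$ and is not a local homeomorphism at $0$. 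So all the ingredients you assemble --- coherent orientation, pairwise gluing, properness, the covering-space argument --- are in place, and the conclusion still fails; your step (iii) breaks because $F$ restricted to $\Re^{2}\setminus\{0\}\to\Re^{2}\setminus\{0\}$ is a $2$-sheeted covering of a space that is \emph{not} simply connected.

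What makes the theorem true in the GLM setting is structure your proof never uses. The regions $\mathcal{C}_{J}$ are not an arbitrary polyhedral fan: they are the cells of an arrangement of exactly $k$ hyperplanes in $\Re^{nk}$ (one indicator per state), and each $\mathcal{A}_{J}$ is assembled block-row-wise from only two alternatives per state according to whether $i\in J$, with the sign condition imposed on \emph{all} $2^{k}$ such assemblies --- including configurations $J$ whose cell may be empty. When the $k$ normals are linearly independent, the fan is linearly equivalent to the orthant decomposition of $\Re^{k}\times\Re^{nk-k}$, which rules out the cyclic wrap-around geometry of the counterexample; the statement then reduces to the classical Samelson--Thrall--Wesler / P-matrix-type theorem for piecewise linear systems of this form, whose standard proofs go by induction on $k$ (the number of switching indicators), not by induction on face codimension. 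A secondary gap on the necessity side: your degree argument only pins down $\det\mathcal{A}_{J}$ for those $J$ whose cell has nonempty interior, whereas the theorem's condition ranges over every $J\subseteq\{1,\dots,k\}$; bridging that also requires the block structure of the $\mathcal{A}_{J}$, not topology alone.
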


The above determinant condition is straightforward to check. If the condition
is satisfied, then the model has a unique MSV solution. If the condition
fails, the model is not generically coherent, meaning that there will be
values of $\mathbf{X}$ for which no MSV solution exists. Since $\mathbf{X}$
represents the support of the distribution of $X_{t}$, violation of the
coherency condition in the \nameref{th: GLM} Theorem means that a MSV solution can only be
found if we impose restrictions on the support of the distribution of the
exogenous variables $X_{t}$. 

\paragraph{\nameref{ex: ACS} continued}
Suppose $\hat{M}_{t}$ follows a two-state Markov Chain with transition kernel $K$, there are four possible subsets of $\left\{  1,2\right\}$.
Let PIR refer to a positive interest rate state when the ZLB constraint is slack
and ZIR to a zero interest rate state when the ZLB constraint binds. Given $e_1:=(1,0)'$, $e_2:=(0,1)'$, the coefficients of (\ref{eq: F}) are
\begin{equation}%
\begin{tabular}
[c]{lll}%
$\mathcal{A}_{J_{1}}=A_{1}I_{2}+B_{1}K,$ & $J_{1}=\left\{  1,2\right\}  $ &
$\text{(PIR,PIR)}$\\
$\mathcal{A}_{J_{2}}=%
e_1e_1' (A_0 I_2+B_0K) + e_2 e_2'(A_1 I_2 + B_1 K)
%
$, 
& $J_{2}=\left\{  2\right\}  $ & $\text{(ZIR,PIR)}$\\
$\mathcal{A}_{J_{3}}=%
e_2e_2' (A_0 I_2+B_0K) + e_1 e_1'(A_1 I_2 + B_1 K)
,$ & $J_{2}=\left\{  1\right\}  $ & $\text{(PIR,ZIR)}$\\
$\mathcal{A}_{J_{4}}=A_{0}I_{2}+B_{0}K,$ & $J_{4}=\varnothing$ & (ZIR,ZIR)
\end{tabular}
\label{eq: A matrices ACS}%
\end{equation}
where, as we showed previously, $A_{0}=0,$ $A_{1}=-\psi,$ and $B_{0}=B_{1}=1.$
From (\ref{eq: A matrices ACS}), we obtain $\det\mathcal{A}%
_{J_{1}}=\left(  \psi-1\right)  (  1-p\allowbreak-q+\psi)  ,$ $\det
\mathcal{A}_{J_{2}}=p\left(  1-\psi\right)  +q-1,$ $\det\mathcal{A}_{J_{3}%
}=p-1+q\left(  1-\psi\right)  ,$ $\det\mathcal{A}_{J_{4}}\allowbreak=p+q-1.$ We focus on
the case $\psi>1.$ Since $0\leq p,q\leq1,$ it follows immediately that
$\det\mathcal{A}_{J_{1}}$ is positive while $\det\mathcal{A}_{J_{2}}$ and
$\det\mathcal{A}_{J_{3}}$ are both negative, so the coherency condition in the
\nameref{th: GLM} Theorem is violated. 
\openbox\bigskip

The next proposition states that the conclusion that an active Taylor rule leads to a model that is not generically coherent generalizes to the basic three-equation NK model. The following one states that the same conclusion applies to a NK model with optimal policy.

\begin{proposition} \label{prop: NK-TR CC}
The NK-TR model given by equations (\ref{eq: NK NKPC}) with $u_t=0$, (\ref{eq: NK EE}) with $\epsilon_{t}$ following a two-state Markov chain process, and the active Taylor rule (\ref{eq: NK TR}) with $\psi>1$ and $\psi_{x}=\nu
_t=0$, is not generically coherent.\footnote{The assumption $\psi _{x}=0$
in the Taylor rule is imposed to simplify the exposition. The conclusion that the model is not generically coherent when it satisfies the Taylor
principle can be extended to the case $\psi _{x}\neq 0
$, when the Taylor principle becomes $\psi +\frac{\beta -1}{\lambda }\psi
_{x}>1$, see the proof of the Proposition for further discussion.} 
\end{proposition}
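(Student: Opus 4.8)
The plan is to apply the \nameref{th: GLM} Theorem exactly as in the continuation of Example \nameref{ex: ACS}, now with the two-dimensional endogenous vector $Y_t=(\hat\pi_t,\hat x_t)'$. First I would record the canonical-form matrices implied by (\ref{eq: NK NKPC})--(\ref{eq: NK TR}) under the maintained restrictions $u_t=\psi_x=\nu_t=0$. Since the Phillips curve does not switch and only the interest-rate term in the Euler equation does, one gets $B_0=B_1=B$ with
\begin{equation*}
A_1=\begin{pmatrix}1&-\lambda\\ \sigma\psi&1\end{pmatrix},\qquad A_0=\begin{pmatrix}1&-\lambda\\ 0&1\end{pmatrix},\qquad B=\begin{pmatrix}-\beta&0\\ -\sigma&-1\end{pmatrix}.
\end{equation*}
With $\epsilon_t$ a two-state chain ($k=2$) and transition kernel $K$, the four matrices in (\ref{eq: F}) take the block form $\mathcal A_J=\mathrm{diag}(A_{s_1},A_{s_2})+K\otimes B$ with $s_i=1_{\{i\in J\}}$, so each is $4\times4$ and the task reduces to signing four determinants.

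For the two configurations with a common regime -- all-slack $J_1=\{1,2\}$ and all-binding $J_4=\varnothing$ -- the matrix simplifies to $I_2\otimes A_{s}+K\otimes B$. Here I would diagonalize the stochastic matrix $K$, whose eigenvalues are $1$ and $\mu:=\mathrm{tr}(K)-1\in[-1,1]$, to factor $\det\mathcal A_{J_1}=\det(A_1+B)\,\det(A_1+\mu B)$ and $\det\mathcal A_{J_4}=\det(A_0+B)\,\det(A_0+\mu B)$. A direct $2\times2$ computation gives $\det(A_1+B)=\lambda\sigma(\psi-1)$ and $\det(A_0+B)=-\lambda\sigma$, which already carry opposite signs once $\psi>1$. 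Moreover $\det(A_1+\mu B)=(1-\mu\beta)(1-\mu)+\lambda\sigma(\psi-\mu)$ is strictly positive for every $\mu\in[-1,1]$ (the first product is nonnegative and the last term is positive since $\mu\le1<\psi$), so $\det\mathcal A_{J_1}>0$ for \emph{all} admissible $K$.

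It then remains to exhibit, for every admissible $K$, at least one of the other three determinants of the opposite sign; this is the heart of the argument. When persistence is low, $\det(A_0+\mu B)=(1-\mu\beta)(1-\mu)-\lambda\sigma\mu$ is positive (e.g. at the i.i.d. point $\mu=0$ it equals $1$), so $\det\mathcal A_{J_4}=-\lambda\sigma\,\det(A_0+\mu B)<0$ and coherency already fails against $\det\mathcal A_{J_1}>0$. The delicate case is high persistence, where $\det(A_0+\mu B)$ turns negative and $\det\mathcal A_{J_4}$ becomes positive; there I would pass to a mixed configuration, say $J_2=\{2\}$. The obstruction is that $\mathcal A_{J_2}=\mathrm{diag}(A_0,A_1)+K\otimes B$ has \emph{unequal} diagonal blocks, so the clean $K$-eigenvalue factorization is no longer available. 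My plan is to write $\mathcal A_{J_2}=\mathcal A_{J_4}+\sigma\psi\,(e_2\otimes f_2)(e_2\otimes f_1)'$, a rank-one update (with $e_2$ the second state selector and $f_1,f_2$ the standard basis of $\mathbb R^2$), and invoke the matrix-determinant lemma,
\begin{equation*}
\det\mathcal A_{J_2}=\det\mathcal A_{J_4}\,\bigl(1+\sigma\psi\,(e_2\otimes f_1)'\mathcal A_{J_4}^{-1}(e_2\otimes f_2)\bigr),
\end{equation*}
reducing the sign question to a single entry of $\mathcal A_{J_4}^{-1}$.

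Signing this correction uniformly over the range of $K$ for which $\det\mathcal A_{J_4}>0$ is the main obstacle, and I expect it to deliver $\det\mathcal A_{J_2}<0$ precisely on that range, so that in every case the four determinants fail to share a common sign. By the \nameref{th: GLM} Theorem the map $F$ is then not invertible, which is exactly the statement that the model is not generically coherent. Finally, reinstating $\psi_x\neq0$ changes only $\det(A_1+B)$ to $\sigma\bigl[\lambda(\psi-1)+(1-\beta)\psi_x\bigr]$ -- the generalized Taylor principle referenced in the footnote -- while the remaining structure, and hence the conclusion, is unaffected.
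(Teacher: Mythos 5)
Your computations, as far as they go, are correct, and your formulation differs from the paper's in an interesting way: the paper never works with the $4\times 4$ system $\mathrm{diag}(A_{s_1},A_{s_2})+K\otimes B$, but instead eliminates $\hat{x}$ through the Phillips curve ($\hat{x}=\lambda^{-1}(I-\beta K)\hat{\pi}$) to get a $2\times 2$ piecewise linear system in $\hat{\pi}$ alone, and then computes all four determinants explicitly (Proposition labelled ``NK-TR canonical'' in the appendix). Your eigenvalue factorization reproduces the paper's two ``pure-regime'' determinants exactly: $\det\mathcal{A}_{J_1}=\lambda\sigma(\psi-1)\det(A_1+(p+q-1)B)=\sigma^2\lambda^2(\psi-1)(\psi-\psi_{p,q,\beta,\sigma\lambda})>0$ and $\det\mathcal{A}_{J_4}=\sigma^2\lambda^2\psi_{p,q,\beta,\sigma\lambda}$. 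However, there is a genuine gap exactly where you flag it. The case $\psi_{p,q,\beta,\sigma\lambda}\le 0$ is the easy one (then $\det\mathcal{A}_{J_4}\le 0$ and incoherency is immediate); the entire content of the proposition is the high-persistence case $\psi_{p,q,\beta,\sigma\lambda}>0$, where $\det\mathcal{A}_{J_1}$ and $\det\mathcal{A}_{J_4}$ agree in sign and one must prove that a mixed configuration breaks the common-sign condition. You set up the rank-one update and the matrix determinant lemma but then only state that you ``expect'' the correction to be negative. That is a plan, not a proof. The paper closes this step by direct computation: with $\psi_x=0$, $\det\mathcal{A}_{J_2}=-\sigma^2\lambda^2\left(\psi_{p,q,\beta,\sigma\lambda}(\psi-1)+\psi(1-q)\left(1+\frac{1-\beta(p+q-1)}{\sigma\lambda}\right)\right)<0$ whenever $\psi>1$ and $\psi_{p,q,\beta,\sigma\lambda}>0$. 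Your route can in fact be completed: diagonalizing $K$ inside $\mathcal{A}_{J_4}^{-1}$ gives the correction factor $1-\frac{\psi}{2-p-q}\left((1-p)+\frac{1-q}{\psi_{p,q,\beta,\sigma\lambda}}\right)$, and since $0<\psi_{p,q,\beta,\sigma\lambda}\le p+q-1\le 1$ implies $(1-p)+\frac{1-q}{\psi_{p,q,\beta,\sigma\lambda}}\ge 2-p-q$, this factor is at most $1-\psi<0$, hence $\det\mathcal{A}_{J_2}<0$; the degenerate cases $p=q=1$ (where $K=I$ and $\det\mathcal{A}_{J_2}=\det(A_0+B)\det(A_1+B)<0$ directly) and $\psi_{p,q,\beta,\sigma\lambda}=0$ (a singular piece, which already violates invertibility) need separate one-line treatment. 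Until that computation is actually carried out, the proof is incomplete.

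A secondary issue concerns your closing remark on $\psi_x\neq 0$. Your determinant $\det(A_1+B)=\sigma[\lambda(\psi-1)+(1-\beta)\psi_x]$ is computed correctly, but the sign condition it delivers is $\psi+\frac{1-\beta}{\lambda}\psi_x>1$, which is \emph{not} the condition stated in the paper's footnote, namely $\psi+\frac{\beta-1}{\lambda}\psi_x>1$ (the two differ in the sign of the $\psi_x$ term since $\beta<1$), so the identification you assert does not match the paper as written. Moreover, the paper itself only proves the $\psi_x\neq 0$ extension analytically in the absorbing case $q=1$ and verifies it numerically otherwise, so the claim that ``the remaining structure, and hence the conclusion, is unaffected'' requires more than replacing one determinant; the mixed-regime determinants also change with $\psi_x$.
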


\begin{proposition} \label{prop: NK-OP CC}
The NK-OP model given by equations (\ref{eq: NK NKPC}) with $u_t=0$, (\ref{eq: NK EE}) with $\epsilon_{t}$ following a two-state Markov chain process, and the optimal discretionary policy (\ref{eq: NK OP}) is not generically coherent. 
\end{proposition}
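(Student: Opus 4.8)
The plan is to apply the machinery developed for the \nameref{ex: ACS} example and Proposition \ref{prop: NK-TR CC}. I would first cast the NK-OP model into the canonical form (\ref{eq: canon}); then, using that $\epsilon_t$ is a two-state Markov chain, reduce the existence question to the invertibility of the piecewise linear map $F$ of (\ref{eq: F}) via the stacked system (\ref{eq: canon i}); and finally invoke the \nameref{th: GLM} Theorem, showing that the four determinants $\det\mathcal A_J$ do not all share a sign.

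Concretely, setting $u_t=0$ and $Y_t=(\hat\pi_t,\hat x_t)'$, equations (\ref{eq: NK NKPC}), (\ref{eq: NK EE}) and the optimal rule (\ref{eq: NK OP}) give in the slack regime $s_t=1$ the targeting rule $\lambda\hat\pi_t/\gamma+\hat x_t=0$, and in the binding regime $s_t=0$ the Euler equation at $\hat R_t=-\mu$, so that
\[
A_1=\begin{pmatrix}1 & -\lambda\\ \lambda/\gamma & 1\end{pmatrix},\quad B_1=\begin{pmatrix}-\beta & 0\\ 0 & 0\end{pmatrix},\quad A_0=\begin{pmatrix}1 & -\lambda\\ 0 & 1\end{pmatrix},\quad B_0=\begin{pmatrix}-\beta & 0\\ -\sigma & -1\end{pmatrix},
\]
with the full coefficients in \ref{app: s: coeffcanonical}. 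Since only $A_{s}$ and $B_{s}$ enter the determinants, $C_{s}$ and $D_{s}$ may be dropped. With $k=2$ and transition matrix $K=\bigl(\begin{smallmatrix}p & 1-p\\ 1-q & q\end{smallmatrix}\bigr)$, (\ref{eq: canon i}) yields for each $J\subseteq\{1,2\}$ the $4\times4$ block matrix $\mathcal A_J$ with diagonal blocks $A_{s_i}+K_{ii}B_{s_i}$ and off-diagonal blocks $K_{ij}B_{s_i}$, $s_i=1_{\{i\in J\}}$, exactly as in (\ref{eq: A matrices ACS}).

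It then suffices to exhibit two configurations with opposite-signed determinants. For $J=\{1,2\}$ (both states slack) the block matrix reproduces the purely linear optimal-discretion model, which is determinate; eliminating $\hat x$ through the targeting rule reduces $\det\mathcal A_{\{1,2\}}$ to $\det\bigl(P+(\lambda^2/\gamma)I_2\bigr)$ with $P=\bigl(\begin{smallmatrix}1-\beta p & -\beta(1-p)\\ -\beta(1-q) & 1-\beta q\end{smallmatrix}\bigr)$; at $\lambda^2/\gamma=0$ this equals $(1-\beta)[1+\beta(1-p-q)]>0$ and it is increasing in $\lambda^2/\gamma$, so $\det\mathcal A_{\{1,2\}}>0$ for all $\gamma\geq0$. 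For the opposite sign I would specialize to a (near-)absorbing binding state, $p\to1$: then $K_{12}=0$, so $\mathcal A_{\{2\}}$ is block lower-triangular and $\det\mathcal A_{\{2\}}=\det(A_0+B_0)\,\det(A_1+qB_1)=-\lambda\sigma\,(1-\beta q+\lambda^2/\gamma)<0$ for every $\gamma\geq0$. Continuity of $\det\mathcal A_J$ in the parameters then extends the disagreement $\det\mathcal A_{\{2\}}<0<\det\mathcal A_{\{1,2\}}$ to an open neighbourhood, i.e.\ to a persistent but not permanent trap.

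By the \nameref{th: GLM} Theorem this makes $F$ non-invertible, hence non-surjective, so there are values of $\mathbf X$ for which (\ref{eq: canon i}) has no solution $\mathbf Y$; equivalently, an MSV solution exists only after restricting the support of $\epsilon_t$. This is exactly the assertion that the NK-OP model is not generically coherent. The main obstacle is not any single determinant but establishing that the sign pattern is robust across the structural-parameter region rather than holding only at isolated points; the block-triangular factorization at $p\to1$ together with the continuity argument is the device I would use to secure this cleanly, and I would further check whether the disagreement persists for all $p,q\in(0,1)$ so as to obtain the stronger conclusion that coherency fails throughout the optimal-policy class.
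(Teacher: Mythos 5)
Your machinery is the paper's own: cast NK-OP into the canonical form, stack the two states into the piecewise linear system (\ref{eq: canon i}), and violate the determinant condition of the \nameref{th: GLM} Theorem by exhibiting two configurations whose determinants disagree in sign. Your two computations are also correct: $\det\mathcal{A}_{\{1,2\}}>0$ holds for all admissible parameters, and at $p=1$ the block-triangular factorization gives $\det\mathcal{A}_{\{2\}}=\det(A_{0}+B_{0})\det(A_{1}+qB_{1})=-\lambda\sigma\left(1-\beta q+\lambda^{2}/\gamma\right)<0$, which agrees exactly with the paper's closed-form expression (\ref{eq: determinants OP}) evaluated at $p=1$. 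The gap is the scope of what this proves. Proposition \ref{prop: NK-OP CC} is a claim about the NK-OP model with an \emph{arbitrary} two-state chain: the paper's proof covers every $(p,q)$ and all admissible $\beta,\gamma,\lambda,\sigma$, whereas your argument delivers incoherency only at $p=1$ and, by continuity, on some unquantified neighbourhood of it (a nearly permanent trap).

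Worse, the completion you flag at the end --- checking that the particular disagreement $\det\mathcal{A}_{\{2\}}<0<\det\mathcal{A}_{\{1,2\}}$ persists for all $p,q\in(0,1)$ --- is not available, because it is false. Run your own block-triangular trick at the opposite corner, $q=1$ (slack state absorbing, the standard configuration of Assumption \ref{ass: M nonlinear absorbing}): then $K_{21}=0$, so $\mathcal{A}_{\{2\}}$ is block upper-triangular and $\det\mathcal{A}_{\{2\}}=\det(A_{0}+pB_{0})\det(A_{1}+B_{1})=\left[(1-p)(1-p\beta)-p\sigma\lambda\right]\left(1-\beta+\lambda^{2}/\gamma\right)$, which is strictly positive whenever $(1-p)(1-p\beta)>p\sigma\lambda$, i.e.\ whenever $\theta>1$ (e.g.\ $\beta=0.99$, $\sigma\lambda=0.01$, $p=1/2$); by continuity the same holds for $q$ near one. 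In that empirically standard region your chosen pair of determinants agree in sign, and the violation of the CC condition must instead be exhibited through $\det\mathcal{A}_{\varnothing}=\sigma^{2}\lambda^{2}\psi_{p,q,\beta,\sigma\lambda}<0$, with $\psi_{p,q,\beta,\sigma\lambda}$ defined in (\ref{eq: NK cutoff}). This is precisely why the paper's proof goes through Proposition \ref{prop: NK-OP canonical}, which computes all four determinants in closed form for general $(p,q)$, and then splits on the sign of $\psi_{p,q,\beta,\sigma\lambda}$: if $\psi_{p,q,\beta,\sigma\lambda}<0$ then $\det\mathcal{A}_{\varnothing}<0<\det\mathcal{A}_{\{1,2\}}$, while if $\psi_{p,q,\beta,\sigma\lambda}>0$ then both $\det\mathcal{A}_{\{2\}}$ and $\det\mathcal{A}_{\{1\}}$ are negative although $\det\mathcal{A}_{\{1,2\}}>0$. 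Some such exhaustive case analysis over the parameter space is unavoidable; the $p\to1$ computation plus a continuity appeal proves only a strictly weaker statement than the proposition asserts.
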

Proposition \ref{prop: NK-OP CC} proves that there are values of the shocks for which no MSV equilibrium exists, thus formally corroborating the numerical findings in \cite{Armenter2018} about non-existence of Markov-perfect equilibria (which we call MSV solutions) in the NK-OP model.


Analogously to Proposition \ref{prop: nonlinear ACS} in the previous subsection, we can characterize the support restrictions for existence of a solution in the special case given by Assumption \ref{ass: M nonlinear absorbing}, such that $p<1$ (transitory state) and $q=1$ (absorbing state), with support of
$\hat{M}_{t}$ equal to $(-r^{L})$ and $0$, respectively.

\begin{proposition} \label{prop: NK-TR sup res}
Consider the NK-TR model of Proposition \ref{prop: NK-TR CC}. Suppose further that $\epsilon_{t}=-\sigma \hat{M}_{t+1|t}$, where $M_t$ satisfies Assumption \ref{ass: M nonlinear absorbing}, and define $\theta:=\frac{\left(  1-p\right)  \left(1-p\beta\right)}{p\sigma\lambda}$. A MSV solution exists if and only if
\begin{subequations}
\label{eq: supp restr NK}
\begin{align}\label{eq: supp restr NK E}
    \text{either}\quad & \theta > 1  \text{ and } r^{-1}\leq\pi_{\ast}, \\
    \text{or} \qquad & \theta \leq 1, \text{ } r^{-1}\leq\pi_{\ast} \text{ and } -r^L \leq \log(r\pi_*)\left(\frac{\psi-p}{\psi p}+\frac{\theta}{\psi}\right). \label{eq: supp restr NK B}
\end{align}
\end{subequations}
\end{proposition}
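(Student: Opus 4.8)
The plan is to construct the (target-reverting) MSV solution explicitly as a pair of state-contingent values and to read the support restrictions off the requirement that the posited ZLB regimes be internally consistent. Writing $\mu:=\log(r\pi_*)$, let the two states be the transitory ``liquidity-trap'' state (where $\hat M_t=-r^L$, persisting with probability $p$) and the absorbing state (where $\hat M_t=0$), and denote by $(\hat\pi^L,\hat x^L)$ and $(\hat\pi^S,\hat x^S)$ the corresponding values of $Y_t=(\hat\pi_t,\hat x_t)'$. Because the absorbing state ($q=1$) feeds back only into itself, the conditional expectations make the system block-recursive, so I would solve the absorbing state first and then substitute its values into the transitory-state equations.

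First I would pin down the absorbing state. With no shocks and the state persisting forever, the only stationary target-reverting solution is $\hat\pi^S=\hat x^S=0$, and this is admissible only if the implied rate $\hat R^S=\max\{-\mu,0\}$ equals the desired value $0$, i.e. $\mu\ge 0$, equivalently $r^{-1}\le\pi_*$. Conversely, if $\mu<0$ no admissible constant solution exists in the absorbing state (the slack regime is infeasible and the binding regime is self-inconsistent, exactly as in the absorbing-state argument behind Proposition \ref{prop: nonlinear ACS}). This establishes the necessity of $r^{-1}\le\pi_*$ and delivers the restriction common to both branches of (\ref{eq: supp restr NK}).

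Next I would solve the transitory state. Using $\hat\pi^S=\hat x^S=0$, the conditional expectations reduce to $\hat\pi_{t+1|t}=p\hat\pi^L$, $\hat x_{t+1|t}=p\hat x^L$, and $\epsilon_t=\sigma p r^L$. Combining the Phillips curve $\hat\pi^L(1-\beta p)=\lambda\hat x^L$ with the Euler equation under each candidate regime gives a closed form for $\hat\pi^L$: in the slack (PIR) regime $\hat R^L=\psi\hat\pi^L$ and one obtains $\hat\pi^L=pr^L/(\psi-p+p\theta)$, while in the binding (ZIR) regime $\hat R^L=-\mu$ and one obtains $\hat\pi^L=(\mu+pr^L)/(p(\theta-1))$, where the coefficient $\sigma p(\theta-1)$ is precisely what makes $\theta$ decisive. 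Imposing the regime indicator $s_t=1_{\{\psi\hat\pi^L+\mu>0\}}$ then converts each candidate into an inequality in $-r^L$: the PIR solution is consistent iff $-r^L<\mu\bigl(\tfrac{\psi-p}{\psi p}+\tfrac{\theta}{\psi}\bigr)$, and the ZIR solution iff $\psi\hat\pi^L+\mu\le 0$, which I would simplify to an inequality against the same threshold.

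Finally I would combine the regimes by a case split on the sign of $\theta-1$. When $\theta>1$ the coefficient $p(\theta-1)>0$ and the PIR and ZIR consistency regions partition $\{-r^L\ge 0\}$ at the common threshold, so for every admissible shock exactly one regime closes and existence holds with no further restriction---this is branch (\ref{eq: supp restr NK E}). When $\theta\le 1$ the inequality flips: both regimes are admissible only for $-r^L$ below the threshold and neither above it, so existence holds iff $-r^L\le\mu\bigl(\tfrac{\psi-p}{\psi p}+\tfrac{\theta}{\psi}\bigr)$, which is branch (\ref{eq: supp restr NK B}); the knife-edge $\theta=1$ I would treat separately, where the ZIR equation degenerates and only the PIR branch (threshold $\mu/p$) survives. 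The main obstacle is twofold: carrying out the case analysis cleanly, since the sign of $\theta-1$ flips the ZIR inequality and governs whether the two regimes tile or overlap the shock space, and making the enumeration exhaustive, i.e. arguing that no other regime configuration (in particular the deflationary absorbing steady state, which generates the extra MSV solutions discussed in Section \ref{s: incompleteness}) can restore existence once the threshold is crossed. Relative to the \nameref{th: GLM} Theorem, this amounts to checking not merely the signs of the $\det\mathcal{A}_J$ but whether $\kappa(\mathbf{X})$ lies in the image of the non-invertible map $F$, which for $k=2$ collapses to the explicit threshold above.
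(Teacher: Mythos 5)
Your proposal follows essentially the same route as the paper's own proof---solve the absorbing state first, then the transitory state, with a case split on the sign of $\theta-1$ and on the ZLB regime---and your algebra for the configurations in which agents expect the PIR absorbing state is correct: the PIR-transitory solution $\hat\pi^L=pr^L/(\psi-p+p\theta)$ is admissible iff $-r^L<\mu\left(\frac{\psi-p}{\psi p}+\frac{\theta}{\psi}\right)$, the ZIR-transitory solution is admissible on the complementary region when $\theta>1$ and on the same region when $\theta\le 1$, and the absorbing-state argument correctly forces $\mu\ge 0$. This delivers the ``if'' direction of the proposition and matches the paper's thresholds.

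However, there is a genuine gap, and it is precisely the step you flag as an ``obstacle'' without resolving. The proposition asserts non-existence of \emph{any} MSV solution when the conditions fail, and with two exogenous states there are four regime configurations, not two: the absorbing state itself can be either PIR or the deflationary ZIR at $(\hat\pi,\hat x,\hat R)=(-\mu,-\mu\frac{1-\beta}{\lambda},-\mu)$. Your block-recursive construction only ever feeds $\hat\pi^S=\hat x^S=0$ into the transitory-state expectations, so the configurations (PIR,ZIR) and (ZIR,ZIR) are never analysed. To close the ``only if'' direction when $\theta\le 1$ you must derive the transitory system under ZIR-absorbing expectations---where $\hat\pi_{t+1|t}=p\hat\pi^L-(1-p)\mu$ and $\hat x_{t+1|t}=p\hat x^L-(1-p)\mu\frac{1-\beta}{\lambda}$, so both the AS and AD curves acquire intercept shifts proportional to $\mu(1-p)$---compute the resulting existence threshold, and verify that it is \emph{more} stringent than $\mu\left(\frac{\psi-p}{\psi p}+\frac{\theta}{\psi}\right)$, so that violating (\ref{eq: supp restr NK B}) kills those configurations as well. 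This is exactly what the paper does via equations (\ref{app: eq: AS-absZIR})--(\ref{app: eq: AD-absZIR}) and panels (b) and (d) of its figure, concluding that the PIR-absorbing case gives the least stringent condition. Without that comparison, it remains a priori possible that deflationary-expectations equilibria survive shocks above your threshold, and the ``only if'' claim is unproven; whether the $\mu(1-p)$ shifts relax or tighten the constraint is not obvious without the computation.
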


Figure \ref{fig: nk_psilargerthanp} helps to grasp the economic intuition. The $AD$ curve is piecewise linear depending on whether the economy is at the ZLB ($AD^{ZLB}$) or monetary policy follows the Taylor rule ($AD^{TR}$). The negative shock shifts the $AD$ curve to the left. In the transitory state, there are four possibilities
depending on the value of $\theta$, and on the equilibrium in the absorbing
state, which can be either a PIR one or a ZIR one (see Appendix 
\ref{app: s: prop NK-TR sup res}). 
\begin{figure}[h]
\centering
\includegraphics[scale = 0.45]{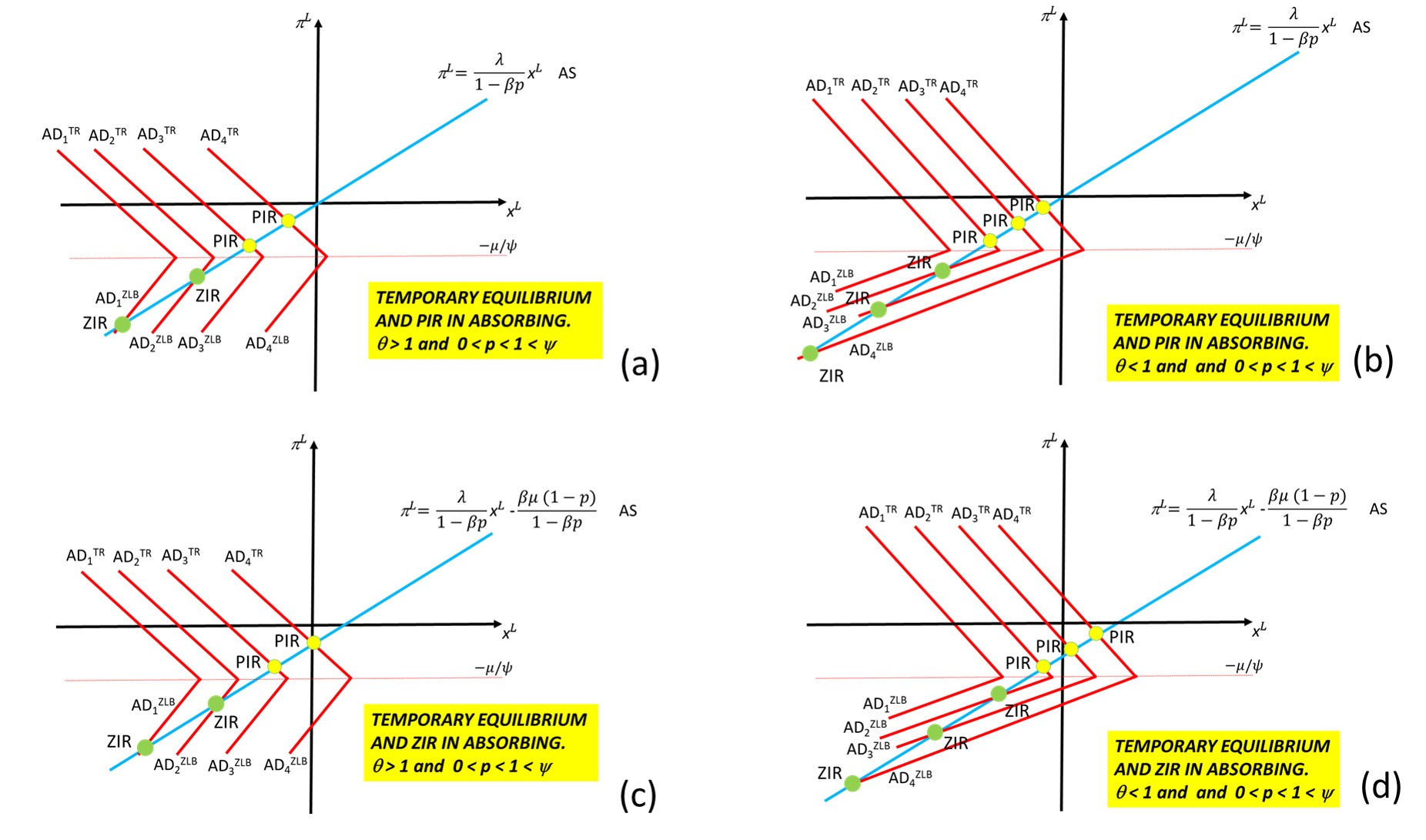}
\caption{The temporary state in the NK model when $\psi>1$.} 
\label{fig: nk_psilargerthanp}
\end{figure}

When $\theta >1$, the $AS$ is flatter than $AD^{ZLB}$, and the $AS-AD$ system is described by the curves plotted in the left column of Figure \ref{fig: nk_psilargerthanp} for the two cases when the absorbing state is PIR on the top, i.e., panel (a), and  when the absorbing state is ZIR on the bottom, i.e., panel (c). Inspection of these two graphs shows there is always a solution in both cases. Hence, when $\theta >1,$ the only necessary support
restriction is $\left( r\pi _{\ast }\right) ^{-1}\leq 1$, which guarantees the existence of an equilibrium in the absorbing state, as stated in (\ref{eq: supp restr NK E}).\footnote{$\theta>1$ exactly corresponds to condition C2 in Proposition 1 of \cite{egg2011nberma}. Figure \ref{fig: nk_psilargerthanp} provides a visual and intuitive interpretation of the coherency condition in these two sub-cases related to the analysis presented in \cite{egg2011nberma} and \cite{Bilbiie2019neofisher} for the NK-TR model.} 
Next, turn to the case $\theta \leq 1.$ The $AS$ is steeper than $AD^{ZLB},$ and the $AS-AD$ system is described by the curves plotted in the right column of Figure \ref{fig: nk_psilargerthanp} for the two cases when the absorbing state is PIR on the top, i.e., panel (b), and  when the absorbing state is ZIR on the bottom, i.e., panel (d). Clealry, a further support restriction is needed on the value of the shock in the transitory state to avoid the $AD$ curve being completely above the $AS$ curve. Intuitively, the negative shock cannot be too large (in absolute value) for an equilibrium (actually two equilibria in this case) to exist. This is what the second condition (\ref{eq: supp restr NK B}) guarantees.

\begin{proposition} \label{prop: NK-OP sup res}
Consider the NK-OP models of Proposition \ref{prop: NK-OP CC}. Suppose further that $\epsilon_{t}=-\sigma \hat{M}_{t+1|t}$, where $M_t$ satisfies Assumption \ref{ass: M nonlinear absorbing}, and define $\theta:=\frac{\left(  1-p\right)  \left(1-p\beta\right)}{p\sigma\lambda}$. A MSV solution exists if and only if
\begin{subequations}
\label{eq: supp restr NK-OP}
\begin{align}\label{eq: supp restr NK-OP E}
    \text{either}\quad & \theta > 1  \text{ and } r^{-1}\leq\pi_{\ast}, \\
    \text{or} \qquad & \theta \leq 1, \text{ } r^{-1}\leq\pi_{\ast} \text{ and } -r^L \leq \frac{\log(r\pi_*)}{p}. \label{eq: supp restr NK-OP B}
\end{align}
\end{subequations}
\end{proposition}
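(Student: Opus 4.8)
The plan is to follow the template of the proof of Proposition \ref{prop: NK-TR sup res}, since under Assumption \ref{ass: M nonlinear absorbing} the exogenous state is a two-state Markov chain ($k=2$), so the construction of Section \ref{s: piecewise linear} reduces an MSV solution to a pair of vectors $\mathbf{Y}e_1=(\hat\pi^{(1)},\hat x^{(1)})'$ (transitory state, with $\hat M=-r^L$, persistence $p$, and $\epsilon^{(1)}=-\sigma\hat M_{t+1|t}=\sigma p r^L$) and $\mathbf{Y}e_2=(\hat\pi^{(2)},\hat x^{(2)})'$ (absorbing state, with $\hat M=0$ and $\epsilon=0$) solving the system (\ref{eq: canon i}) with the NK-OP coefficients. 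Because state $2$ is absorbing, I would solve the two states sequentially: the absorbing state in isolation, and then the transitory state taking the absorbing-state values as the $(1-p)$-weighted continuation in $\hat\pi_{t+1|t}$ and $\hat x_{t+1|t}$.

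First I would solve the absorbing state. Looking for a constant solution, the target equilibrium $\hat\pi^{(2)}=\hat x^{(2)}=0$ satisfies the Phillips curve and the targeting rule $\gamma\hat x+\lambda\hat\pi=0$ in (\ref{eq: NK OP}), while the Euler equation forces $\hat R^{(2)}=0$, which respects the ZLB ($\hat R>-\mu$) exactly when $\mu=\log(r\pi_*)\geq0$, i.e. $r^{-1}\leq\pi_*$; this delivers the first restriction and its necessity. I would also record the deflationary ZIR steady state $\hat\pi^{(2)}=-\mu$, $\hat x^{(2)}=-\frac{1-\beta}{\lambda}\mu$, which exists under the same sign condition, since the four panels of Figure \ref{fig: nk_psilargerthanp} correspond to pairing each absorbing-state regime with each sign of $\theta-1$.

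Conditional on each absorbing-state regime I would then solve the transitory state in the two candidate regimes. In the slack (PIR) regime the targeting rule and the Phillips curve pin down $\hat\pi^{(1)},\hat x^{(1)}$, and the Euler equation yields the required rate; with the target continuation this gives $\hat R^{(1)}=pr^L$, so PIR feasibility ($\hat R^{(1)}>-\mu$) reduces exactly to $-r^L\leq\frac{\log(r\pi_*)}{p}$. In the binding (ZIR) regime I would set $\hat R^{(1)}=-\mu$ and solve the resulting linear system; using the definition of $\theta$ the inflation solution takes the form $\hat\pi^{(1)}=\frac{\mu+pr^L}{p(\theta-1)}$, whose sign is governed by $\theta-1$. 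The decisive step is regime consistency: checking the sign of $\gamma\hat x^{(1)}+\lambda\hat\pi^{(1)}$ against the optimal-policy inequality in (\ref{eq: NK OP}) and the indicator $s_1$. When $\theta>1$ the ZIR solution has the admissible deflationary sign for every shock, so PIR or ZIR is always feasible and no restriction beyond $\mu\geq0$ is needed, giving (\ref{eq: supp restr NK-OP E}); when $\theta\leq1$ the candidate ZIR solution acquires the wrong sign and violates the optimal-policy inequality, so for $-r^L>\frac{\log(r\pi_*)}{p}$ neither regime is feasible and no solution exists, which is (\ref{eq: supp restr NK-OP B}).

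The hard part will be the exhaustive regime bookkeeping: for each of the (at most $2^k=4$) configurations of regimes across the two states I must verify not only the GLM-type invertibility but also the extra optimal-policy inequality in (\ref{eq: NK OP}), and in particular rule out that the deflationary-absorbing configurations (panels (c) and (d) of Figure \ref{fig: nk_psilargerthanp}) rescue existence for large shocks when $\theta\leq1$. The geometric reading of Figure \ref{fig: nk_psilargerthanp} --- intersection of the $AS$ curve with $AD^{ZLB}$ --- organizes this, since the sign of $\theta-1$ fixes the relative slopes while the shock shifts only the intercept. As a guide and consistency check I would use the fact that optimal policy enforces $\gamma\hat x+\lambda\hat\pi=0$ whenever the ZLB is slack, i.e. the analogue of an infinitely aggressive Taylor rule, so the bound is the $\psi\to\infty$ limit of the NK-TR bound in (\ref{eq: supp restr NK B}), namely $-r^L\leq\frac{\log(r\pi_*)}{p}$.
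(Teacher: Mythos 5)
Your decomposition is the same as the paper's: solve the absorbing state first (obtaining the target PIR steady state and the deflationary ZIR steady state, both requiring $\mu=\log(r\pi_*)\geq 0$), then solve the transitory state conditional on which absorbing regime agents expect, with the sign of $\theta-1$ organizing the cases. Your algebra in the PIR-absorbing branch is correct and reproduces the paper's thresholds exactly: the slack-regime solution is $\hat\pi^{(1)}=\hat x^{(1)}=0$ with $\hat R^{(1)}=pr^{L}$, feasible iff $-r^{L}\leq\mu/p$, while the binding-regime solution $\hat\pi^{(1)}=\frac{\mu+pr^{L}}{p(\theta-1)}$ is regime-consistent (i.e., $\gamma\hat x^{(1)}+\lambda\hat\pi^{(1)}\leq0$, which reduces to $\hat\pi^{(1)}\leq0$) iff $-r^{L}\geq\mu/p$ when $\theta>1$ and iff $-r^{L}\leq\mu/p$ when $\theta\leq1$; your $\psi\to\infty$ comparison with (\ref{eq: supp restr NK B}) is a sound consistency check.

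The genuine gap is the branch you flag as ``the hard part'' and then leave unexecuted: the ZIR-absorbing configurations. Sufficiency of (\ref{eq: supp restr NK-OP}) follows from your PIR-absorbing branch alone, but necessity does not. In the critical case $\theta\leq1$, $\mu\geq0$, $-r^{L}>\mu/p$, you have only shown that no MSV solution with a PIR continuation exists; an equilibrium in which agents expect absorption into the deflationary ZIR steady state could in principle survive larger shocks, and if it did, the ``only if'' direction of the proposition would fail. The paper closes this by redoing the transitory-state analysis under the shifted expectations $\hat\pi_{t+1|t}=p\hat\pi^{L}-(1-p)\mu$ and $\hat x_{t+1|t}=p\hat x^{L}-(1-p)\mu(1-\beta)/\lambda$, which moves the intercepts of both the $AS$ and $AD^{ZLB}$ curves, and arguing that the resulting existence threshold $\overline{(-r^{L})}$ satisfies $\overline{(-r^{L})}<\mu/p$ when $\theta\leq1$, so the ZIR-absorbing configurations are strictly more restrictive and cannot rescue existence. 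That computation (or an equivalent algebraic version, solving the two transitory regimes against the ZIR continuation and checking both the ZLB and the slackness inequality in (\ref{eq: NK OP})) is the step your proposal still needs; without it you have established the ``if'' direction and the necessity of $\mu\geq0$, but not the full necessity of the bound $-r^{L}\leq\mu/p$. A minor slip: for NK-OP the relevant panels are those of Figure \ref{fig: nk-OP_temp}, not Figure \ref{fig: nk_psilargerthanp}.
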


Figure \ref{fig: nk-OP_temp} helps to grasp the economic intuition. The $AD$ curve is piecewise linear depending on whether the economy is at the ZLB ($AD^{ZLB}$) or monetary policy follows the optimal rule ($AD^{OP}$). The negative shock shifts the $AD^{ZLB}$ curve upward. In the transitory state, there are four possibilities.
When $\theta >1$, the $AS$ is flatter than $AD^{ZLB}$ and the relevant plots are panel (a) and (c) on the left column, depending whether agents expect the absorbing state to be a PIR or a ZIR. There is always a solution in both cases, so again we conclude that when $\theta >1,$ the only necessary support
restriction is $\left( r\pi _{\ast }\right) ^{-1}\leq 1$, as stated in (\ref{eq: supp restr NK-OP E}). Instead, when $\theta \leq 1$, the $AS$ is steeper than $AD^{ZLB},$ and the relevant plots are the one on the right column. There exists an equilibrium (actually two equilibria) if and only if $(-r^{L})$ is below a threshold level, which is given by $\frac{\mu}{p},$ as stated by (\ref{eq: supp restr NK-OP B}) (see Appendix 
\ref{app: s: prop NK-OP sup res}).
\begin{figure}[htb]
\centering\includegraphics[scale = 0.5]{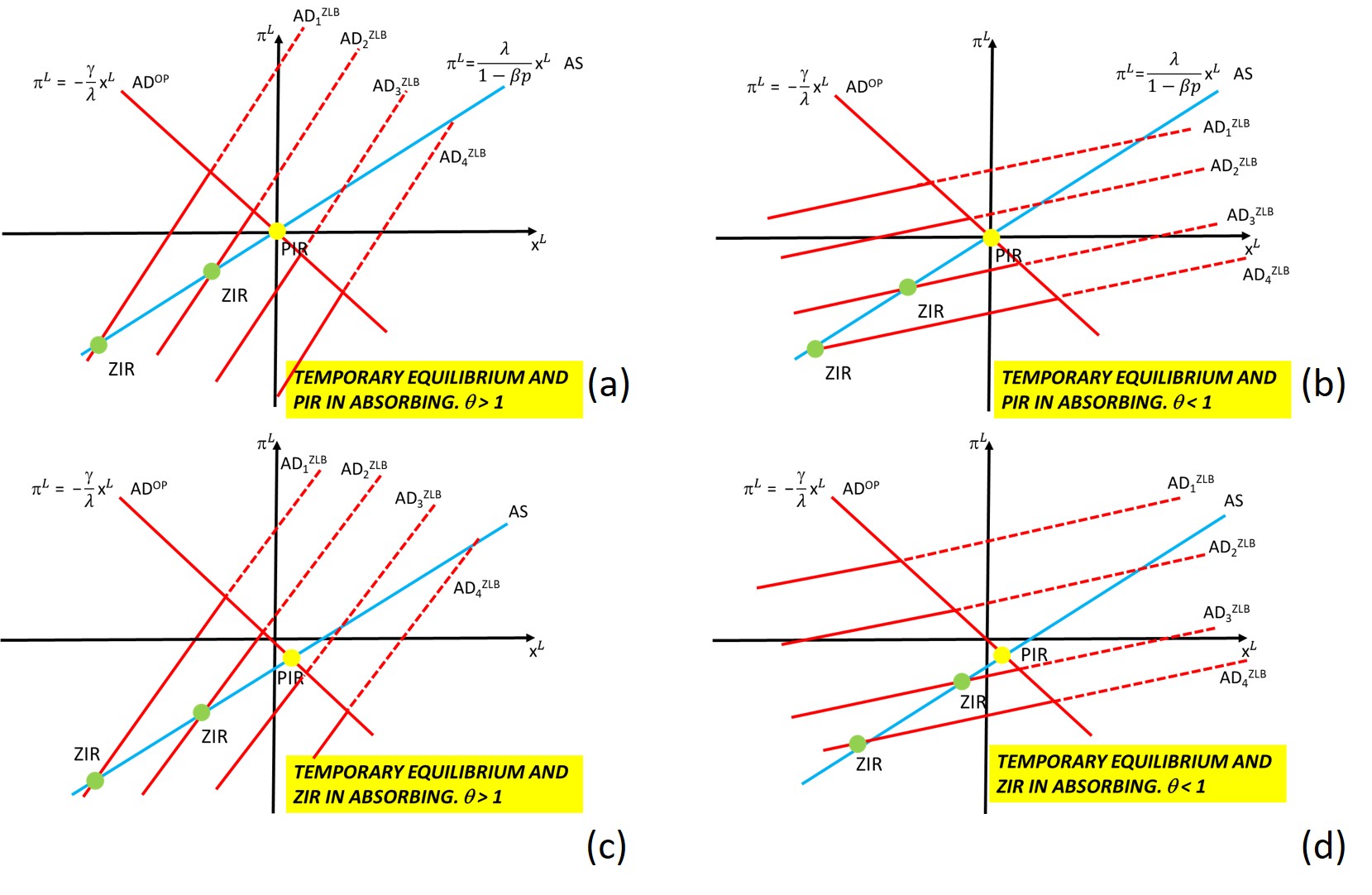}
\caption{The temporary state in the NK-OP model.} 
\label{fig: nk-OP_temp}
\end{figure}

The literature on confidence-driven equilibria \citep[e.g.,][]{MertensRavn2014} emphasised the possibility of sunspots when $\theta\leq1$. 
Propositions  \ref{prop: NK-TR sup res} and \ref{prop: NK-OP sup res}  do not consider sunspot equilibria. Existence of sunspot equilibria can be examined by including a sunspot shock in the exogenous state variables $X_t$. For example, analogously to Proposition \ref{prop: nonlinear ACS sunspots}, it can be shown that if the discount factor shock $M_t$ takes a single value $1/r$ and we allow for a binary sunspot process $\varsigma_t$ with transition matrix $K_\varsigma$, the NK-TR model with $\psi>1$ is not generically coherent, and an equilibrium exists if and only if $r^{-1}\leq \pi_*$.\footnote{This is true for any transition matrix $K_\varsigma$, i.e., not confined to the case when one of the states of the sunspot shock is absorbing as in \cite{MertensRavn2014}, see \ref{app: s: sunspots} for details.}

In the case of the NK-OP model, \cite{Nakata2018} and \cite{Nakataschmidt2019JME} consider only the case when the ZLB always binds in the `low' state, corresponding to $-pr^L>\log (r\pi_*)$. Because this excludes (\ref{eq: supp restr NK-OP B}), the condition for existence of a Markov-perfect (MSV) equilibrium given in \citet[Prop.~1]{Nakataschmidt2019JME} corresponds to (\ref{eq: supp restr NK-OP E}), which they express as a restriction on the transition probabilities, equivalent to $\theta>1$ in (\ref{eq: supp restr NK-OP E}), see Appendix 
\ref{app: NS} for details. Therefore, Proposition \ref{prop: NK-OP sup res} corroborates and extends the existence results in \cite{Nakataschmidt2019JME} by highlighting that existence requires restrictions on the values the shocks can take (support) rather than on the transition probabilities.

Finally, note that as $\sigma$ gets large, $\theta$ goes to zero and condition (\ref{eq: supp restr NK}) reduces to 
\begin{equation}\label{eq: support restr lin ACS}
r^{-1}\leq\pi_{\ast},\quad\text{and\quad}-r^{L}\leq\log\left(  r\pi_{\ast}\right)  \frac{\psi-p}{\psi p},
\end{equation}
which is the support restriction for \nameref{ex: ACS}.\footnote{The first inequality in (\ref{eq: support restr lin ACS}) is identical to the corresponding condition in (\ref{eq: support restr nonlin ACS})
in Proposition \ref{prop: nonlinear ACS} for existence of a fundamental solution in the nonlinear ACS model. This is
not surprising because in the absorbing state the two models are identical --
no approximation is involved. The second condition is approximately the same as the
corresponding second inequality in (\ref{eq: support restr nonlin ACS}) when $r\pi_*$ is close to 1. The right hand sides of the two inequalities differ by $\log\left(\frac{r\pi_*-1}{p}+1\right)-\frac{\log(r\pi_*)}{p}$, which is zero to a first-order approximation around $r\pi_*=1$.}

\subsection{More about the nature of the support restrictions\label{s: support restrictions}}
To shed some further light on the nature of the support restrictions, we consider a modification of \nameref{ex: ACS} that allows us to characterize the support restrictions analytically even when there are multiple shocks and the distribution of the shocks is continuous. Specifically, we replace the contemporaneous Taylor rule (\ref{eq: Taylor}) with a purely forward-looking one that also includes a monetary policy shock $\nu_t$. In log-deviations from steady state, the forward-looking Taylor rule is $\hat{R}_t=\max(-\mu,\psi\hat{\pi}_{t+1|t}+\nu_t)$. Substituting for $\hat{\pi}_{t+1|t}=\hat{R}_t+\hat{M}_{t+1|t}$ from the log-linear Fisher equation, we obtain the univariate equation
\begin{equation}
\hat{R}_{t}   =\max\left\{  -\mu  ,\psi\hat{R}_t + \psi\hat{M}_{t+1|t}+\nu_t\right\}.\label{eq: fig5}
\end{equation}


The advantage of a forward-looking Taylor rule in this very simple model is that it allows us to substitute out the expectations of the endogenous variable $\hat{\pi}_{t+1|t}$, and therefore obtain an equation that is immediately piecewise linear in the remaining endogenous variable $\hat{R}_t$. Application of the \nameref{th: GLM} Theorem then shows that the model is not generically coherent when $\psi>1$. This is shown graphically in Figure \ref{fig: incompleteness}, where the left-hand side (LHS) and right-hand side (RHS) of (\ref{eq: fig5}) are shown in blue and red, respectively. When $\psi>1$, (\ref{eq: fig5}) may have no solution, an example of which is shown on the left graph in Figure \ref{fig: incompleteness}; or it may have two solutions, which is shown in the graph on the right in Figure \ref{fig: incompleteness}. The latter graph also highlights the range of values of the shocks corresponding to incoherency -- when the positively sloped part of the red curve lies in the grey area, and the ones for which two solutions exist -- when the positively sloped part of the red curve lies to the right of the grey area.
The support restrictions required for existence of a solution are $\psi\hat{M}_{t+1|t}+\nu_{t} \leq (\psi-1)\mu$.
\begin{figure}[htb]
\centering
\includegraphics[scale=0.5]{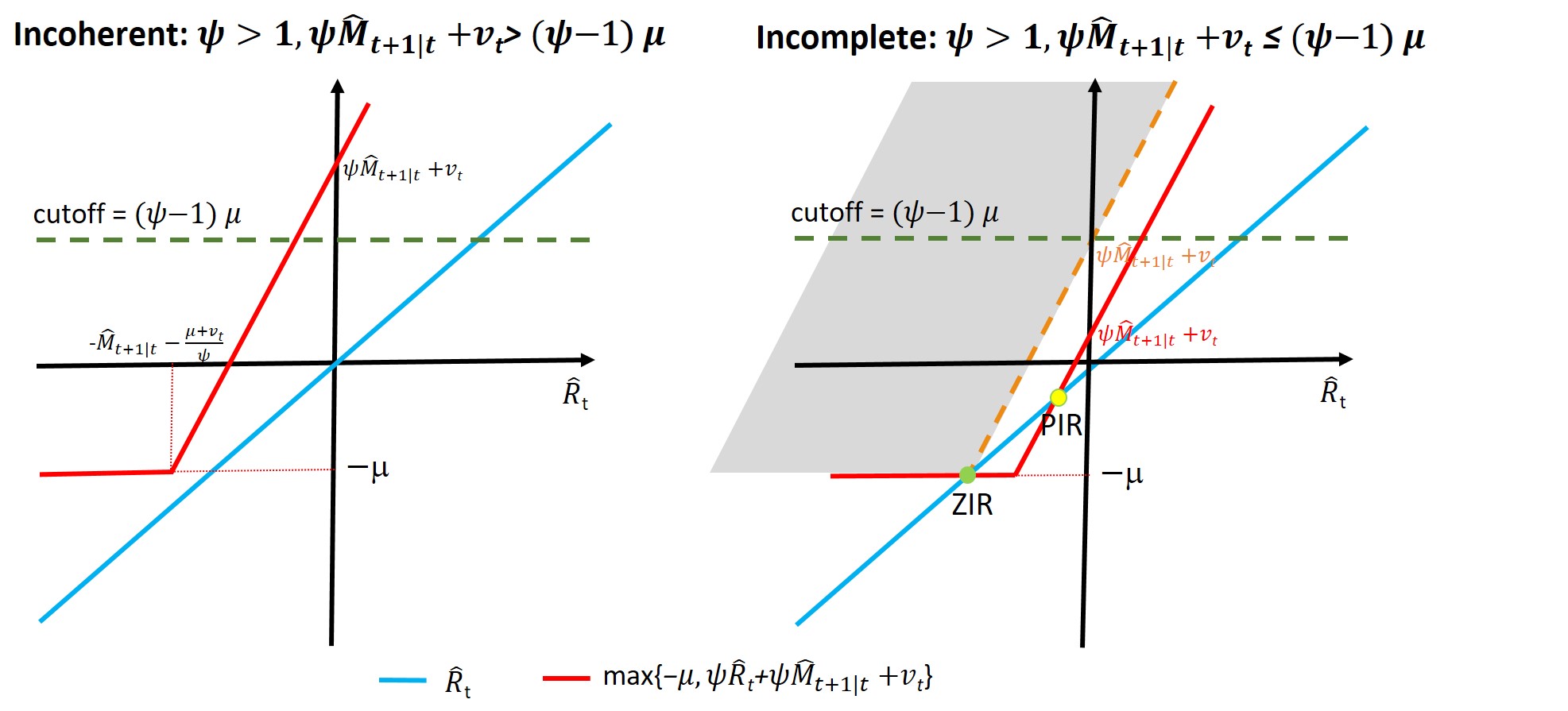}
\caption{Illustration of the restriction on the support of
$\hat{M}_{t+1|t},\nu_t$ in the model given by the intersection between the LHS of (\ref{eq: fig5}), blue line, and the RHS of (\ref{eq: fig5}), red line.}
\label{fig: incompleteness}
\end{figure}

Suppose further that $\hat{M}_t$ follows the AR(1) process $\hat{M}_{t}=\rho\hat{M}_{t-1}+\sigma \epsilon_{t}$ with $E_{t-1}\epsilon_t = 0$, which is the continuous counterpart to the Markov Chain representation we used previously.
The support restrictions can then be equivalently rewritten as 
\begin{equation}
\nu_{t}\leq-\psi\rho\sigma\epsilon_{t}-\psi\rho^2\hat{M}_{t-1}-\left(
1-\psi\right)  \log\left(  r\pi_{\ast}\right)  ,\text{ when }\psi>1.
\label{eq: sup restr}%
\end{equation}
Condition (\ref{eq: sup restr}) has important implications that have been overlooked in the literature on estimation of DSGE models with a ZLB constraint: the shocks $\epsilon_t$ and $\nu_t$ cannot be independently distributed over time, nor can they be independent of each other.

First, suppose $\nu_t=0$, so the only shock driving the model is $\epsilon_t$.  Condition (\ref{eq: sup restr}) says that $\epsilon_t$ cannot be independently and identically distributed over time, since the support of its distribution depends on past $\hat{M}_{t}$, and hence, past $\epsilon_{t}$. The presence of state variables ($\hat{M}_{t}$) will generally cause support restrictions to depend on the past values of the state variables.

Second, condition (\ref{eq: sup restr}) states that 
the monetary policy shock cannot be independent of the real shock since the support of their distribution cannot be rectangular. Specifically, the monetary policy shock cannot be too big relative to current and past shocks to the discount factor if we are to rule out incoherency. 

If these shocks are structural shocks in a DSGE model, such \textit{necessary} support restrictions are difficult to justify. Structural shocks are generally assumed to be orthogonal. In our opinion, it is hard to make sense of structural shocks whose supports depend on the value of the other shocks in a time-dependent way, as well as on the past values of the state variables.
We believe this is a substantive problem for any DSGE model with a ZLB constraint, and, possibly, more generally, with any kinked constraint. 

A possible solution to this problem is to interpret condition (\ref{eq: sup restr}) as a constraint on the monetary policy shock $\nu_{t}$. When a very adverse shock hits the economy, monetary policy has to step in to guarantee the existence of an equilibrium, that is, to avoid the collapse of the economy. In a sense, this can represent what we witnessed after the Great Financial Crisis or after the COVID-19 pandemic: central banks engaged in massive operations through unconventional monetary policy measures (beyond the standard interest rate policy) in response to these large negative shocks. Hence, $\nu_{t}$ could represent what is currently missing in the simple Taylor rule or in the optimal policy problem to describe what monetary policy needs to do to guarantee the existence of an equilibrium facing very negative shocks and a ZLB constraint. This positive interpretation of condition (\ref{eq: sup restr}) calls for going beyond these descriptions of monetary policy behavior to  model explicitly monetary policy conduct such that incoherency disappears. An alternative way, presented in Subsection \ref{s: UMP} below, relates to the use of unconventional monetary policy modelled via a shadow rate.  


\subsection{The Taylor coefficient and the coherency and completeness conditions}\label{s: cc conditions k}

In the examples above, we saw that active Taylor rules ($\psi>1$) lead to incoherency, so restrictions on the support of the shocks are required for equilibria to exist. More generally, we can use the \nameref{th: GLM} Theorem to find the range of parameters of the models that guarantee coherency without support restrictions. In this subsection, we investigate this question in piecewise linear models with discrete shocks that follow a generic $k$-state Markov Chain.

The main result of this subsection is that the Taylor rule needs to be passive for the coherency and completeness condition in the \nameref{th: GLM} Theorem to be satisfied in the NK model. More generally, there is an upper bound, $\bar{\psi}_k$, on the Taylor rule coefficient $\psi$, which depends on parameters and on the number of states $k$, and it is always less than one. 

We start with an analytical result for the special case with two states $k=2$.
\begin{proposition} \label{prop: NK cutoff}
Consider the NK model given by (\ref{eq: NK}) with $\psi_{x}=u_{t}=\nu
_t=0$ and suppose $\epsilon_{t}$ follows a Markov Chain with two states $\epsilon^1,\epsilon^2$ and transition probabilities $p=\Pr(\epsilon_{t+1}=\epsilon^1\allowbreak|\epsilon_{t}=\epsilon^1)$ and $q=\Pr(\epsilon_{t+1}=\epsilon^2\allowbreak|\epsilon_{t}=\epsilon^2)$ and define 
\begin{equation}\label{eq: NK cutoff}
\psi_{p,q,\beta,\sigma\lambda}:=p+q-1-\frac{\left(  2-p-q\right)  \left(  1-p\beta - q\beta +\beta\right)  }{\sigma\lambda}.
\end{equation}
The coherency condition in the \nameref{th: GLM} Theorem holds if and only if
\begin{subequations}\label{eq: CC-NK-2s}
\begin{align}\label{eq: CC-NK-2s E}
\text{either}\quad & \psi_{p,q,\beta,\sigma\lambda}<0 \text{ and } \psi_{p,q,\beta,\sigma\lambda}<\psi<1, \\ 
\text{or}\qquad &  \psi_{p,q,\beta,\sigma\lambda}>0 \text{ and } \psi<\psi_{p,q,\beta,\sigma\lambda}\leq 1. \label{eq: CC-NK-2s B}
\end{align}
\end{subequations}
\end{proposition}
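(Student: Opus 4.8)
The plan is to specialise the \nameref{th: GLM} Theorem to $k=2$. With $Y_t=(\hat\pi_t,\hat x_t)'$ and $\psi_x=u_t=\nu_t=0$, the canonical coefficients of the NK-TR model (see \ref{app: s: coeffcanonical}) satisfy $B_0=B_1=:B=\begin{pmatrix}-\beta & 0\\ -\sigma & -1\end{pmatrix}$, while the regime enters only through
\[
A_0=\begin{pmatrix}1 & -\lambda\\ 0 & 1\end{pmatrix},\qquad A_1=\begin{pmatrix}1 & -\lambda\\ \sigma\psi & 1\end{pmatrix},
\]
so that $A_1-A_0$ is \emph{rank one}, with single nonzero entry $\sigma\psi$ in position $(2,1)$ (the Euler-equation response to inflation that is switched off at the ZLB). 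Stacking the two states as in (\ref{eq: A matrices ACS}), each of the four matrices $\mathcal{A}_J$ in (\ref{eq: F}) is the $4\times4$ block matrix with $(i,j)$ block $\delta_{ij}A_{s_i}+K_{ij}B$, where $s_i=1_{\{i\in J\}}$ and $K=\begin{pmatrix}p & 1-p\\ 1-q & q\end{pmatrix}$.

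First I would exploit this rank-one structure. Turning on state $i$ (from ZIR to PIR) adds $\sigma\psi$ to a single entry of $\mathcal{A}_J$, so by multilinearity of the determinant in its rows, $\det\mathcal{A}_J$ is a polynomial in $\psi$ of degree $|J|$; in particular $\det\mathcal{A}_\emptyset$ is constant, $\det\mathcal{A}_{\{1\}}$ and $\det\mathcal{A}_{\{2\}}$ are affine, $\det\mathcal{A}_{\{1,2\}}$ is quadratic, and all four agree at $\psi=0$. Next I would evaluate them by a Schur complement on the two diagonal blocks, the only nontrivial simplification being $\det(BP^{-1}B)=\beta^2/\det P$ (immediate from $\det B=\beta$) together with the cancellation of the lower-order terms. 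Writing $\psi_c:=\psi_{p,q,\beta,\sigma\lambda}$, the computation yields the closed forms
\begin{align*}
\det\mathcal{A}_\emptyset &= \sigma^2\lambda^2\,\psi_c, &
\det\mathcal{A}_{\{1,2\}} &= \sigma^2\lambda^2(\psi-1)(\psi-\psi_c),
\end{align*}
so the constant term is exactly $\sigma^2\lambda^2$ times the cutoff (\ref{eq: NK cutoff}) and the all-PIR determinant is a parabola with roots $1$ and $\psi_c$. For the two affine determinants I would record only the two evaluations that pin down their sign: with $u:=1+\beta(1-p-q)>0$,
\begin{align*}
\det\mathcal{A}_{\{1\}}\big|_{\psi=1} &= -\sigma\lambda(1-p)(\sigma\lambda+u), &
\det\mathcal{A}_{\{1\}}\big|_{\psi=\psi_c} &= \sigma\lambda\,\psi_c(1-q)(\sigma\lambda+u),
\end{align*}
and the mirror expressions (swap $p\leftrightarrow q$) for $\det\mathcal{A}_{\{2\}}$.

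The sign count required by the \nameref{th: GLM} Theorem then follows. I would first observe that $\psi_c<1$ always, because $p+q-1<1$ and the term subtracted in (\ref{eq: NK cutoff}) is positive. Since all four determinants coincide at $\psi=0$ with value $\sigma^2\lambda^2\psi_c$, I split on the sign of $\psi_c$. If $\psi_c<0$, coherency requires all four negative: the parabola is negative precisely on $(\psi_c,1)$, and each affine determinant is negative at both endpoints $\psi=\psi_c$ and $\psi=1$ (read off from the two evaluations above, using $\sigma\lambda+u>0$), hence negative throughout $(\psi_c,1)$ by affinity, giving coherency iff $\psi_c<\psi<1$, i.e.\ (\ref{eq: CC-NK-2s E}). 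If $\psi_c>0$, coherency requires all four positive: each affine determinant is positive at $\psi=0$ and negative at $\psi=1$, so it has a single root in $(0,1)$, and being positive at $\psi=\psi_c$ that root lies to the right of $\psi_c$; the parabola is positive only for $\psi<\psi_c$ or $\psi>1$, and $\psi>1$ is ruled out by the affine determinants, leaving $\psi<\psi_c$, which is automatically to the left of both affine roots. This is (\ref{eq: CC-NK-2s B}), where $\psi_c\le1$ is automatic.

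The main obstacle is the determinant evaluation in the second step: the two factorizations and the two sign-controlling evaluations are invisible until the $4\times4$ Schur algebra collapses, and obtaining them hinges on the nontrivial cancellations noted above---in particular that $\det\mathcal{A}_\emptyset$ has no constant-in-$\sigma$ term and that $\psi=1$ is a root of the all-PIR determinant, the latter reflecting the Taylor-principle boundary of the unconstrained linear NK model. Once these closed forms are secured, the sign count is routine. The footnote's extension to $\psi_x\neq0$ would follow the same route, the rank-one update now carrying the row $(\sigma\psi,\sigma\psi_x)$ and the relevant threshold becoming the Taylor-principle combination $\psi+\tfrac{\beta-1}{\lambda}\psi_x$.
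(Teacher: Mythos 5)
Your proposal is correct, and I verified your key quantitative claims against the paper's: your closed forms $\det\mathcal{A}_\emptyset=\sigma^2\lambda^2\psi_c$ and $\det\mathcal{A}_{\{1,2\}}=\sigma^2\lambda^2(\psi-1)(\psi-\psi_c)$, and your endpoint evaluations of the mixed-regime determinants, all agree with (\ref{eq: determinants}) at $\psi_x=0$ (note in passing that your $\mathcal{A}_{\{1\}}$ is the paper's $\mathcal{A}_{J_3}$). The overall skeleton is the same as the paper's — apply the \nameref{th: GLM} Theorem to the two-state system, compute the four determinants, and split on the sign of $\psi_c$ — but your execution differs in two genuine ways. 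First, the paper's Proposition \ref{prop: NK-TR canonical} eliminates $\hat{x}$ through the Phillips curve ($\hat{x}=\lambda^{-1}(I-\beta K)\hat{\pi}$), reducing everything to a $2\times2$ system $\mathcal{A}_J=Q+\lambda\sigma\psi\sum_{i\in J}e_ie_i'$ with $Q$ as in (\ref{eq: Q}), and then records all four determinants in full closed form; you instead keep the stacked $4\times4$ system and your Schur step on the Phillips-curve rows is exactly equivalent (the two determinants coincide, not merely in sign, since $\det(-\lambda I)$ cancels against the $\lambda^{-1}$ in the Schur complement). Second, and more substantively, the paper's proof of the sign count manipulates the full closed-form expressions term by term (e.g.\ bounding $\psi$ by the ratio in (\ref{eq: psi bound}) and using the auxiliary inequality (\ref{eq: aux ineq})), whereas you avoid ever writing the mixed determinants in closed form: the rank-one regime switch pins down their degree in $\psi$ and common value at $\psi=0$, the all-PIR parabola factors with roots $1$ and $\psi_c$, and two endpoint evaluations plus affinity finish the argument. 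Your route is leaner and makes the location of the coherency region more transparent; the paper's buys reusable explicit formulas, which it needs elsewhere (the $\psi_x\neq0$ extension, Proposition \ref{prop: NK-TR CC}, and the sunspot analysis in \ref{app: s: sunspots}). One small caveat: when $p=1$ or $q=1$ your endpoint signs are only weak (e.g.\ $\det\mathcal{A}_{\{1\}}\big|_{\psi=1}=0$ if $p=1$, and $\det\mathcal{A}_{\{2\}}\big|_{\psi=\psi_c}=0$ if $p=1$), so statements like ``negative at both endpoints'' and ``positive at $\psi=\psi_c$'' need the observation that both endpoint values cannot vanish simultaneously (that would force $p=q=1$, hence $\psi_c=1>0$); the characterization of the coherency region is unaffected, and the paper's own inequalities have the same looseness at these boundary parameter values.
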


Again, the coherency condition depends on the slopes of the AS (\ref{eq: NK NKPC}) and AD (\ref{eq: NK EE}) curves. However, in all cases, it rules out $\psi>1$, generalizing Proposition \ref{prop: NK-TR CC}.
If one of the states is absorbing, $q=1$, then $\psi_{p,q,\beta,\sigma\lambda} = p-\frac{\left(  1-p\right) \left(1-p\beta\right)}{\sigma\lambda},$ and the condition in (\ref{eq: CC-NK-2s E}) $\psi_{p,q,\beta,\sigma\lambda}<0$ is equivalent to $\theta>1$, as in (\ref{eq: supp restr NK E}) in Proposition \ref{prop: NK-TR sup res}, implying that the slope of the AS curve is flatter than the one of the AD curve under ZLB in the temporary state.

Another important special case is $p=q=(1+\rho)/2$, where $\rho \in (-1,1)$ is the autocorrelation coefficient of the shock $\epsilon_t$. In that case, we obtain $\psi_{p,q,\beta,\sigma\lambda} = \rho-\frac{\left(  1-\rho\right) \left(1-\rho\beta\right)}{\sigma\lambda}$. This can be thought of as a two-state approximation of a continuous AR(1) process for $\epsilon_t$. We can evaluate the coherency condition numerically for a $k$-state \cite{Rouwenhorst1995} approximation of an AR(1) process with $k>2$. Table \ref{t: calibrations} reports the coherency condition for various calibrations of the model found in \cite{MertensRavn2014},  \cite{eggsingh19jedc} and \cite{Bilbiie2019neofisher}.\footnote{Note that in some of the calibrations, the dynamics are driven by a sunspot shock, e.g., the confidence-driven model listed as MR2014 CD. However, the derivation of the coherency condition remains exactly the same when the transition matrix $K$ corresponds to a sunspot shock instead of the fundamental shock $\epsilon_t$.}
For example, when $\rho\sigma\lambda<\left(  1-\rho\right)  \left(
1-\rho\beta\right)$, we verified numerically to 6 decimal digit precision that the coherency condition remains $\psi < 1$, that is, (\ref{eq: CC-NK-2s E}) holds for all $k\leq30$. In the opposite case, the coherency condition is (\ref{eq: CC-NK-2s B}), and $\bar{\psi}_k$ 
can get considerably smaller for large values of $\rho$ and $\sigma\lambda$. For any given values of $\rho$ and $\sigma\lambda$, $\bar{\psi}_k$ seems to converge to some value that is bounded away from zero (see the last column of Table \ref{t: calibrations}). As discussed previously, \nameref{ex: ACS} is a special case that obtains when $\sigma$ is large. In that case, $\bar{\psi}_k$ tends to zero with $k$, which suggests that the coherency condition is not satisfied for any $\psi>0$ in the ACS model with a continuously distributed AR(1) shock.

\begin{table}[h!]

\centering
\caption{Coherency condition $\psi<\bar{\psi}$ for different calibrations of the NK model\label{t: calibrations}} 
\begin{tabular}{|l|c|c|c|c|c|c|c|}
\hline
Paper & $\beta $ & $\sigma $ & $\lambda $ & $\mu $ & $\rho$
& $\bar{\psi}$  \\ \hline 
MR2014 FD & 0.99 & 1 & 0.4479 & 0.01 & 0.4 & 1 \\ 
MR2014 CD & 0.99 & 1 & 0.4479 & 0.01 & 0.7 & 0.494 \\ \hline
\cite{Bilbiie2019neofisher} & 0.99 & 1 & 0.02 & 0.01 & 0.8 & 1  \\ 
 & 0.99 & 1 & 0.2 & 0.01 & 0.8 & 0.592 \\ 
\hline
ES2019 GD & 0.9969 & 0.6868 & 0.0091 & 0.0031 & 0.9035 & 1 \\ 
ES2019 GR & 0.997 & 0.6202 & 0.0079 & 0.003 & 0.86 & 1  \\ \hline
\end{tabular}
\fnote{Notes: MR2014: \cite{MertensRavn2014}, FD: Fundamental-driven, CD: Confidence-driven; ES2019: \cite{eggsingh19jedc}, GD: Great Depression, GR: Great Recession. These papers assume an absorbing state and $\rho$ corresponds to the persistence probability of the transitory state.}
\end{table}

\subsection{Coherency with unconventional monetary policy}\label{s: UMP}

In this subsection we show that UMP can relax the restrictions for coherency
in the NK model. An UMP channel can be added to the model in \nameref{ex: NK}
using a `shadow rate' $\hat{R}_{t}^{\ast }$ that represents the desired UMP
stance when it is below the ZLB. 

Consider a model of bond market
segmentation \citep{ChenCurdiaFerrero2012}, where a fraction of households can only invest in long-term bonds. In such a model, the amount of long-term assets held by the private sector affects the term premium and provides an UMP channel via long-term asset purchases by the central bank. If we assume that asset purchases (quantitative easing) follow a similar policy rule to the Taylor rule, i.e., react to inflation deviation from target, then the IS curve (\ref{eq: NK EE}) can be written as (see Appendix 
\ref{s: QE})
\begin{eqnarray}
\hat{x}_{t} &=&\hat{x}_{t+1|t}-\sigma \left((1-\xi) \hat{R}_{t}+\xi \hat{R}_{t}^{\ast }-\hat{\pi}_{t+1|t}\right) +\epsilon _{t},
\label{eq: NK EE UMP} \\
\hat{R}_{t} &=&\max \left\{ -\mu ,\hat{R}_{t}^{\ast }\right\} ,  \quad \hat{R}_{t}^*=\psi \hat{\pi}_{t}+\psi _{x}\hat{x}_{t}+\nu_{t}, \label{eq: NK TR UMP}
\end{eqnarray}%
where $\xi$ is a function of the fraction of households constrained to invest
in long-term bonds, the elasticity of the term premium with respect to the stock of long term bonds and the intensity of UMP. 
The standard NK model (\ref{eq: NK}) arises as a special case with $\xi
=0$.

The conditions for coherency can be derived analytically in the case of a
single AD shock with a two-state support, analogously to Proposition \ref%
{prop: NK cutoff}.

\begin{proposition}
\label{prop: NK cutoff UMP} Consider the NK model given by (\ref{eq: NK NKPC}%
), (\ref{eq: NK EE UMP}) and (\ref{eq: NK TR UMP}) with $\psi _{x}=u_{t}=\nu
_{t}=0$ and suppose $\epsilon _{t}$ follows a Markov Chain with one
absorbing state and one transitory state that persists with probability $p$. Then, the coherency condition
in the \nameref{th: GLM} Theorem holds if and only if 
\begin{subequations}
\label{eq: CC-NK-2s-UMP}
\begin{align}
\qquad & \psi >\max \left( 1,\frac{1}{\xi }\right) ,
\label{eq: CC NK UMP high} \\
\text{or}\quad \ & \max \left( \psi _{p,1,\beta ,\sigma \lambda },\frac{\psi
_{p,1,\beta ,\sigma \lambda }}{\xi }\right) <\psi <\min \left( 1,\frac{1}{%
\xi }\right) ,  \label{eq: CC NK UMP mid} \\
\text{or}\quad \psi & <\min \left( \psi _{p,1,\beta ,\sigma \lambda },\frac{%
\psi _{p,1,\beta ,\sigma \lambda }}{\xi }\right) ,  \label{eq: CC NK UMP low}
\end{align}%
\end{subequations}
where $\psi _{p,1,\beta ,\sigma \lambda }\leq 1$ is defined in (\ref{eq: NK
cutoff}).
\end{proposition}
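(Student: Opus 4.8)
The plan is to apply the \nameref{th: GLM} Theorem to the two-state UMP model, which requires computing the four determinants $\det\mathcal{A}_J$ corresponding to the four regime configurations over the two states, and then determining the parameter region in which they all share the same sign. First I would put the model (\ref{eq: NK NKPC}), (\ref{eq: NK EE UMP}), (\ref{eq: NK TR UMP}) with $\psi_x=u_t=\nu_t=0$ into the canonical form (\ref{eq: canon}). The key difference from \nameref{ex: NK} is that, when the ZLB is slack ($s_t=1$), the actual rate equals the shadow rate $\hat R_t^*=\psi\hat\pi_t$, but the IS curve now loads on a convex combination $(1-\xi)\hat R_t+\xi\hat R_t^*$; when the rate is set by the policymaker this collapses to $\hat R_t=\hat R_t^*=\psi\hat\pi_t$, whereas at the ZLB ($s_t=0$) we have $\hat R_t=-\mu$ but $\hat R_t^*=\psi\hat\pi_t$ still enters the IS curve through the $\xi$ term. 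So the slack-regime coefficient matrices $A_1,B_1,C_1,D_1$ pick up the extra $\xi$-weighted shadow-rate feedback, while the binding-regime matrices $A_0,B_0,\dots$ coincide with the standard NK case (the shadow rate affects $\hat x_t$ but is itself linear in $\hat\pi_t$). Specializing to one absorbing and one transitory state fixes $q=1$.

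Next I would form the block matrices $\mathcal{A}_{J_1},\dots,\mathcal{A}_{J_4}$ exactly as in (\ref{eq: A matrices ACS}), namely $\mathcal{A}_{J_1}=A_1\otimes I+B_1 K$ for the all-slack configuration, $\mathcal{A}_{J_4}=A_0\otimes I+B_0 K$ for the all-binding one, and the two mixed configurations built by the $e_ie_i'$ selection across states; here the blocks are $2\times2$ because $Y_t=(\hat\pi_t,\hat x_t)'$, so each $\mathcal{A}_J$ is $4\times4$ and I would compute its determinant by exploiting the block structure. Since $q=1$ makes the transition matrix lower/upper triangular in the relevant ordering, several of these determinants should factor neatly. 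I expect $\det\mathcal{A}_{J_4}$ and the determinant for the all-slack regime to carry the factors $(\psi-1)$ and $(\xi\psi-1)$ (reflecting the two thresholds $\psi=1$ and $\psi=1/\xi$ that appear in the statement), while the mixed-regime determinants carry factors involving $\psi_{p,1,\beta,\sigma\lambda}$ and $\psi_{p,1,\beta,\sigma\lambda}/\xi$. The thresholds $\min$ and $\max$ of $1,1/\xi$ and of $\psi_{p,1,\beta,\sigma\lambda},\psi_{p,1,\beta,\sigma\lambda}/\xi$ in (\ref{eq: CC-NK-2s-UMP}) strongly suggest that the four determinants are, up to positive constants, $(\psi-1)$, $(\xi\psi-1)$, $(\psi-\psi_{p,1,\beta,\sigma\lambda})$ and $(\xi\psi-\psi_{p,1,\beta,\sigma\lambda})$ (or close variants), so I would verify precisely these factorizations.

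With the four determinants in factored form, the final step is the sign analysis: the common-sign requirement of the \nameref{th: GLM} Theorem becomes a system of inequalities in $\psi$, and I would show it is equivalent to the disjunction (\ref{eq: CC NK UMP high})--(\ref{eq: CC NK UMP low}). The three cases correspond naturally to $\psi$ lying above both thresholds $1$ and $1/\xi$ (all four determinants positive), $\psi$ lying below both thresholds $1,1/\xi$ but above both of $\psi_{p,1,\beta,\sigma\lambda},\psi_{p,1,\beta,\sigma\lambda}/\xi$ (the middle regime), or $\psi$ lying below both $\psi_{p,1,\beta,\sigma\lambda},\psi_{p,1,\beta,\sigma\lambda}/\xi$ (all negative), with $\psi_{p,1,\beta,\sigma\lambda}\le1$ ensuring the ordering of thresholds is consistent. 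Checking that $\xi\ge0$ (and the two sub-cases $\xi\le1$ versus $\xi>1$, which swap which of $1,1/\xi$ is the binding threshold) reproduces exactly the $\min/\max$ structure is the bookkeeping that closes the argument.

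The main obstacle will be the determinant computation: getting the slack-regime coefficient matrices right once the $\xi$-weighted shadow rate is folded into the IS curve, and then carrying the $4\times4$ determinants through cleanly enough that the claimed factors $(\psi-1)$, $(\xi\psi-1)$ and their $\psi_{p,1,\beta,\sigma\lambda}$ analogues emerge without sign errors. In particular I would be careful that the shadow rate $\hat R_t^*=\psi\hat\pi_t$ enters the IS curve in \emph{both} regimes (it is the actual rate that is kinked, not the shadow rate), since this is precisely what generates the second threshold $1/\xi$ and distinguishes the UMP case from Proposition \ref{prop: NK cutoff}; a mistake here would collapse the $\xi$-dependence and lose the whole result.
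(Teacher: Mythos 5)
Your route is genuinely different from the paper's. The paper does \emph{not} recompute the GLM determinants for the UMP model: it remarks that this could be done "following the same steps as Proposition \ref{prop: NK cutoff}" but instead exploits the absorbing-state assumption to argue graphically, comparing the slope of the $AS$ curve with the slopes of the two branches of the $AD$ curve --- $AD^{TR}$ (slope governed by $\psi$) and $AD^{ZLB}$ (slope now governed by $\xi\psi$ rather than being flat) --- and requiring that both branches lie on the same side of $AS$ in both the absorbing and transitory states. Your algebraic plan is viable and in fact quite clean here: with $q=1$ the kernel $K$, and hence $Q$ in (\ref{eq: Q}), is triangular, each configuration gives $\mathcal{A}_J=Q+\lambda\sigma\,\mathrm{diag}(\alpha_1,\alpha_2)$ with $\alpha_i=\psi$ in a slack state and $\alpha_i=\xi\psi$ in a binding state, and since $Q_{11}=-\lambda\sigma\psi_{p,1,\beta,\sigma\lambda}$ and $Q_{22}=-\lambda\sigma$ one gets $\det\mathcal{A}_J=\lambda^2\sigma^2\left(\alpha_1-\psi_{p,1,\beta,\sigma\lambda}\right)\left(\alpha_2-1\right)$, from which the three cases of (\ref{eq: CC-NK-2s-UMP}) follow by elementary sign bookkeeping. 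So the determinant approach buys an exact, purely algebraic derivation, while the paper's graphical argument buys economic transparency (which curve is steeper than which).

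Two concrete errors need fixing before your plan goes through. First, your regime assignment is backwards: in the slack regime $(1-\xi)\hat{R}_t+\xi\hat{R}_t^{\ast}=\hat{R}_t^{\ast}=\psi\hat{\pi}_t$, so $A_1,B_1,\ldots$ coincide with the standard NK matrices, whereas it is the \emph{binding}-regime matrices $A_0,\ldots$ that pick up the surviving $\xi\psi\hat{\pi}_t$ feedback (plus the constant $-(1-\xi)\mu$); this is visible in the paper's $AD$ system (\ref{app: eq: AD-UMP}), where $AD^{TR}$ is unchanged and $AD^{ZLB}$ has slope involving $p-\xi\psi$. Your own final paragraph states the correct principle, so this is an internal inconsistency rather than a misunderstanding, but implemented as written in your second paragraph (binding regime "coincides with the standard NK case") it would collapse the $\xi$-dependence and reproduce Proposition \ref{prop: NK cutoff} instead of Proposition \ref{prop: NK cutoff UMP}. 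Second, your conjectured factorization --- the four determinants being, up to positive constants, the four single factors $(\psi-1)$, $(\xi\psi-1)$, $(\psi-\psi_{p,1,\beta,\sigma\lambda})$, $(\xi\psi-\psi_{p,1,\beta,\sigma\lambda})$ --- cannot be right, and your own case analysis would fail with it: in the middle region one has $(\psi-1)<0<(\psi-\psi_{p,1,\beta,\sigma\lambda})$, so those four quantities have mixed signs and the GLM condition would be violated, contradicting (\ref{eq: CC NK UMP mid}). The correct structure is the \emph{product} form above: each determinant pairs one transitory-state factor from $\left\{\psi-\psi_{p,1,\beta,\sigma\lambda},\ \xi\psi-\psi_{p,1,\beta,\sigma\lambda}\right\}$ with one absorbing-state factor from $\left\{\psi-1,\ \xi\psi-1\right\}$, so that in the middle region all four determinants are negative (positive times negative), in the high region all are positive, and in the low region all are positive (negative times negative); the fourth sign pattern is excluded by $\psi_{p,1,\beta,\sigma\lambda}\leq1$. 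With these two repairs, your argument delivers the proposition.
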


As $\xi $ goes to zero, the model reduces to the standard NK\ model (\ref%
{eq: NK}), and the coherency condition (\ref{eq: CC-NK-2s-UMP}) reduces to (%
\ref{eq: CC-NK-2s}). We already established that in that case there are no values of $\psi
>1$ that lead to coherency, i.e., an active Taylor rule violates the
coherency condition. However, when UMP\ is present and effective, i.e., $\xi>0$, condition (\ref{eq: CC NK UMP high}) shows that an active Taylor
rule can still lead to coherency, i.e., a MSV solution exists without
support restrictions. For example, the value $\psi =1.5$ for the Taylor rule
coefficient used in typical calibrations leads to coherency if $\xi >2/3$. This is consistent
with the estimation results reported in \cite{IkedaLiMavroeidisZanetti2020},
who find the identified set of $\xi$ to be $\left[ 0.74,0.76\right]$ using postwar U.S. data.

\subsection{Endogenous state variables}\label{s: endog}

Up to this point, we analysed models without endogenous dynamics. In this subsection, we describe the challenges posed by endogenous dynamics and discuss various solutions. 

We can add endogenous dynamics to the canonical model (\ref{eq: canon}) as follows:
\begin{equation}%
\begin{tabular}
[c]{l}%
$A_{s_{t}}Y_{t}+B_{s_{t}}Y_{t+1|t}+C_{s_{t}}X_{t}+D_{s_{t}}X_{t+1|t}+H_{s_{t}%
}Y_{t-1}=0$\\
$s_{t}=1_{\left\{  a^{\prime}Y_{t}+b^{\prime}Y_{t+1|t}+c^{\prime}%
X_{t}+d^{\prime}X_{t+1|t}+h^{\prime}Y_{t-1}>0\right\}  },$%
\end{tabular}
\label{eq: canon endog}%
\end{equation}
where $Y_{t}$ is a $n\times1$ vector of (nonpredetermined) endogenous variables, $X_{t}$ is a
$n_{x}\times1$ vector of exogenous state variables, $H_{s}$ are $n\times n$ coefficient matrices, $h$ is an $n\times1$ coefficient vector, and the
remaining coefficients were already defined above in (\ref{eq: canon}).

\paragraph{Example NK-ITR}\label{ex: NK inert}(NK model with Inertial Taylor rule)
A generalization of the basic three-equation NK model in \nameref{ex: NK} is obtained by replacing (\ref{eq: NK TR}) with
$\hat{R}_{t}=\max\{  -\mu,\allowbreak\phi\hat{R}_{t-1}+\psi\hat{\pi}_{t}%
+\psi_{x}\hat{x}_{t}+\nu_{t}\}  $. It can be put in the canonical form
(\ref{eq: canon endog}) with $Y_{t}=\left(  \hat{\pi}_{t},\hat{x}_{t},\hat
{R}_{t}\right)  ^{\prime},$ $X_{t}=\left(  u_{t},\epsilon_{t},\nu
_{t},1\right)  ^{\prime}$, and coefficient matrices given in Appendix 
\ref{app: s: coeffNKITR}. \hfill \openbox\medskip

As before, we study the existence of MSV solutions, which are of the form
$Y_{t}=f\left(  Y_{t-1},X_{t}\right)  $. We also assume as before that $X_{t}$
follows a $k$-state Markov chain with transition kernel $K$ and support
$\mathbf{X\in\Re}^{n_{x}\times k}$, so that the $i$th column of $\mathbf{X,}$ i.e., $\mathbf{X}e_{i},$ gives the value of $X_{t}$ in state $i=1,\ldots,k$.
In models without endogenous state variables (\ref{eq: canon}), we can represent the
MSV solution $Y_{t}=f\left(  X_{t}\right)  $ exactly by a constant $n\times k$ matrix
$\mathbf{Y}$, since $Y_{t}$ has exactly $k$ points of support
corresponding to the $k$ states of $X_{t}$ that are constant over time. 
However, with endogenous states the support of $Y_{t}$ will vary endogenously
over time along a MSV solution $Y_{t}=f\left(  Y_{t-1},X_{t}\right)$ through the evolution of $Y_{t-1}$, and thus, it cannot be characterized by a constant matrix $\mathbf{Y}$.
That is, the matrix of support points of $Y_t$ -- which we previously defined by the time-invariant matrix $\mathbf{Y}$ because the support of $X_t$ is time-invariant -- must now be a function of $Y_{t-1}$, too. Hence while without endogenous state variables along a MSV solution we have $E\left(  Y_{t+1}|Y_{t}=\mathbf{Y}e_{i}\right) =E\left(  Y_{t+1}|X_{t}=\mathbf{X}e_{i}\right)  =\mathbf{Y}K^{\prime}e_{i}$, when there are endogenous state variables we have $E(Y_{t+1}|Y_{t}\allowbreak=\mathbf{Y}_{t} e_{i},X_{t}\allowbreak=\mathbf{X}e_{i})\allowbreak=\mathbf{Y}_{t+1}^{i}K^{\prime}e_{i},$ where $\mathbf{Y}_{t+1}^{i}$ gives the support of $Y_{t+1}$ when $Y_{t}$ is in the $i$th state. The problem is that the support of $Y_{t}$ is rising exponentially for any given initial condition $Y_{0},$ so the MSV solution cannot be represented by any finite-dimensional system of piecewise linear equations. This makes the analysis of Subsection \ref{s: piecewise linear} generally inapplicable.

To make progress, we will consider solving the model backwards from some
terminal condition in a way that nests the case of no endogenous dynamics that
we studied earlier. We will assume, for simplicity of notation, that the
endogenous state variable is a scalar, i.e., $H_{s_{t}}Y_{t-1}=h_{s_{t}%
}y_{t-1}$ in (\ref{eq: canon endog}), where $h_{s_{t}}$ is $n\times1$ and
$y_{t}:=g^{\prime}Y_{t}$ is a linear combination of $Y_{t}$, for some known
$n\times1$ vector $g$, as in \nameref{ex: NK inert}, where $g=\left(
0,0,1\right)  ^{\prime}$.\footnote{Having multiple endogenous state variables
will increase the number of coefficients we need to solve, but will not
increase the complexity of the problem: the solution $\mathbf{Y}%
_{t}=\mathbf{G}y_{t-1}+\mathbf{Z}$ will need to be replaced by $\mathbf{Y}%
_{t}=\sum_{l=1}^{n}\mathbf{G}_{l}Y_{l,t-1}+\mathbf{Z}$.} Next, suppose there
is a date $T$ such that for all $t\geq T$, the MSV solution $f\left(
y_{t-1},X_{t}\right) $ can be represented in the form
$\mathbf{Y}_{t}=\mathbf{G}y_{t-1}+\mathbf{Z}$, with $n\times k$ matrices
$\mathbf{G}$ and $\mathbf{Z}$. This is a matrix representation of a general possibly nonlinear function $f(y_{t-1},X_t)$. The matrix $\mathbf{Z}$ represents the part of $Y_t$ that depends only on the exogenous variables $X_t$. When there are no endogenous states, we have $\mathbf{G}=0$, so $\mathbf{Y}_{t}=\mathbf{Z}$, which we denoted by $\mathbf{Y}$ in (\ref{eq: canon i}) above. Each column of $\mathbf{G}$ gives the coefficients on $y_{t-1}$ in the MSV solution that correspond to each different state of $X_t$. For example, if $k=2$, then the endogenous dynamics in the low state $i=1$, say, can be different from the high state $i=2$. If $i=1$ were a ZIR and $i=2$ were a PIR, the endogenous dynamics could differ across regimes, as shown analytically in the example below. 

In the case of no endogenous dynamics, we analysed the solution of $\mathbf{Y}(=\mathbf{Z}$ here) using the method of undetermined coefficients, see equation (\ref{eq: canon i}) above. The corresponding equation for the model with endogenous dynamics is 
\begin{align}\label{eq: equations for G and Z}
0 &  =\left(  A_{s_{t,i}}\mathbf{G}e_{i}+h_{s_{t,i}}+B_{s_{t,i}}%
\mathbf{G}K^{\prime}e_{i}g^{\prime}\mathbf{G}e_{i}\right)  y_{t-1}\\ 
&  +\left(  A_{s_{t,i}}\mathbf{Z}+B_{s_{t,i}}\mathbf{G}K^{\prime}%
e_{i}g^{\prime}\mathbf{Z}+B_{s_{t,i}}\mathbf{Z}K^{\prime}+C_{s_{t,i}%
}\mathbf{X}+D_{s_{t,i}}\mathbf{X}K^{\prime}\right)  e_{i}, \nonumber
\end{align}
for all $i=1,\ldots,k$, see \ref{app: s: bruteforcealg}. 
Note that (\ref{eq: equations for G and Z}) exactly nests (\ref{eq: canon i}) when $h_{s_{t,i}}=0$, $G=0$ and $\mathbf{Z}=\mathbf{Y}$. Given a particular regime configuration $J\subseteq
\left\{  1,\ldots,k\right\}$, which determines in which of the states the constraint is slack, $s_{t,i}=1$, (\ref{eq: equations for G and Z}) gives a system of $2nk$ polynomial equations in the $2nk$ unknowns $\mathbf{G}$ and $\mathbf{Z}$ by equating the coefficients on $y_{t-1}$ and the constant terms to zero, respectively. Unfortunately, because this system of equations is not piecewise linear in $\mathbf{G}$ and $\mathbf{Z}$, we cannot use the \nameref{th: GLM} Theorem to check coherency.
Therefore, one would need to resort to a brute force method of going through
all possible $2^{k}$ regime configurations $J$ and checking if there are any
solutions 
that satisfy the inequality constraints. Since $y_{t-1}$ is endogenous, one would need
to solve the system backwards subject to some initial condition $y_{0}$, and
then check coherency for all possible values of $y_{0}$. An algorithm for
doing this is given in Appendix 
\ref{app: s: bruteforcealg}
.\footnote{It is worth noting that solving the model
backwards from $T$ to $1$ requires (up to)\ $2^{k\left(  T-1\right)  }$
calculations, in order to consider all possible regime paths from $1$ to
$T-1$. This is an NP hard problem even for fixed $k$. One could drastically
reduce number of calculations by limiting the possible regime transitions, e.g., as
in \cite{eggertsson2021toolkit}, at the cost of making the conditions for
coherency even stricter.}

\paragraph{\nameref{ex: NK inert} continued}
Again, it is possible to obtain some analytical results in the special case of the model in \nameref{ex: NK inert} if we assume, as in Proposition \ref{prop: NK-TR sup res}, that $\psi_{x}=u_t=\nu_t=0$ and that $\epsilon_{t}=-\sigma \hat{M}_{t+1|t}$, where $M_t$ satisfies Assumption \ref{ass: M nonlinear absorbing}. 
Here, we report results on the existence of a MSV solution such that the economy is in a ZIR in the temporary state where $\hat{M}_{t} =-r^{L}$ and then converges to a PIR in the absorbing state where $\hat{M}_{t} =0$. In other words, we report the support restrictions needed for a ZIR to exist in the temporary state, given that the agents expect to move to the stable manifold of the PIR system as soon as the shock vanishes. Appendix 
\ref{app: s: NKITR_analytics} shows that such a solution exists if and only if
\begin{subequations}
\label{eq: supp restr NK inert}
\begin{align}
\text{either}\quad &  \theta>1\text{ and }r^{-1}\leq\pi_{\ast} \text{ and }%
-r^{L}\geq-\bar{r}^{L}
,\label{eq: supp restr NK inert E}\\
\text{or}\qquad &  \theta\leq1,\text{ }r^{-1}\leq\pi_{\ast}\text{ and }
-r^{L}\leq-\bar{r}^{L},\label{eq: supp restr NK inert B}
\end{align}
\end{subequations}
where\vspace{-36pt}
\begin{equation}
-\bar{r}^{L} =\mu\left(  \frac{\psi
-p}{\psi p}+\frac{\theta}{\psi}+\frac{\phi}{\psi}\left(  1-\theta\right)  -(1-p)\frac
{\lambda\gamma_{x}+\gamma_{\pi}\left[  \beta(1-p)+\lambda\sigma\right]
}{\lambda\sigma p}\right), \label{eq: rl NK-ITR}
\end{equation}
$\theta:=\frac{\left(  1-p\right)  \left(1-p\beta\right)}{p\sigma\lambda}$ and $\gamma_{x},\gamma_{\pi}$ are functions of the model's parameters. $\gamma_{x},\gamma_{\pi}$ define the slope of the stable manifold of the PIR system such that in the absorbing state $\hat{\pi}_t$ and $\hat{x}_t$  will travel to the PIR steady state along the paths $\hat{\pi}_t = \gamma_{\pi} \hat{R}_{t-1}$ and $\hat{x}_t=\gamma_{x}\hat{R}_{t-1}$. 
This result nests the corresponding analysis in Proposition \ref{prop: NK-TR sup res}, because if $\phi=0$ then $\gamma_{\pi}=\gamma_{x}=0$ and (\ref{eq: supp restr NK inert B}) collapses to (\ref{eq: supp restr NK B}), where  $-r^{L}\leq\log(r\pi_{\ast})\allowbreak\left(  \frac{\psi-p}{\psi p}+\frac{\theta
}{\psi}\right)$. (\ref{eq: supp restr NK inert}) shows that, for a ZIR-PIR to exist, the negative shock should be large enough in absolute value when $\theta>1$, while it should be small enough in absolute value when $\theta \leq 1.$ Numerical results (not reported) show that, if that condition is not satisfied, then there is a PIR-PIR solution if $\theta>1$, and there is no solution if $\theta\leq 1$, as implied by Proposition \ref{prop: NK-TR sup res} for the case of no inertia.

The effects of inertia on the support restrictions needed for the existence of an equilibrium when $\theta\leq 1$ can be evaluated numerically, since we do not have analytic expressions for $\gamma_\pi$ and $\gamma_x$. Figure \ref{fig: bound} shows the lower bound $\bar{r}^L$ as a function of $\phi$ for the two calibrations given in Table \ref{t: calibrations} where $\theta\leq 1$.
Given that the shock needs to be above the lower bound (i.e., smaller in absolute value) for the equilibrium to exists, the graph reveals that inertia relax the support restrictions.
\begin{figure}[h!]
\centering
\includegraphics[width =6in]{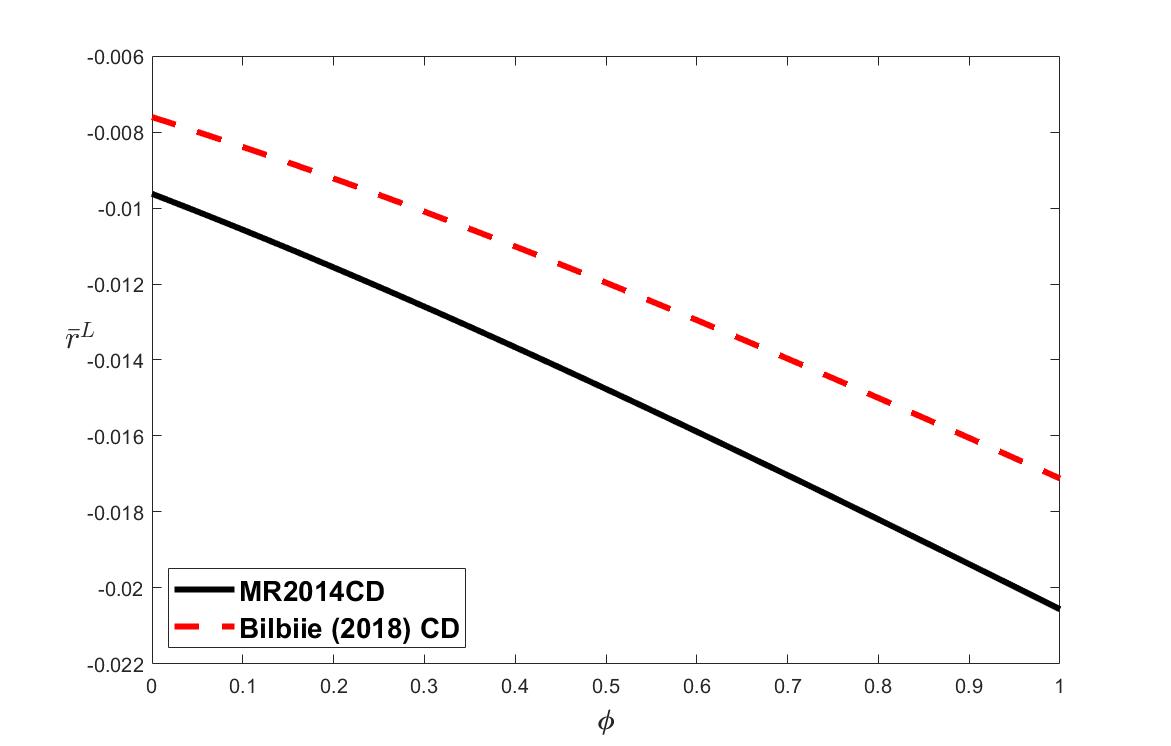}
\caption{$\bar{r}^{L}$ as a function of $\phi$, using parameters from \cite{MertensRavn2014} and \cite{Bilbiie2019neofisher} calibrations shown in Table \ref{t: calibrations} where $\theta<1$.}
\label{fig: bound}
\end{figure}


\subsubsection{Quasi differencing}\label{s: quasi diff}
In a very special case, we can analyse the coherency of the model using the \nameref{th: GLM} Theorem.

\begin{assumption}\label{ass: ST} Assume
the first $n_{1}$ elements $Y_{1t}$ of $Y_{t}=\left(  Y_{1t}^{\prime},Y_{2t}^{\prime
}\right)  ^{\prime}$ in model (\ref{eq: canon}) are predetermined,
$B_{0},B_{1}$ are invertible, and there exists a $n\times n$ invertible matrix
$Q$ such that $Q^{-1}B_{s}^{-1}A_{s}Q=\Lambda_{s}$ is upper triangular for
both $s=0,1.$ Let $Q_{1}$ denote the first $n_{1}$ elements of $Q,$ and assume also
that $a^{\prime}Q_{1}=b^{\prime}Q_{1}=0.$ 
\end{assumption}

The first part of Assumption \ref{ass: ST} is satisfied for commuting pairs of
matrices, but commutation is not necessary.\footnote{For example, a pair of
non-commuting triangular matrices is trivially simultaneously triangularized.
One can check Assumption \ref{ass: ST} using the algorithm of \cite{Dubi2009}.} 
This assumption is clearly restrictive, but not empty, see \nameref{ex: ACS inert} below.

If Assumption \ref{ass: ST} holds, then we can remove the predetermined variables $Y_{1t}$ from (\ref{eq: canon}) by premultiplying the equation by
some $n_{2}\times n$ matrix $(-\Gamma,I_{n_2})$, where $\Gamma$ is $n_2\times n_1$
(see Appendix 
\ref{app: s: simutri}). Intuitively, we transform the original system to one that does
not involve any predetermined variables by taking a `quasi difference' of
$Y_{2t}$ from the predetermined variables $Y_{1t},$ say $\widetilde{Y}_{t} := Y_{2t}-\Gamma Y_{1t}$.
This is typically possible in linear models, but it is not, in general,
possible in piecewise linear models because the `quasi difference'
coefficient $\Gamma$ needs to be the same across all regimes. Assumption
\ref{ass: ST} ensures this is possible. Since the model in $\widetilde{Y}_{t}$ does not have any predetermined variables and is still of the form (\ref{eq: canon}),
we can analyze its coherency using the \nameref{th: GLM} Theorem as in
Subsection \ref{s: piecewise linear}.

\paragraph{Example ACS-STR}

\label{ex: ACS inert}Consider a generalization of \nameref{ex: ACS} where
the Taylor rule $\hat{R}_t=\max\left\{-\mu, \psi\hat{\pi}_t\right\}$ is replaced with $\hat{R}_{t}=\max\left\{  -\mu,\phi\hat
{R}_{t-1}+\psi\hat{\pi}_{t}\right\}  $. Appendix
\ref{app: s: ACS-STR} shows that this
model satisfies Assumption \ref{ass: ST}, applies the \nameref{th: GLM} Theorem and analyzes the solutions. When the exogenous shock satisfies
Assumption \ref{ass: M nonlinear absorbing}, the coherency condition is
$\psi<p-\phi$, which generalizes the one found earlier without
inertia: $\psi<p$. When $\psi>1$, the requisite support restriction on $r^{L}$
is $-r^{L}<\mu\frac{\psi  +\phi-p}{p\psi}$. For $\phi>0$, this support restriction is weaker than the one for the noninertial model given in (\ref{eq: support restr lin ACS}). \openbox

\section{The incompleteness problem}\label{s: incompleteness}

This Section explores the multiplicity of the MSV solutions in piecewise linear models of the form (\ref{eq: canon}). The main message is that when the CC condition of the \nameref{th: GLM} Theorem is not satisfied, but the support of the distribution of the shocks is restricted appropriately, there are many more MSV solutions than are typically considered in the literature. This is distinct from the usual issue of indeterminacy in models without occasionally binding constraints. 

As we discussed in Subsection \ref{s: piecewise linear}, when the state variables $X_t$ follow a $k$-state Markov chain, these models can be written as $F\left(  \mathbf{Y}%
\right)  =\kappa\left(  \mathbf{X}\right)$, where $F(\cdot)$ is the piecewise linear function (\ref{eq: F}), and all possible solutions correspond to
$\mathbf{Y=}\mathcal{A}_{J}^{-1}\kappa\left(  \mathbf{X}\right)$ for each
$J\subseteq\left\{  1,\ldots,k\right\}$. Thus, there are up to $2^k$ possible MSV solutions. 


\begin{figure}
[ptb]
\begin{center}
\centering\includegraphics[width =6in]{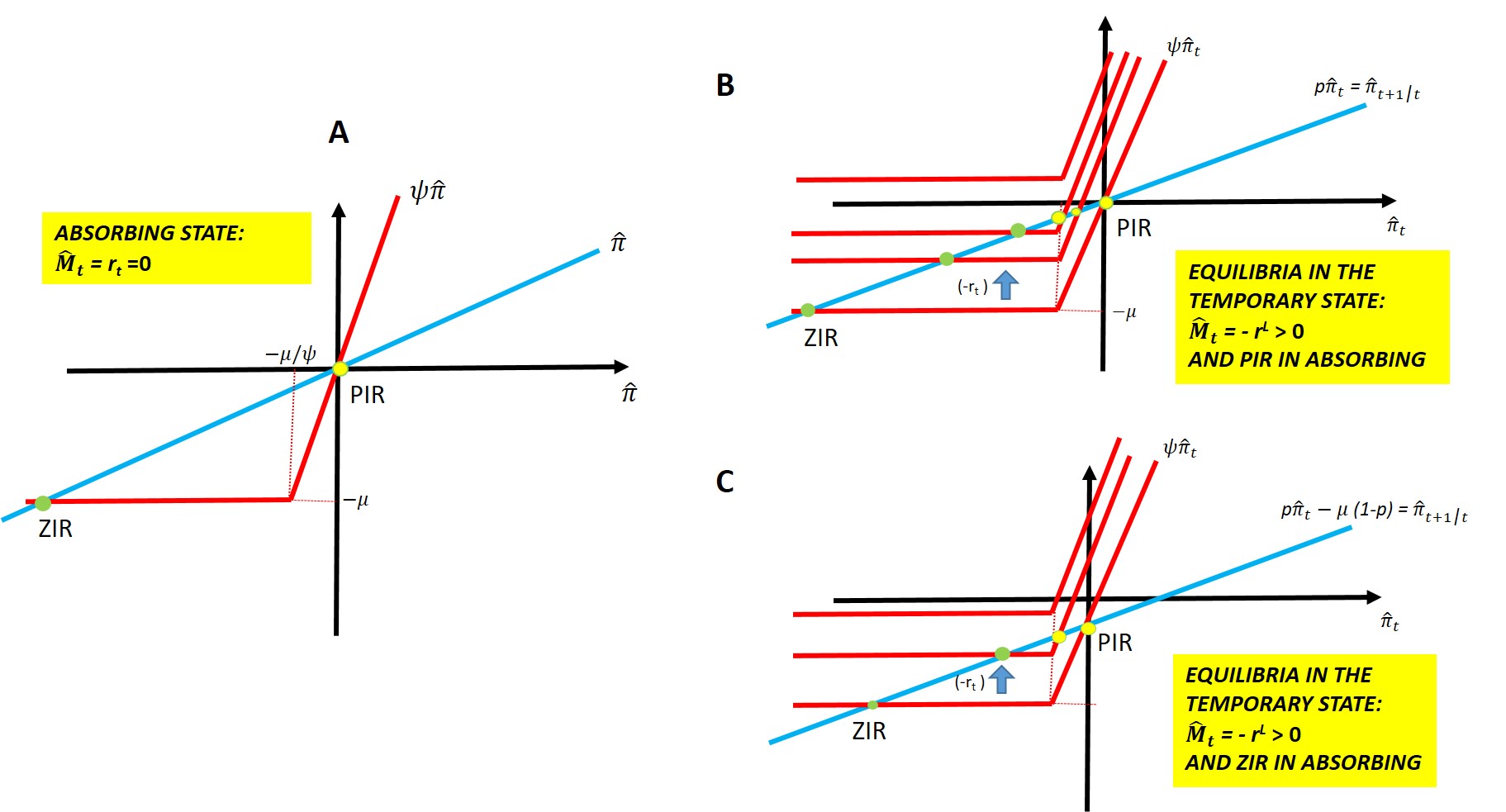}
\caption{The possible equilibrium outcomes in \nameref{ex: ACS}}
\label{fig: 2states}
\end{center}
\end{figure}

\paragraph{\nameref{ex: ACS} continued} 
To start with, consider the special case where $M_t$ satisfies Assumption \ref{ass: M nonlinear absorbing}. A corollary of Proposition \ref{prop: NK cutoff} (with $q=1$ and $\sigma=\infty$) shows that the CC condition is satisfied if and only if $\psi<p$. Hence, for $\psi\geq p$ there will be up to 4 MSV solutions if the support of $M_t$ allows it. These are shown in Figure \ref{fig: 2states} and Table \ref{tab: equilibria simple} for $\psi>1$, see Appendix
\ref{app:simple ex} for the derivation. The equilibrium typically used in the literature is the second one in Table  \ref{tab: equilibria simple}: ZIR in the transitory and PIR in the absorbing state. This is a fairly intuitive choice in this simple case, but there is no clear choice in more general scenarios with no absorbing state and $k>2$.
\setlength{\tabcolsep}{0.7em} 
\begin{table}[h!]
   \centering
    \caption{The four possible equilibria in \nameref{ex: ACS} when $\psi>1$}
    \label{tab: equilibria simple}
    \begin{tabular}{ccc}
  \hline \hline 
        Analytical Solution & & Type of Equilibrium \\ \hline \hline 
         & & \\
       
          \begin{math}
\hat{\pi}_{t}=\left\{
\begin{array}
[c]{ll}
r^{L}\frac{p}{\psi-p} & \text{if }\hat{M}_{t}=-r^{L}\in\left(  0,\mu
\frac{\psi-p}{\psi p}\right) \\
0 & \text{if }\hat{M}_{t}=0
\end{array}
\right. 
         \end{math} & & (PIR, PIR)  \\ & & \\\hline 
         & & \\
         
         \begin{math}
\hat{\pi}_{t}   =\left\{
\begin{array}
[c]{ll}
-r^{L}-\frac{\mu}{p}, & \text{if }\hat{M}_{t}=-r^{L}\in\left(  0,\mu\frac
{\psi-p}{\psi p}\right)  \\
0 & \text{if }\hat{M}_{t}=0
\end{array}
\right.
         \end{math} & & (ZIR, PIR) \\ & & \\\hline 
         & & \\

        \begin{math}
\hat{\pi}_{t}   =\left\{
\begin{array}
[c]{ll}
\frac{pr^{L}-\left(  1-p\right)  \mu}{\psi-p}, & \text{if }\hat{M}_{t}
=-r^{L}\in\left(  0,\mu\frac{\psi-1}{\psi}\right) \\
-\mu & \text{if }\hat{M}_{t}=0
\end{array}
\right.
        \end{math} & & (PIR, ZIR) \\ & & \\\hline 
         & & \\

        \begin{math}
\hat{\pi}_{t}   =\left\{
\begin{array}
[c]{ll}%
-r^{L}-\mu, & \text{if }\hat{M}_{t}=-r^{L}\in\left(  0,\mu\frac{\psi-1}{\psi
}\right) \\
-\mu, & \text{if }\hat{M}_{t}=0.
\end{array}
\right.
        \end{math} & & (ZIR, ZIR)  \\ & & \\\hline 
         & & \\
    \end{tabular}
\end{table}

To demonstrate the problem, we consider the case where $\hat{M}_t$ is described by a $k$-state \cite{Rouwenhorst1995} approximation of an AR(1) process $\hat{M}_t = \rho\hat{M}_{t-1} + \sigma_\varepsilon \varepsilon_t$, with parameter values $\rho=0.9$ and $\sigma_\varepsilon=0.0007$, and we set $\psi = 1.5$ and $\mu=2\log(1.005)$ following the calibration in ACS, so that the CC condition fails. Figure \ref{fig: mult sol k=3} reports the 8 MSV solutions corresponding to $k=3$. We notice that the first solution is at the ZIR for all values of the shock, while the last solution is the opposite, always at PIR. Unsurprisingly, those two solutions are linear in $\hat{M}_{t}.$ The remaining 6 solutions are non-linear and half of them are non-monotonic in $\hat{M}_{t}$. In Appendix
\ref{app: s: kfigures}, we present results for $k>3$, showing that the number of MSV solutions increases with $k$. In all cases, two solutions correspond to ZIR-only and PIR-only equilibria. For any $k$, it is possible to impose restrictions on the support of the distribution of the shocks such that we are always at ZIR or always at PIR. \openbox

\begin{figure}[htb]
\centering\includegraphics{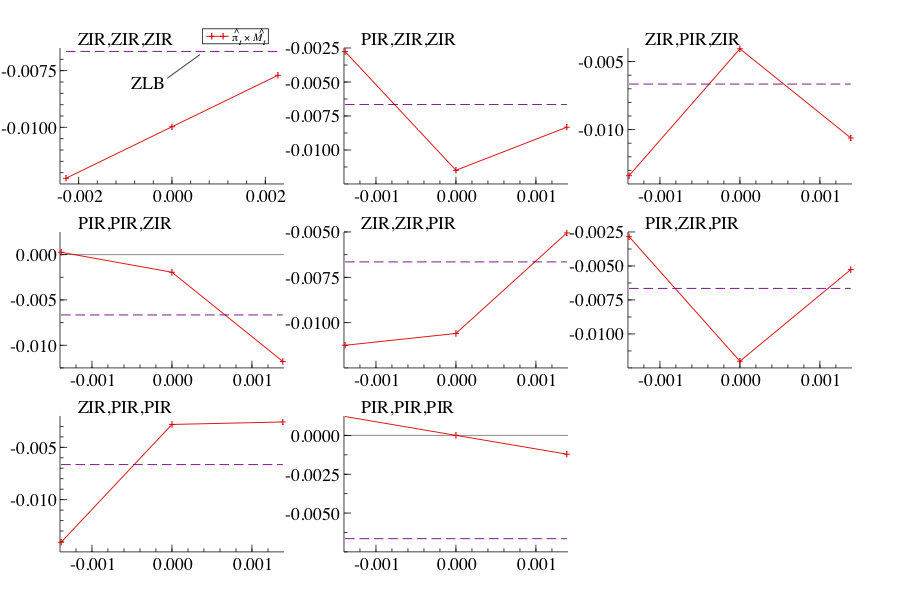}
\caption{MSV solutions of model:  $\hat{\pi}_{t|t+1} = max(-\mu,\psi \hat{\pi}_t)+\hat{M}_{t+1|t}$, when $\mu=0.01$, $\psi=1.5$ and $\hat{M}_t$ follows a 3-state Markov Chain with mean 0, conditional st.~dev.~$\sigma = 0.0007$, and autocorrelation $\rho=0.9$.\label{fig: mult sol k=3}} 
\end{figure}

\paragraph{\nameref{ex: NK} continued}

Consider again the NK model with $u_t=\nu_t=\psi_x=0$ and a Rouwenhorst approximation to an AR(1) process for the AD shock $\epsilon_t$. 
Figure \ref{fig: MR-NK} plots the decision rules for $\hat{\pi}_t$, $\hat{x}_t$ and $\hat{R}_t$, as functions of $\epsilon_t$,
associated with various MSV equilibria of the model for $k=20,$ using the parameter values from \cite{MertensRavn2014}, i.e., the left panel uses ``MR2014 CD'' and the right panel  ``MR2014 FD''.\footnote{The support of the distribution of the shock $\epsilon_t$ has been carefully chosen to avoid incoherency. In this case, because the distribution of the shock is symmetric, the necessary support restrictions can be imposed by manipulating the standard deviation of the shock, denoted by $\sigma_\epsilon$. Larger values yield more dispersion, so when $\sigma_\epsilon$ gets sufficiently large, there are no MSV equilibria.}
The graphs on the left report the four MSV equilibria arising from the calibration in which  $\rho\sigma\lambda>\left(  1-\rho\right)  \left(
1-\rho\beta\right)$. We notice that two of those equilibria have $\hat{\pi}_t, \hat{x}_t$ respond positively to the AD shock, while the other two equilibria are exactly the opposite. The graphs on the right report the case  $\rho\sigma\lambda<\left(  1-\rho\right)  \left(
1-\rho\beta\right)$, where now only two MSV equilibria have been found, and they both have the property that the policy functions are increasing in the AD shock. Moreover, changing the parameters of the structural model or of the shocks yields a different number of solutions. For example, with a low variance of the shock and $\sigma=4$, the ``MR2014 CD'' case in Table \ref{t: calibrations} delivers 8 solutions.
\hfill\openbox

\begin{figure}%
\centering
    \begin{subfigure}{0.495\textwidth}
        \includegraphics[width=\textwidth]{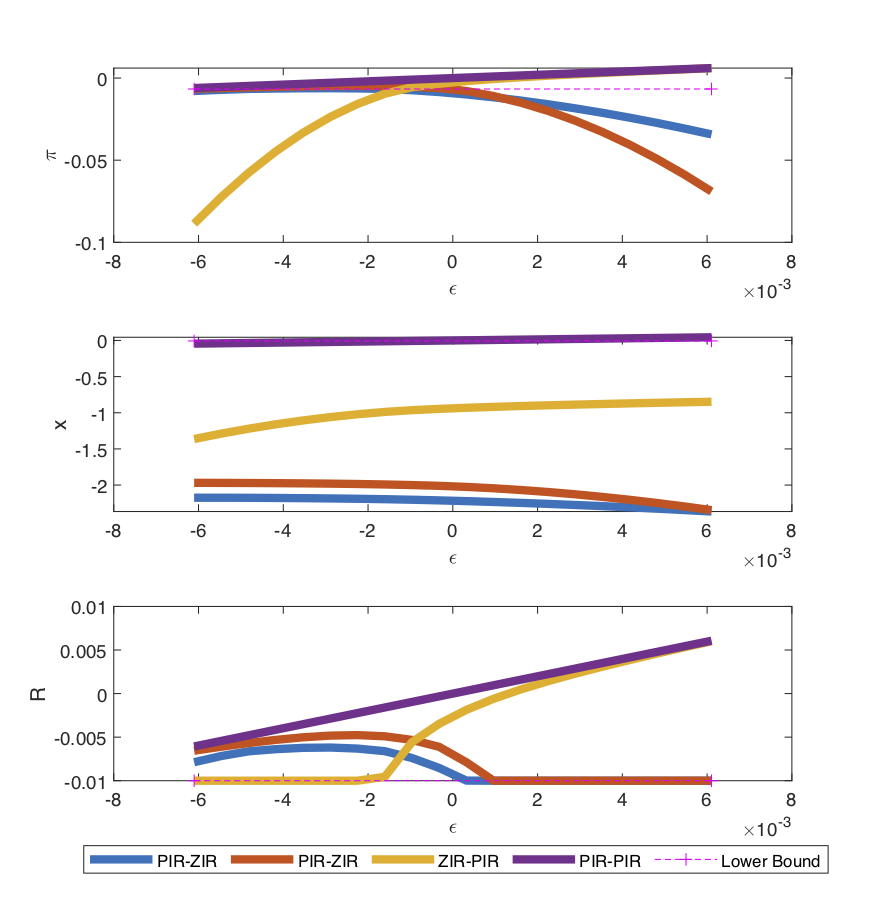}
        \caption{Confidence-driven}
        \label{fig: MR-Ben}
    \end{subfigure}
        \begin{subfigure}{0.495\textwidth}
        \includegraphics[width=\textwidth]{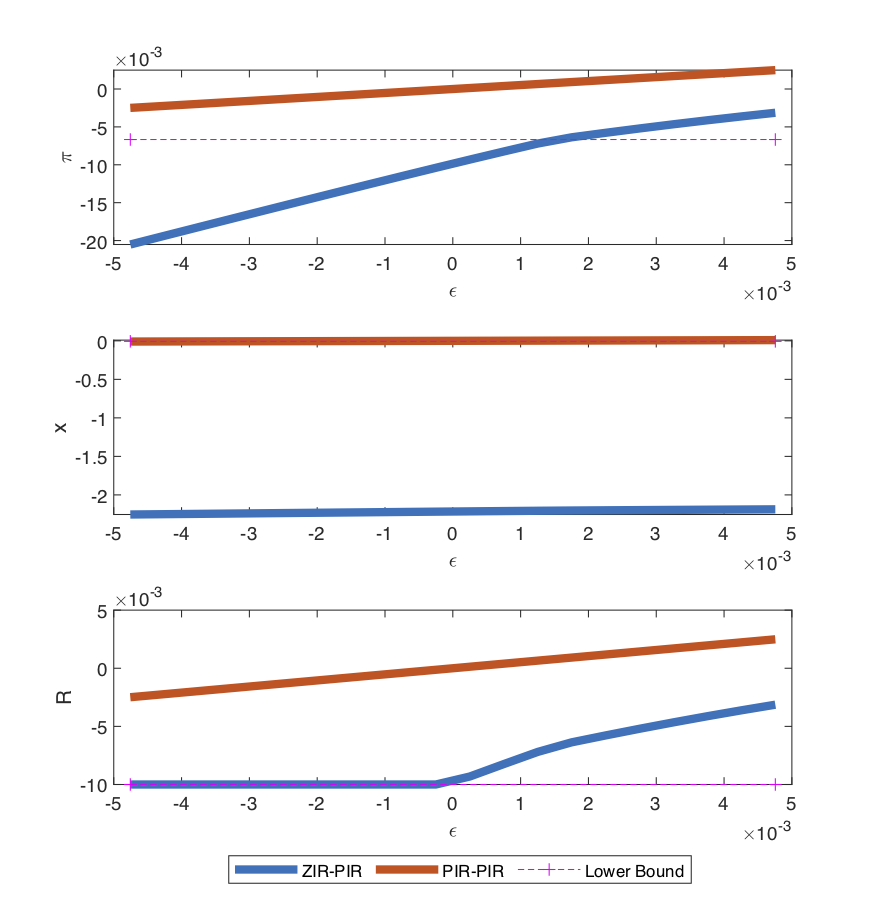}
        \caption{Fundamental-driven}
        \label{fig: MR-Egg}
    \end{subfigure}
\caption{Decision Rules associated with different MSV solutions (equilibria) of the NK model, using parameters from \citetpos{MertensRavn2014} calibration shown in Table \ref{t: calibrations} and $k=20$. The figures on the left correspond to $\rho=0.7$ with $\sigma_\epsilon = 0.0011$, while on the right $\rho=0.4$ and $\sigma_\epsilon=0.0014$.}
\label{fig: MR-NK}
\end{figure}

\section{Conclusions\label{s: conclusions}}
This paper highlights a seemingly overlooked problem in rational expectation models with an occasionally binding constraint. The constraint might make the model incoherent or incomplete. 

We propose a method for checking the coherency and completeness (CC) condition, that is, the existence and uniqueness of equilibria in piecewise linear DGSE models with a ZLB constraint based on \cite{GourierouxLaffontMonfort1980}. When applied to the typical NK model, this method shows that  the CC condition generally violates the Taylor principle. Hence, the case typically analysed in the literature is either incoherent or incomplete. This raises two main issues future research should focus on when solving or estimating these models.

First, we have shown that there must be restrictions on the distribution of the shocks to ensure the existence of equilibria. These support restrictions are time-varying and, in the case of multiple shocks, their support is not rectangular, i.e., the shocks cannot be independent of each other. This raises a first question regarding the interpretation of these shocks: in what sense are they structural if they cannot be independent? A second related question regards the estimation of these models: what are the implications of these restrictions for the correct form of the likelihood? 

Second, we have shown there are typically (many) more equilibria than currently reported in the literature. These findings raise questions about the properties of existing numerical solution algorithms, for example, which solutions among the many possible ones do they find and why. 

We have not found a computationally feasible way to analyse coherency and completeness in forward-looking models in which the variables are continuously distributed. This problem is hard because of the infinite dimensionality induced by the rational expectation operator, and the fact that the computations required for discrete approximations are NP hard. 
This is an important challenge for future research.

Finally, our results highlight the role of unconventional monetary policy in ensuring coherency. An incoherent model cannot be an operational model of the economy. Hence, the need for support restrictions can be positively interpreted as an implicit need for a different policy reaction to catastrophic shocks to ensure the economy does not collapse. This suggests a direction for amending the basic NK model, by modelling monetary policy in such a way that, conditional on bad shocks hitting the economy and conventional interest rate policy being constrained by the ZLB, the use of unconventional monetary policies offers a route to solving the incoherency problem. This route is not only promising, but, even more importantly, realistic: central banks engaged in massive operations through unconventional monetary policy measures (beyond the standard interest rate policy) in response to the large negative shocks causing the Great Financial Crisis and the COVID-19 pandemic. Future work should also study whether it is possible to design fiscal policy to ensure equilibrium existence \citep[e.g.,][]{NakataSchmidt2020}. A takeaway from of our paper, therefore, is to warn that considering a ZLB constraint on monetary policy requires an explicit modelling of unconventional monetary policies (or some other mechanisms) to avoid incoherency.



\bibliographystyle{elsarticle-harv}
\bibliography{UL}



\newpage
\setcounter{equation}{0} \renewcommand{\theequation}{A\arabic{equation}}
\appendix
\section{Appendix}

\subsection{Derivation of results in Subsection \ref{s: ACS simple example}}\label{app: s: nonlinear ACS}
\begin{proof}[Proof of Proposition \ref{prop: nonlinear ACS}]

Let $M^{t}=\left(  M_{t},...,M_{0}\right)  $ denote the history of $M_{t}.$
We consider fundamental solutions $f_{\pi_{t}}\left(  M^{t}\right)$. Let $M_{L}%
^{t}=\left(  \frac{e^{-r^{L}}}{r},...,\frac{e^{-r^{L}}}{r}\right)  $ denote a
path along which $M_{t}$ is in the transitory state. It follows that (with
slight abuse of notation)
\begin{equation}
E\left(  \left.  \frac{M_{t+1}}{\pi_{t+1}}\right\vert M_{L}^{t}\right)
=p\frac{\frac{e^{-r^{L}}}{r}}{f_{\pi_{t+1}}\left(  \frac{e^{-r^{L}}}{r},M_{L}%
^{t}\right)  }+\left(  1-p\right)  \frac{r^{-1}}{f_{\pi_{t+1}}\left(  r^{-1}%
,M_{L}^{t}\right)  }. \label{eq: exp trans}%
\end{equation}
Next, if $M_{t+1}=r^{-1}$, we have $
E\left(  \left.  \frac{M_{t+2}}{\pi_{t+2}}\right\vert M_{t+1}=r^{-1},M_{L}%
^{t}\right) \allowbreak =\frac{r^{-1}}{f_{\pi_{t+2}}\left(  r^{-1},r^{-1},M_{L}^{t}\right)
}$, so that after taking logs and re-arranging, (\ref{eq: ACS nonlinear}) becomes
\begin{equation}
f_{\hat{\pi}_{t+2}}\left(  0,0,\hat{M}_{L}^{t}\right)  =\max\left\{  -\mu,\psi
f_{\hat{\pi}_{t+1}}\left(  0,\hat{M}_{L}^{t}\right)  \right\}  ,
\label{eq: abs in logs}%
\end{equation}
where $f_{\hat{\pi}_{t}}\left(  \cdot\right)  :=\log f_{\pi_{t}}\left(  \cdot\right)-\log(\pi_*)$,
$\mu:=\log (r\pi_*)$, and $\hat{M}_{t}:=\log M_{t}+\log r$, the latter being in
log-deviation from its absorbing state. We have already established the support restriction $r\pi_*\geq1$ in the main text after Proposition \ref{prop: nonlinear ACS}, which means $\mu\geq0$. Because $\psi>1$, the
difference equation (\ref{eq: abs in logs}) has two steady states, $-\mu$ and $0$, corresponding to ZIR and PIR, respectively. Moreover, the ZIR steady state is stable, while the PIR is unstable. Therefore, for stable equilibria we must have that $f_{\hat{\pi}_{t+1}}\left(  0,\hat{M}_{L}^{t}\right)  \leq 0$, for if $f_{\hat{\pi}_{t+1}}\left(  0,\hat{M}_{L}%
^{t}\right)  >0$, $f_{\hat{\pi}_{t+s}}\left(  0_{s\times1},\hat{M}_{L}^{t}\right)
$ will grow exponentially without bound. So, a stable fundamental solution
must have $f_{\hat{\pi}_{t+1}}\left(  0,\hat{M}_{L}^{t}\right)  \leq0,$ or
equivalently $f_{\pi_{t+1}}\left(  r^{-1},M_{L}^{t}\right)  \leq \pi_*.$

Setting $f_{\pi_{t+1}}\left(  r^{-1},M_{L}^{t}\right)  =\bar{\pi}\leq \pi_*$ in
(\ref{eq: exp trans}), substituting for $E\left(  \left.
\frac{M_{t+1}}{\pi_{t+1}}\right\vert M_{L}^{t}\right)  $ in
(\ref{eq: ACS nonlinear}) and rearranging yields
\begin{equation}
\pi_{t+1}^{L}=\frac{\bar{\pi}p\max\left\{ 1,r\pi_*\left(  \pi_{t}^{L}/\pi_*\right)
^{\psi}\right\}  e^{-r^{L}}}{r\bar{\pi}-\left(  1-p\right)  \max\left\{
1,r \pi_*\left(  \pi_{t}^{L}/\pi_*\right)  ^{\psi}\right\}  },\quad\pi_{t}^{L}<\pi_*\left(
\frac{\bar{\pi}/\pi_*}{1-p}\right)  ^{1/\psi},\label{eq: trans lev}%
\end{equation}
where $\pi_{t}^{L}:=f_{\pi_{t}}\left(  M_{L}^{t}\right)$, for compactness of notation, and the bound on
$\pi_{t}^{L}$ is required for $\pi_{t+1}^{L}$ to be positive. Take logs and define $\hat{\pi}_{t}^{L}:=\log \pi_{t}^{L}-\log\pi_*$, then (\ref{eq: trans lev}) can be written as
\begin{equation}
\hat{\pi}_{t+1}^{L}=\left\{
\begin{array}
[c]{ll}%
\log\frac{p\bar{\pi}/\pi_*}{r\bar{\pi}-1+p}-r^{L}, & \hat{\pi}_{t}^{L}\leq-\frac{\mu
}{\psi}\\
\log\frac{p\bar{\pi}/\pi_*}{\bar{\pi}/\pi_*-\left(  1-p\right)  e^{\psi\hat{\pi}_{t}^{L}}
}+\psi\hat{\pi}_{t}^{L}-r^{L}, & -\frac{\mu}{\psi}<\hat{\pi}_{t}^{L}< \bar{\hat{\pi}}_{t}^{L}=\frac
{\log\bar{\pi}/\pi_*-\log\left(  1-p\right)  }{\psi}.
\end{array}
\right.  \label{eq: trans}%
\end{equation}
\begin{figure}[ptb]%
\centering
\includegraphics[height=3.3503in,width=5.5in]
{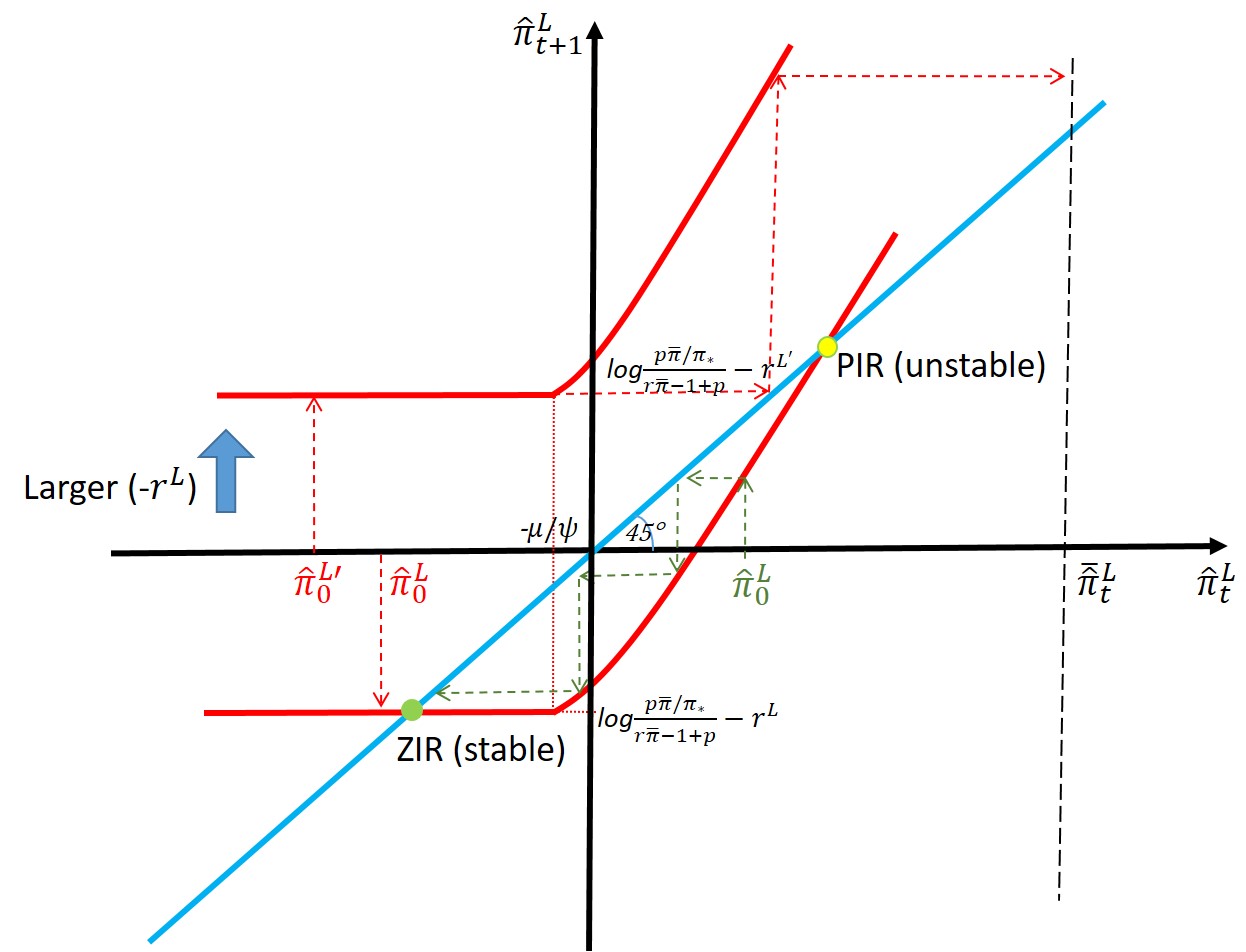}
\caption{Plot of (\ref{eq: trans}) for different values of $-r^{L^\prime}>-r^{L}>0.$}%
\label{fig: ACS nonlinear}%
\end{figure}

Figure \ref{fig: ACS nonlinear} plots (\ref{eq: trans}) 
against $\hat{\pi}_t^L$ together with the $\ang{45}$ line. We distinguish two cases. The first case is when the curve intersects with the
$\ang{45}$ line, so that (\ref{eq: trans}) has two (generic) steady states. This
happens when the kink in (\ref{eq: trans}) is below the $\ang{45}$ line. Noting that the kink is given by $(\hat{\pi}_t^L=-\frac{\mu}{\psi};\allowbreak\hat{\pi}_{t+1}^L=\log\frac{p\bar{\pi}/\pi_*}{r\bar{\pi}-1+p}\allowbreak-r^{L})$, then the condition $\hat{\pi}_t^L>\hat{\pi}_{t+1}^L$ becomes 
\begin{equation}
-r^{L}\leq-\frac{\mu}{\psi}-\log\frac{p\bar{\pi}/\pi_*}{r\bar{\pi}-1+p}\leq-\frac
{\log(r\pi_*)}{\psi}-\log\frac{p}{r\pi_*-1+p},\label{eq: supprestr nonlinear ACS}%
\end{equation}
where the second inequality holds because $-\log\frac{p\bar{\pi}/\pi_*}{r\bar{\pi
}-1+p}$ is an increasing function of $\bar{\pi}\leq\pi_*$ and the
definition of $\mu=-\log(r\pi_*).$ When the
restriction (\ref{eq: supprestr nonlinear ACS}) on the support of the shock
holds, then there clearly exist stable solutions to the model for arbitrary
initial conditions $\hat{\pi}^{L}_{0}\leq\hat{\pi}^{PIR}$, where $\hat{\pi}^{PIR}$ is the high-inflation fixed point of the difference equation (\ref{eq: trans}).

Now consider the case when the support restriction
(\ref{eq: supprestr nonlinear ACS}) does not hold. In this case, for any
initial value $\hat{\pi}^{L}_{0}$ the solution of the difference equation
(\ref{eq: trans}) will move along an explosive path while $\hat{\pi}^{L}_{t}$ is
less than $\frac{\log\bar{\pi}/\pi_*-\log\left(  1-p\right)  }{\psi},$ and will
eventually break down after a finite number of periods.

Finally, note how the transitory state resembles the simple case of the absorbing state in the main text, and Figure \ref{fig: ACS nonlinear} parallels Figure \ref{fig: coherency}. At $\bar{\pi}=\pi _{\ast },$ the support restrictions simply implies that  $\hat{\pi}^{ZIR}=\ln \frac{p}{r\pi _{\ast }-1+p}-r^{L}<-\frac{\mu }{\psi }<0.$ So, for an equilibrium to exists the intercept of the ZIR part of the red line must be negative, as in Figure \ref{fig: coherency}.
\end{proof}

\begin{proof}[Proof of Proposition \ref{prop: nonlinear ACS sunspots}]

Sunspot solutions $\pi_t$ may depend on $\varsigma_t$ and its lags. It is assumed that $\varsigma_t$ follows a first-order Markov chain, and so we may denote by $\pi_t^{\varsigma}$ the two different values that $\pi_t$ can take depending on the outcome of the sunspot shock.\footnote{These values may also vary over $t$ if the solution is history dependent, which we do not rule out.} Letting $q_\varsigma := \Pr(\varsigma_{t+1}=1|\varsigma_t=\varsigma)$, (\ref{eq: ACS nonlinear}) becomes
\begin{equation}
1=\max\left\{  r^{-1},\pi_*\left(  \frac{\pi_{t}^{\varsigma_{t}}}{\pi_*}\right)  ^{\psi}\right\}  \left(
\frac{1-q_{\varsigma_{t}}}{\pi_{t+1}^{0}}+\frac{q_{\varsigma_{t}}}{\pi_{t+1}^{1}}\right)
\qquad \varsigma_t = 0,1.\label{eq: ACS sunspots}%
\end{equation}
This is a system of nonlinear difference equations in $\pi_t^{\varsigma}$.  

First, consider the case in which at least one of the initial values $\pi
_{t}^{\varsigma_{t}}$ corresponds to a ZIR, which, wlog, we can
set as $\left(  \pi_{t}^{0}/\pi_*\right)  ^{\psi}\leq (r\pi_*)^{-1}$, since the labelling
of $\varsigma_{t}$ is arbitrary. Under this assumption, (\ref{eq: ACS sunspots})
yields $r=\left(  \frac{1-q_{0}}{\pi_{t+1}^{0}}+\frac{q_{0}}{\pi_{t+1}^{1}%
}\right)  ,$ which we can solve for $\pi_{t+1}^{0}$ and substitute back into
(\ref{eq: ACS sunspots}) with $\varsigma_{t}=1$ to get%
\[
\pi_{t+1}^{1}=\frac{\max\left\{  r^{-1},\pi_*\left(  \pi_{t}^{1}/\pi_*\right)  ^{\psi
}\right\}  \left(  q_{1}-q_{0}\right)  }{1-q_{0}-r\max\left\{  r^{-1},\pi_*\left(
\pi_{t}^{1}/\pi_*\right)  ^{\psi}\right\}  \left(  1-q_{1}\right)  }.
\]
This has almost exactly the same shape as (\ref{eq: trans lev}) that is
plotted in Figure \ref{fig: ACS nonlinear}. Hence, the same argument as above
estabilishes the support restriction $r^{-1}\leq\pi_*$.

Second, suppose $\pi_{t}^{\varsigma_{t}}$ corresponds to a PIR for both $\varsigma_{t},$ i.e., $\left(  \pi_{t}^{\varsigma_{t}}/\pi_*\right)  ^{\psi}>(r\pi_*)^{-1}%
$. By the argument in the previous paragraph, if at any future date $\pi_{t+j}^{\varsigma_{t+j}}$ is a ZIR, then the support restriction for coherency $(r\pi_*)^{-1}\leq1$ applies. So, the only case to consider is when $\left(  \pi
_{t}^{\varsigma_{t}}/\pi_*\right)  ^{\psi}>(r\pi_*)^{-1}$ for all $t$, i.e., the
economy is always at a PIR. In this case, (\ref{eq: ACS sunspots}) becomes
$1=(\pi_{t}/\pi_*)^{\psi}E_{t}\left( \pi_*/\pi_{t+1}\right)  ,$ with the additional
restriction $\pi_{t}>\pi_*(r\pi_*)^{-1/\psi}$ for all $t.$ Because $\psi>1,$ this equation
has the unique stable solution $\pi_{t}=\pi_*$ for all $t$ if and only if
$r^{-1}\leq\pi_*$. 
\end{proof}

\subsection{Derivation of results in Subsection \ref{s: piecewise linear}}\label{app: s: piecewise linear}
\subsubsection{Coefficients of the canonical form} \label{app: s: coeffcanonical}

\paragraph{Coefficients in \nameref{ex: NK}} 
\begin{align*}
A_{0}  & =%
\begin{pmatrix}
1 & -\lambda\\
0 & 1
\end{pmatrix}
,\quad A_{1}=%
\begin{pmatrix}
1 & -\lambda\\
\sigma\psi & 1+\sigma\psi_{x}%
\end{pmatrix}
,\quad B_{0}=B_{1}=%
\begin{pmatrix}
-\beta & 0\\
-\sigma & -1
\end{pmatrix}
,\\
C_{0}  & =%
\begin{pmatrix}
-1 & 0 & 0 & 0\\
0 & -1 & 0 & -\sigma\mu
\end{pmatrix}
,\quad C_{1}=%
\begin{pmatrix}
-1 & 0 & 0 & 0\\
0 & -1 & \sigma & 0
\end{pmatrix}
,\quad D_{0}=D_{1}=0_{2\times4},
\end{align*}
$a=\left(  \psi,\psi_{x}\right)  ^{\prime},$ $b=\left(  0,0\right)
^{\prime}$, $c=\left(0,0,1,\mu\right)^{\prime}$ and $d=0_{4\times1}.$ 
\openbox

\paragraph{Coefficients in \nameref{ex: NK-OP}} 
 
\begin{align*}
A_{0}  & =%
\begin{pmatrix}
1 & -\lambda\\
0 & 1
\end{pmatrix}
,\quad A_{1}=%
\begin{pmatrix}
1 & -\lambda\\
\frac{\lambda}{\gamma} & 1%
\end{pmatrix}
,\quad B_{0}=%
\begin{pmatrix}
-\beta & 0\\
-\sigma & -1
\end{pmatrix}
, \quad B_{1}=%
\begin{pmatrix}
-\beta & 0\\
0 & 0
\end{pmatrix}
,\\
C_{0}  & =%
\begin{pmatrix}
-1 & 0 & 0\\
0 & -1  & -\sigma\mu
\end{pmatrix}
,\quad C_{1}=%
\begin{pmatrix}
-1 & 0 & 0\\
0 & 0 & 0
\end{pmatrix}
,\quad D_{0}=D_{1}=0_{2\times3},
\end{align*}
$a=\left(  0,-\sigma^{-1}\right)  ^{\prime},$ $b=\left(  1,\sigma^{-1}\right)
^{\prime}$, $c=\left(0,\sigma^{-1},\mu\right)^{\prime}$ and $d=0_{3\times1}.$ \hfill\openbox

\subsubsection{Proof of Proposition \ref{prop: NK-TR CC}\label{app: s: prop3}}
In preparation for the proof of Proposition \ref{prop: NK-TR CC}, we first establish a result that will be used in the proofs of both Propositions \ref{prop: NK-TR CC} and \ref{prop: NK cutoff}.

\begin{proposition}\label{prop: NK-TR canonical}
The NK-TR model given by (\ref{eq: NK}) with $u_{t}=\nu_t=0$
and $\epsilon_{t}$ a two-state Markov Chain with transition Kernel $K=%
\begin{pmatrix}
p & 1-p\\
1-q & q
\end{pmatrix}
$ can be
written in the form $F(\mathbf{Y})=\kappa(\mathbf{X})$, where $\mathbf{Y}$ is a $2\times1$ vector containing the values of $\hat{\pi}_t$ in each of the two states, and $F(\cdot)$ is the piecewise linear function (\ref{eq: F}) with 
\begin{equation}
\begin{tabular}{lll}
$\mathcal{A}_{J_{1}}=Q+\lambda \sigma \left( \psi I-\frac{\psi _{x}}{\lambda 
}\left( I-\beta K\right) \right) ,$ & $J_{1}=\left\{ 1,2\right\} $ & $\text{%
(PIR,PIR)}$ \\ 
$\mathcal{A}_{J_{2}}=Q+\lambda \sigma \left( \psi I-\frac{\psi _{x}}{\lambda 
}\left( I-\beta K\right) \right) e_{2}e_{2}^{\prime },$ & $J_{2}=\left\{
2\right\} $ & $\text{(ZIR,PIR)}$ \\ 
$\mathcal{A}_{J_{3}}=Q+\lambda \sigma \left( \psi I-\frac{\psi _{x}}{\lambda 
}\left( I-\beta K\right) \right) e_{1}e_{1}^{\prime },$ & $J_{2}=\left\{
1\right\} $ & $\text{(PIR,ZIR)}$ \\ 
$\mathcal{A}_{J_{4}}=Q,$ & $J_{4}=\varnothing $ & (ZIR,ZIR).%
\end{tabular}%
\   \label{eq: A matrices NK}
\end{equation}%
where $e_{i}$ is the unit vector with $1$ in position $i$,
\begin{equation}\label{eq: Q}
    Q:=I-K-\beta \left( I-K\right) K-\lambda \sigma K
\end{equation}
and 
\begin{equation}
\begin{tabular}{l}
$\det \mathcal{A}_{J_{1}}=\sigma ^{2}\lambda ^{2}\left( \psi -1+\frac{\beta
-1}{\lambda }\psi _{x}\right) \left( \psi -\psi _{p,q,\beta ,\sigma \lambda
}-\psi _{x}\frac{\left( 1+\beta \left( 1-p-q\right) \right) }{\lambda }%
\right) ,$ \\ 
$\det \mathcal{A}_{J_{2}}=-\sigma ^{2}\lambda ^{2}\psi _{p,q,\beta ,\sigma
\lambda }\left( \psi -1+\frac{\beta -1}{\lambda }\psi _{x}\right) +\sigma
\left( 1-q\right) \left[ \rule[0pt]{0pt}{10pt}\sigma \lambda \beta \psi
_{x}\right. $ \\ 
$\qquad \qquad \left. +\lambda \left( \beta \left( p+q-1\right) -1-\sigma
\lambda \right) \left( \frac{\beta -1}{\lambda }\psi _{x}+\psi \right) %
\right] ,$ \\ 
$\det \mathcal{A}_{J_{3}}=-\sigma ^{2}\lambda ^{2}\left( \psi -\psi
_{p,q,\beta ,\sigma \lambda }-\psi _{x}\frac{\left( 1+\beta \left(
1-p-q\right) \right) }{\lambda }\right) $ \\ 
$\qquad \qquad -\sigma \left( 1-q\right) \left[ \left( 1-\left( p+q\right)
\beta +\sigma \lambda -\beta ^{2}\left( 1-p-q\right) \right) \psi
_{x}\right. $ \\ 
$\qquad \qquad \qquad \qquad \left. -\lambda \psi \left( 1+\sigma \lambda
+\beta \left( 1-p-q\right) \right) \right] ,$ \\ 
$\det \mathcal{A}_{J_{4}}=\sigma ^{2}\lambda ^{2}\psi _{p,q,\beta ,\sigma
\lambda }.$%
\end{tabular}
\label{eq: determinants}
\end{equation}
where $\psi _{p,q,\beta ,\sigma \lambda }$ is given in (\ref{eq:
NK cutoff}).
\end{proposition}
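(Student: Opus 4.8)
The plan is to carry out the MSV reduction of Subsection~\ref{s: piecewise linear} explicitly for $k=2$, eliminate the output gap so that the only unknown is the vector of state-contingent inflation values, and then compute the four $2\times2$ determinants and match them against~(\ref{eq: determinants}). First I would work in the two-state representation~(\ref{eq: canon i}). Writing $\mathbf{Y}$, $\mathbf{x}$ and $\hat{\mathbf{R}}$ for the $2\times1$ vectors collecting the state-contingent values of $\hat{\pi}_t,\hat{x}_t,\hat{R}_t$, the conditional expectations become $\hat{\pi}_{t+1|t}=K\mathbf{Y}$ and $\hat{x}_{t+1|t}=K\mathbf{x}$. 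The Phillips curve~(\ref{eq: NK NKPC}) with $u_t=0$ reads $\mathbf{Y}=\beta K\mathbf{Y}+\lambda\mathbf{x}$, which I solve for the output gap as $\mathbf{x}=\lambda^{-1}(I-\beta K)\mathbf{Y}$. Substituting this into the Euler equation~(\ref{eq: NK EE}), written across states as $\mathbf{x}=K\mathbf{x}-\sigma(\hat{\mathbf{R}}-K\mathbf{Y})+\boldsymbol{\epsilon}$, and multiplying through by $\lambda$, yields the core relation
\[
Q\mathbf{Y}+\sigma\lambda\hat{\mathbf{R}}=\lambda\boldsymbol{\epsilon},\qquad Q:=(I-K)(I-\beta K)-\sigma\lambda K .
\]
Expanding $(I-K)(I-\beta K)=I-K-\beta(I-K)K$ verifies that this $Q$ coincides with~(\ref{eq: Q}). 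This is the step that collapses the nominal $2\times2$ system in $(\hat{\pi}_t,\hat{x}_t)$ to a $2\times1$ system in inflation alone.

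Next I would insert the kinked Taylor rule~(\ref{eq: NK TR}) with $\nu_t=0$. In any state $i$ that is a PIR, $\hat{R}_i=\psi\pi_i+\psi_x x_i=[(\psi I+\psi_x\lambda^{-1}(I-\beta K))\mathbf{Y}]_i$, whereas in a ZIR state $\hat{R}_i=-\mu$. For a regime configuration $J\subseteq\{1,2\}$ (the set of PIR states), writing $S_J:=\sum_{i\in J}e_ie_i'$ gives $\hat{\mathbf{R}}=S_J(\psi I+\psi_x\lambda^{-1}(I-\beta K))\mathbf{Y}-\mu(I-S_J)\mathbf{1}$. Substituting into the core relation produces a linear system $\mathcal{A}_J\mathbf{Y}=\kappa_J(\mathbf{X})$ with $\mathcal{A}_J=Q+\sigma\lambda\,S_J(\psi I+\psi_x\lambda^{-1}(I-\beta K))$; specialising $J$ to $\{1,2\},\{2\},\{1\},\varnothing$ delivers the four matrices in~(\ref{eq: A matrices NK}), where the idempotents $e_ie_i'$ encode that the Taylor-rule slope is present only in PIR states. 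Continuity of the resulting piecewise-linear map $F$ in~(\ref{eq: F}) is inherited from the continuity of $\hat{R}_i=\max\{-\mu,\cdot\}$: at a regime boundary both affine pieces share the value $\hat{R}_i=-\mu$, so the constant offsets agree and the \nameref{th: GLM} Theorem applies, with the genuinely exogenous term $\lambda\boldsymbol{\epsilon}$ collected into $\kappa(\mathbf{X})$.

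Finally I would compute the determinants. For the pure regimes $J_1$ and $J_4$ the matrices are polynomials in $K$ alone, so I would diagonalise $K$ (whose eigenvalues are $1$ and $p+q-1$): this immediately gives $\det\mathcal{A}_{J_4}=\det Q=\sigma^2\lambda^2\psi_{p,q,\beta,\sigma\lambda}$ and, in the $\psi_x=0$ case, $\det\mathcal{A}_{J_1}=\sigma^2\lambda^2(\psi-1)(\psi-\psi_{p,q,\beta,\sigma\lambda})$, after recognising $\psi_{p,q,\beta,\sigma\lambda}$ from~(\ref{eq: NK cutoff}). For the mixed regimes $J_2,J_3$ the added term modifies a single row of $Q$, so by multilinearity of the determinant in rows, $\det\mathcal{A}_{J_2}$ equals $\det Q$ plus $\sigma\lambda$ times the inner product of the PIR row of $\psi I+\psi_x\lambda^{-1}(I-\beta K)$ with the cofactor vector of the corresponding row of $Q$, and similarly for $J_3$; expanding the $2\times2$ determinant directly and collecting terms reproduces~(\ref{eq: determinants}).

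I expect the main obstacle to be precisely these mixed-regime determinants with $\psi_x\neq0$: because $S_J$ breaks the symmetry that makes $J_1,J_4$ diagonalisable through $K$, one must expand $\det\mathcal{A}_{J_2}$ and $\det\mathcal{A}_{J_3}$ entrywise and then carry out the nontrivial factorisation that exhibits both the Taylor-principle factor $\psi-1+\tfrac{\beta-1}{\lambda}\psi_x$ and the shifted cutoff $\psi_{p,q,\beta,\sigma\lambda}$. Keeping the $\psi_x$ bookkeeping straight is the only genuinely delicate part, since the $\psi_x=0$ specialisation already pins down the overall structure and can be used as a consistency check at each stage.
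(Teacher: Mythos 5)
Your plan is the paper's own proof, essentially step for step: stack the two states, eliminate the output gap via the Phillips curve ($\hat{x}=\lambda^{-1}(I-\beta K)\hat{\pi}$), substitute the kinked Taylor rule into the combined Euler/Phillips relation to get a single piecewise-linear system $Q\hat{\pi}+\sigma\lambda\hat{R}=\lambda\epsilon$, and then compute the four determinants. Your $Q=(I-K)(I-\beta K)-\sigma\lambda K$ is exactly (\ref{eq: Q}), and your device of diagonalising $K$ (eigenvalues $1$ and $p+q-1$) together with the matrix determinant lemma cleanly produces $\det\mathcal{A}_{J_4}=\sigma^{2}\lambda^{2}\psi_{p,q,\beta,\sigma\lambda}$ and, at $\psi_x=0$, all of (\ref{eq: determinants}).

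There are, however, two points where your derivation does not land on the statement as written, and you must address both. First, regime selection acts on \emph{rows}: with $M:=\psi I+\frac{\psi_x}{\lambda}(I-\beta K)$ and $S_J:=\sum_{i\in J}e_ie_i'$ you correctly obtain $\mathcal{A}_J=Q+\sigma\lambda S_J M$, whereas (\ref{eq: A matrices NK}) places the selector on the other side, $Q+\sigma\lambda M e_ie_i'$. These are different matrices; their determinants nevertheless coincide because $Q$ and $M$ are both polynomials in $K$ and hence commute, so $\det\left(Q+c\,e_ie_i'M\right)=\det Q\left(1+c\,e_i'MQ^{-1}e_i\right)=\det Q\left(1+c\,e_i'Q^{-1}Me_i\right)=\det\left(Q+c\,Me_ie_i'\right)$. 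You need to state this commutation explicitly: without it, your claim that specialising $J$ "delivers the four matrices in (\ref{eq: A matrices NK})" is literally false, and the GLM condition is about precisely these determinants. Second, and substantively, the sign of $\psi_x$: equations (\ref{eq: NK NKPC}) and (\ref{eq: NK TR}) give $\hat{R}=\psi\hat{\pi}+\psi_x\hat{x}=\left[\psi I+\frac{\psi_x}{\lambda}(I-\beta K)\right]\hat{\pi}$ in PIR states, which is your $M$, while the proposition (and the paper's own equation (\ref{eq: NK single eq})) carries $\psi I-\frac{\psi_x}{\lambda}(I-\beta K)$, so that (\ref{eq: determinants}) contains the factor $\psi-1+\frac{\beta-1}{\lambda}\psi_x$. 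With your model-consistent sign you will instead obtain $\psi-1+\frac{1-\beta}{\lambda}\psi_x$ (the textbook Taylor principle) and every $\psi_x$ term flipped relative to (\ref{eq: determinants}). Hence your closing step, "expanding \ldots reproduces (\ref{eq: determinants})", cannot go through for $\psi_x\neq 0$; what you will actually uncover is a sign inconsistency between the stated formulas and the model equations. Since the two conventions agree at $\psi_x=0$, which is all that the proofs of Propositions \ref{prop: NK-TR CC} and \ref{prop: NK cutoff} use, the damage is contained; but a proof of the proposition \emph{as stated} must either adopt the statement's sign convention at the substitution step or explicitly flag the $\psi_x$ sign as a typo and prove the corrected formulas.
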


\begin{proof} Collect the $k=2$ states of $\epsilon _{t}$ in the vector $\epsilon
=\left( \epsilon ^{1},\epsilon ^{2}\right) ^{\prime }$ and denote the
corresponding states of $\hat{\pi}_{t},\hat{x}_{t},\hat{R}_{t}$ along a MSV
solution by 2-dimensional vectors $\hat{\pi},\hat{x}$ and $\hat{R},$
respectively, where $y=f\left( \epsilon \right) $ for some
function $f\left( \cdot \right) ,$ and for each $y\in \left\{ \hat{\pi}%
,\hat{x},\hat{R}\right\} .$ Because the dynamics are exogenous and
determined completely by $K$, we have $E\left( y_{t+1}|\epsilon
_{t}=\epsilon ^{i}\right) =e_i'K y$. Stacking the two conditioning
states, we can write, with slight abuse of notation, $y%
_{t+1|t}=K\epsilon $. Substituting into (\ref{eq: NK NKPC}) with $u_{t}=0,$
we obtain 
\begin{equation}\label{eq: NK PC 2s}
\hat{\pi}=\beta \overbrace{K\hat{\pi}}^{\hat{\pi}_{t+1|t}}+\lambda \hat{x}.
\end{equation}%

Similarly, from (\ref{eq: NK EE}) we obtain
\begin{equation}\label{eq: NK EE 2s}
\hat{x}=\overbrace{K\hat{x}}^{\hat{x}_{t+1|t}}-\sigma \left( \hat{R}-K\hat{\pi}\right) +\epsilon .
\end{equation}%
Combining the above two equations, we obtain%
\begin{equation*}
\left( I-K\right) \hat{\pi}=\beta \left( I-K\right) K\hat{\pi}-\lambda
\sigma \left( \hat{R}-K\hat{\pi}\right) +\lambda \epsilon .
\end{equation*}%
Substituting for $\hat{R}=\max \left\{ -\mu \iota _{2},\psi \hat{\pi}+\psi
_{x}\hat{x}\right\} ,$ obtained from (\ref{eq: NK TR}) with $\nu _{t}=0,$
and for $\hat{x}=\lambda ^{-1}\left( I-\beta K\right) \hat{\pi},$ and
rearranging we get:%
\begin{equation}
Q\hat{\pi}=-\lambda \sigma \max \left\{ -\mu \iota_2,\left( \psi I-\frac{\psi
_{x}}{\lambda }\left( I-\beta K\right) \right) \hat{\pi}\right\} +\lambda
\epsilon.
\label{eq: NK single eq}
\end{equation}%
This yields (\ref{eq: A matrices NK}). The determinants (\ref{eq: determinants}) were derived using straightforward algebraic calculations (performed using Scientific Workplace). 
\end{proof}

\begin{proof}[Proof of Proposition \ref{prop: NK-TR CC}]
Setting $\psi_{x}=0$ in (\ref{eq: determinants}), we obtain%
\begin{equation*}
\det \mathcal{A}_{J_{1}} =\sigma ^{2}\lambda ^{2}\left( \psi -1\right)
\left( \psi -\psi _{p,q,\beta ,\sigma \lambda }\right) >0. 
\end{equation*}%
Since $\psi _{p,q,\beta ,\sigma \lambda }\leq 1$, $\det \mathcal{A}%
_{J_{1}}>0$, so coherency requires $\psi _{p,q,\beta ,\sigma \lambda }>0$
for $\det \mathcal{A}_{J_{4}}>0$ from (\ref{eq: determinants}). However, in that case, from (\ref{eq: determinants}) we get 
\begin{equation*}
\det \mathcal{A}_{J_{2}}=-\sigma ^{2}\lambda ^{2}\left( \psi _{p,q,\beta
,\sigma \lambda }\left( \psi -1\right) +\psi \left( 1-q\right) \left( 1+%
\frac{1-\beta \left( p+q-1\right) }{\sigma \lambda }\right) \right) <0
\end{equation*}%
because $\beta \left( p+q-1\right) <1$, violating the CC condition in the \nameref{th: GLM} Theorem.
\end{proof}
\paragraph{Extension to $\psi_x\neq0$} In this case, the Taylor principle becomes
\begin{equation}\label{eq: Taylor Principle}
\psi+\frac{\beta-1}{\lambda}\psi_x > 1.
\end{equation}
It is straightforward to show that the CC condition fails when (\ref{eq: Taylor Principle}) holds for the absorbing case $q=1$. Using this constraint in (\ref{eq: determinants}), we obtain 
\[
\det \mathcal{A}_{J_{2}} = -\det \mathcal{A}_{J_{4}}\left(\psi+\frac{\beta-1}{\lambda}\psi_x-1\right).
\]
Thus, the two determinants must have opposite sign, violating the CC condition in the \nameref{th: GLM} Theorem. 

It seems too complicated to prove this result analytically for $q<1$, but we have verified it numerically for all the parametrizations we considered (see the replication code provided).\openbox


\subsubsection{Proof of Proposition \ref{prop: NK-OP CC}} \label{app: s: prop4}
Next, we establish a result that will be used in the proof of Propositions \ref{prop: NK-OP CC}.

\begin{proposition}\label{prop: NK-OP canonical}
The NK-OP model given by (\ref{eq: NK}) with  (\ref{eq: NK TR}) replaced by (\ref{eq: NK OP}) with $u_{t}=\nu_t=\psi_x=0$ and $\epsilon_{t}$ a two-state Markov Chain with transition Kernel $K=
\begin{pmatrix}
p & 1-p\\
1-q & q
\end{pmatrix}
$ can be written in the form (\ref{eq: F}) with 
\begin{equation}%
\begin{tabular}
[c]{ll}%
$\mathcal{A}_{J_{1}}=\left(  1+\frac{\lambda^{2}}{\gamma}\right)  I-\beta K,$
& $J_{1}=\left\{  1,2\right\}  $ \\
$\mathcal{A}_{J_{2}}=I-\beta K-e_{1}e_{1}^{\prime}\left(  K\left(  I-\beta
K\right)  +\lambda\sigma K\right)  +\frac{\lambda^{2}}{\gamma}e_{2}%
e_{2}^{\prime},$ & $J_{2}=\left\{  2\right\}  $ \\
$\mathcal{A}_{J_{3}}=I-\beta K-e_{2}e_{2}^{\prime}\left(  K\left(  I-\beta
K\right)  +\lambda\sigma K\right)  +\frac{\lambda^{2}}{\gamma}e_{1}%
e_{1}^{\prime},$ & $J_{2}=\left\{  1\right\}  $ \\
$\mathcal{A}_{J_{4}}=Q$ &
$J_{4}=\varnothing$ 
\end{tabular}
\ \ \ \label{eq: A matrices NK-OP}%
\end{equation}
and%
\begin{equation}%
\begin{tabular}
[c]{l}%
$\det\mathcal{A}_{J_{1}}=\frac{\left(  \gamma\left(  1-\beta\right)
+\lambda^{2}\right)  \left(  \gamma\left(  1+\left(  1-p-q\right)
\beta\right)  +\lambda^{2}\right)  }{\gamma^{2}},$\\
$\det\mathcal{A}_{J_{2}}=-\frac{\left(  \gamma\left(  1-\beta\right)
+\lambda^{2}\right)  \left(  \sigma\lambda\psi_{p,q,\beta,\sigma\lambda
}+\left(  1-q\right)  \left(  1+\left(  1-p-q\right)  \beta\right)  \right)
+\sigma\lambda\left(  1-q\right)  \left(  \gamma+\lambda^{2}\right)  }{\gamma
},$\\
$\det\mathcal{A}_{J_{3}}=-\frac{\left(  \gamma\left(  1-\beta\right)
+\lambda^{2}\right)  \left(  \sigma\lambda\psi_{p,q,\beta,\sigma\lambda
}+\left(  1-p\right)  \left(  1+\left(  1-p-q\right)  \beta\right)  \right)
+\sigma\lambda\left(  1-p\right)  \left(  \gamma+\lambda^{2}\right)  }{\gamma
},$\\
$\det\mathcal{A}_{J_{4}}=\sigma^{2}\lambda^{2}\psi_{p,q,\beta,\sigma\lambda}.$%
\end{tabular}
\ \ \label{eq: determinants OP}%
\end{equation}
\end{proposition}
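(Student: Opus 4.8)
The plan is to mirror the derivation in the proof of Proposition~\ref{prop: NK-TR canonical}, replacing the Taylor-rule block with the optimal-policy conditions (\ref{eq: NK OP}). First I would collect the two states of $\epsilon_t$ into a vector and, as in that proof, use the exogeneity of the dynamics under $K$ to write the one-step-ahead conditional expectation of any variable $y$ along an MSV solution as the stacked vector $Ky$. Substituting $\hat{\pi}_{t+1|t}=K\hat{\pi}$ into the Phillips curve (\ref{eq: NK NKPC}) with $u_t=0$ gives $\hat{\pi}=\beta K\hat{\pi}+\lambda\hat{x}$, which I solve for $\hat{x}=\lambda^{-1}(I-\beta K)\hat{\pi}$. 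This eliminates the output gap and reduces the system to a single piecewise linear equation in $\hat{\pi}$, which is exactly the structure needed for (\ref{eq: F}).

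Next I would derive the row of $\mathcal{A}_J$ associated with each state according to its regime. In a PIR state $i$ the optimality condition $\gamma\hat{x}_i+\lambda\hat{\pi}_i=0$ from (\ref{eq: NK OP}) combines with $\hat{x}=\lambda^{-1}(I-\beta K)\hat{\pi}$ to yield the row $e_i'[(I-\beta K)+\tfrac{\lambda^2}{\gamma}I]$. In a ZIR state $i$ I substitute $\hat{R}_t=-\mu$ into (\ref{eq: NK EE}), eliminate $\hat{x}$ again, and the row becomes $e_i'[(I-K)(I-\beta K)-\lambda\sigma K]$, which I would show equals $e_i'Q$ with $Q$ as in (\ref{eq: Q}). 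Writing $P_J:=\sum_{i\in J}e_ie_i'$ for the projection onto the PIR states, the stacked system is $\mathcal{A}_J=P_J[(I-\beta K)+\tfrac{\lambda^2}{\gamma}I]+(I-P_J)Q$, and evaluating this at $J=\{1,2\},\{2\},\{1\},\varnothing$ reproduces the four matrices in (\ref{eq: A matrices NK-OP}).

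For the determinants I would exploit that $K$ has eigenvalues $1$ and $\delta:=p+q-1$, together with the identity $(I-K)K=\delta(I-K)$, which gives the compact form $Q=(1-\beta\delta)(I-K)-\lambda\sigma K$. The two ``pure'' cases are then immediate: $\det\mathcal{A}_{J_1}=\det[(1+\tfrac{\lambda^2}{\gamma})I-\beta K]$ factors over the eigenvalues of $K$ as $(1+\tfrac{\lambda^2}{\gamma}-\beta)(1+\tfrac{\lambda^2}{\gamma}-\beta\delta)$, reproducing the stated $\det\mathcal{A}_{J_1}$; and a direct $2\times2$ expansion gives $\det\mathcal{A}_{J_4}=\det Q=\lambda\sigma[\lambda\sigma\delta-(2-p-q)(1-\beta\delta)]$, which equals $\sigma^2\lambda^2\psi_{p,q,\beta,\sigma\lambda}$ once one substitutes the definition (\ref{eq: NK cutoff}) and uses $1-\beta\delta=1-p\beta-q\beta+\beta$.

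The main obstacle will be the two mixed-regime determinants $\det\mathcal{A}_{J_2}$ and $\det\mathcal{A}_{J_3}$: selecting one PIR row and one ZIR row breaks the eigenstructure, so these do not factor and must be obtained by writing out the explicit $2\times2$ entries (with $a:=1-\beta\delta$, $c:=\lambda\sigma$) and expanding. I would then regroup the resulting polynomial so that the factor $\gamma(1-\beta)+\lambda^2$ and the quantity $\sigma\lambda\psi_{p,q,\beta,\sigma\lambda}$ appear explicitly, matching (\ref{eq: determinants OP}); as noted for Proposition~\ref{prop: NK-TR canonical}, this regrouping is routine but tedious and is most safely verified with computer algebra. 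With all four determinants in hand, the sign inspection required by the \nameref{th: GLM} Theorem yields Proposition~\ref{prop: NK-OP CC}.
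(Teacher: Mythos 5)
Your proposal is correct and follows essentially the same route as the paper: eliminate $\hat{x}$ via the Phillips curve to get a single piecewise-linear system in $\hat{\pi}$, assemble the regime-specific rows (optimality condition for PIR states, ZLB-constrained Euler equation for ZIR states, the latter reducing to rows of $Q$), and verify the determinants by direct computation. Your use of the rank-one structure of $I-K$ and the identity $(I-K)K=(p+q-1)(I-K)$ to factor $\det\mathcal{A}_{J_1}$ and $\det\mathcal{A}_{J_4}$ is a tidy refinement of the paper's unstated "straightforward to verify" algebra, but it is a detail within the same argument rather than a different approach.
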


\begin{proof} 
From (\ref{eq: NK EE}) and (\ref{eq: NK OP}) we obtain
\begin{equation}
\hat{x}=\left\{
\begin{array}
[c]{ccc}%
K\hat{x}-\sigma\left(  -\mu-K\hat{\pi}\right)  +\epsilon, & \text{if }%
K\hat{\pi}+\frac{1}{\sigma}\left(  K\hat{x}-\hat{x}+\epsilon\right)  \leq-\mu
& \text{(ZIR)}\\
-\frac{\lambda}{\gamma}\hat{\pi}, & \text{if }K\hat{\pi}+\frac{1}{\sigma
}\left(  K\hat{x}-\hat{x}+\epsilon\right)  >-\mu & \text{(PIR)}%
\end{array}
\right.  \label{eq: NK EE 2s_OP}
\end{equation}
where the inequalities are element-wise. Substituting for $\hat{x}$ using
(\ref{eq: NK PC 2s}) yields%
\[
\left(  I-\beta K\right)  \hat{\pi}=\left\{
\begin{array}
[c]{cc}%
K\left(  I-\beta K\right)  \hat{\pi}-\lambda\sigma\left(  -\mu-K\hat{\pi
}\right)  +\lambda\epsilon, & \text{(ZIR)}\\
-\frac{\lambda^{2}}{\gamma}\hat{\pi}, & \text{(PIR)}%
\end{array}
\right.
\]
where ZIR occurs if and only if $K\hat{\pi}+\frac{1}{\lambda\sigma}\left(
\left(  K-I\right)  \left(  I-\beta K\right)  \hat{\pi}+\lambda\epsilon
\right)  \leq-\mu$ (element-wise). Thus, for PIR,PIR we have%
\[
\mathcal{A}_{J_{1}}=\left(  1+\frac{\lambda^{2}}{\gamma}\right)  I-\beta K
\]
For ZIR,PIR, we have%
\[
\mathcal{A}_{J_{2}}=I-\beta K-e_{1}e_{1}^{\prime}\left(  K\left(  I-\beta
K\right)  +\lambda\sigma K\right)  +\frac{\lambda^{2}}{\gamma}e_{2}%
e_{2}^{\prime},
\]
and PIR,ZIR can be obtained symmetrically. For ZIR,ZIR, we have%
\[
\mathcal{A}_{J_{4}}=I-\beta K-\left(  K\left(  I-\beta K\right)
+\lambda\sigma K\right)  =Q.
\]
This yields (\ref{eq: A matrices NK}). Finally, it is straightforward to
verify (\ref{eq: determinants OP}).
\end{proof}

\begin{proof}[Proof of Proposition \ref{prop: NK-OP CC}]
First, observe that $\det\mathcal{A}_{J_{1}}>0$ holds for all admissible values of the parameters
$\beta,p,q\in\left[  0,1\right]$, and $\gamma,\lambda>0$, since
$\gamma\left(  1-\beta\right)  +\lambda^{2}>0$ and $(  1\allowbreak+(
1-\allowbreak p-q)  \beta)  \geq0$. Therefore, when $\theta>1$ ($\psi_{p,1,\beta,\sigma\lambda}<0$), the CC condition cannot hold because $\det\mathcal{A}%
_{J_{4}}<0$. Turning to the case $\theta<1$ ($\psi_{p,1,\beta,\sigma\lambda}>0$) we
immediately notice that both $\det\mathcal{A}_{J_{2}}$ and $\det
\mathcal{A}_{J_{3}}$ are negative, since the terms in the numerator of the
fractions are all positive.
\end{proof}

\subsubsection{Proof of Proposition \ref{prop: NK-TR sup res}} \label{app: s: prop NK-TR sup res}

We first look at the absorbing (or steady) state, where $\epsilon _{t}=0.$
Then, we need to solve
\begin{equation}\label{eq: AS/AD abs NK}
\ \hat{\pi}=\frac{\lambda }{1-\beta }\hat{x}\quad AS\quad ;\quad \hat{\pi}
=\max \left\{ -\mu ,\psi \hat{\pi}\right\}= \max \left\{
\begin{array}
[c]{c}%
\psi\frac{\lambda}{1-\beta}\hat{x}\quad AD^{TR}\\
-\mu \qquad AD^{ZLB}
\end{array}.
\right.
\end{equation}
This is depicted in Figure \ref{app: fig: nk_abs}. It is immediately obvious
that the necessary support restriction for existence of a solution is $\mu
\geq 0,$ i.e., $\left( r\pi _{\ast }\right) ^{-1}\leq 1.$ When this holds,
there are two possible solutions: 1) PIR: $(\hat{\pi},\hat{x},\hat{R}%
)=(0,0,0)$; and 2) ZIR: $(\hat{\pi},\hat{x},\hat{R})=(-\mu ,-\mu \frac{%
(1-\beta )}{\lambda },-\mu )$. 
\begin{figure}
\centering\includegraphics[scale = 0.5]{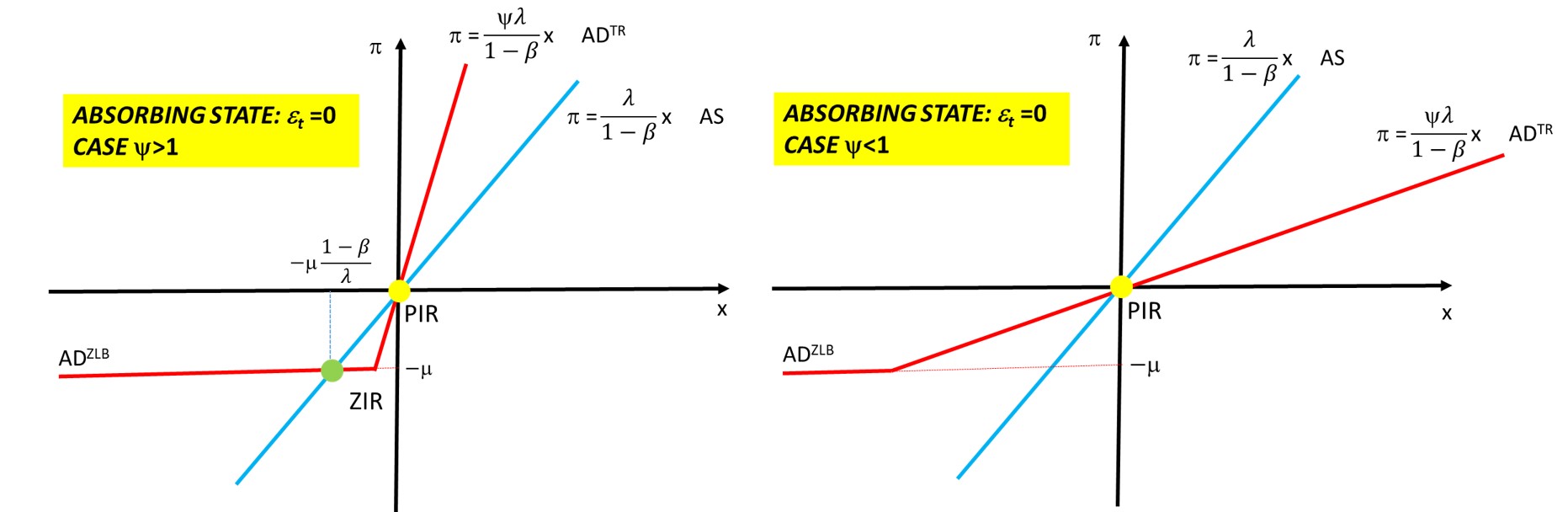}
\caption{The absorbing state in the NK-TR model} 
\label{app: fig: nk_abs}
\end{figure}

Next, turn to the transitory state. Here, there are four possibilities
depending on the value of $\theta$, and the equilibrium in the absorbing
state. These are depicted in Figure \ref{app: fig: nk_psilargerthanp}. The
derivations of those cases is as follows. 

The temporary state lasts for a random time $T$, after which the economy
jumps to the absorbing state, because the model is completely
forward-looking with no endogenous persistence. In the transitory state $%
\epsilon _{t}=-\sigma \hat{M}_{t+1|t}=\sigma pr^{L}<0$, the equilibrium will
be ($\hat{\pi}^{L},\hat{x}^{L}$) and with probability $(1-p)$ we are back in
the absorbing state. The latter can be a PIR one or a ZIR one. 
\begin{figure}
\centering\includegraphics[scale = 0.45]{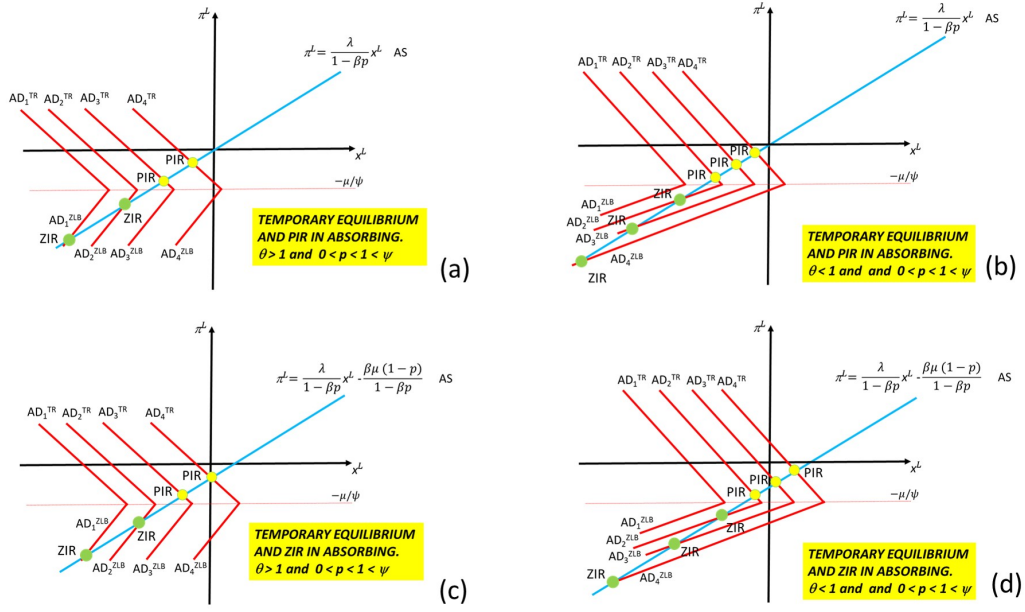}
\caption{The temporary state in the NK model when $\psi>1$.} 
\label{app: fig: nk_psilargerthanp}
\end{figure}

When the absorbing state is PIR, the system becomes 
\begin{align}
\hat{\pi}^{L} & =\frac{\lambda}{1-\beta p}\hat{x}^{L} \hspace{7em} AS \label{app: eq: AS-absPIR} \\
\hat{\pi}^{L} & =\left\{
\begin{array}
[c]{c}
\frac{1-p}{\sigma\left(  p-\psi\right)  }\hat{x}^{L}+\frac{ p(-r^{L}%
)}{\left(  p-\psi\right)  }\qquad AD^{TR} \hspace{2em} \text{ for }\pi
\geq-\frac{\mu}{\psi}\\
\frac{1-p}{\sigma p}\hat{x}^{L}-\frac{\mu}{p}+(-r^{L})\qquad
AD^{ZLB} \hspace{1em} \text{ for }\pi\leq-\frac{\mu}{\psi}
\end{array}
\right.  \label{app: eq: AD-absPIR}
\end{align}
These curves are plotted in the top row of Figure \ref{app: fig:
nk_psilargerthanp} for the cases $\theta >1$ on the left, i.e., panel (a), where $AS$ is flatter than $AD^{ZLB}$, and $\theta <1$ on the right, i.e., panel (b), where $AS$ is steeper than $AD^{ZLB}.$ 

When the absorbing state is ZIR, instead, expectations in the temporary
equilibrium are different, so the system to solve for becomes
\begin{align}
\hat{\pi}^{L}  &  =\frac{\lambda}{1-\beta p}\hat{x}^{L}-\frac{\beta\mu(1-p)}{1-\beta p} \hspace{9em} AS
\label{app: eq: AS-absZIR}
\\
\hat{\pi}^{L}  &  =\left\{
\begin{array}
[c]{c}
\hat{x}^{L}\frac{(1-p)}{\sigma\left(  p-\psi\right)  }+\frac{p(-r^{L})}%
{p-\psi}+{\frac{\mu(1-p)}{p-\psi
}\left[  \frac{(1-\beta)}{\lambda\sigma}+1\right]} \quad AD^{TR} \hspace{1em} \text{ for }\pi\geq-\frac{\mu}{\psi}\\
\hat{x}^{L}\frac{1-p}{\sigma p}-\frac{\mu}{p}-r^{L}+ \frac{\mu(1-p)}{p}\left[  \frac{(1-\beta)}{\lambda\sigma}+1\right]   \qquad AD^{ZLB}  \text{ for }\pi\leq -\frac{\mu}{\psi}
\end{array}
\right. \label{app: eq: AD-absZIR}
\end{align}
These curves are plotted in the bottom row of Figure \ref{app: fig:
nk_psilargerthanp} for the cases $\theta >1$ on the left, i.e., panel (c), where $AS$ is flatter than $AD^{ZLB}$, and $\theta <1$ on the right, i.e., panel (d), where $AS$ is steeper than $AD^{ZLB}.$

Inspection of the graphs on the left of Figure \ref{app: fig:
nk_psilargerthanp}, where $\theta >1$ for PIR absorbing (panel (a)) and ZIR
absorbing (panel (c)) shows there is always a solution in both cases. We
therefore conclude that when $\theta >1,$ the only necessary support
restriction is $\left( r\pi _{\ast }\right) ^{-1}\leq 1$ for existence of an
equilibrium in the absorbing state. This proves (\ref{eq: supp restr NK E}).

Next, turn to the case $\theta <1.$ Now it is clear that a further support
restriction is needed on the value of the shock in the transitory state. The
cutoff can be computed by finding the point where the $AD$ and $AS$ curves
intersect at the kink of $AD$. There are two different points for the cases in Figure \ref{app: fig: nk_psilargerthanp}: panel (b), PIR absorbing and panel (d), ZIR absorbing. From inspection, it is clear that the former is the least stringent condition, so it suffices to focus on that. Specifically, we equate (\ref{app: eq: AS-absPIR}) with (\ref{app: eq: AD-absPIR}) at $\hat{\pi}^{L}=-\frac{\mu }{\psi }$ to find the value of the
shock $r^{L}=\bar{r}^{L}$ such that the equations have a solution for all $-r^{L}\leq -\bar{r}^{L}.$ Hence, the cutoff can be found by solving:
\begin{equation*}
-\frac{\mu }{\psi }\frac{1-\beta p}{\lambda }=\sigma \frac{-\left( p-\psi
\right) \frac{\mu }{\psi }+p\bar{r}^{L}}{1-p},
\end{equation*}%
which yields%
\begin{equation*}
-\bar{r}^{L}=\frac{\mu }{\psi }\frac{\left( 1-\beta p\right) \left(
1-p\right) }{p\lambda \sigma }-\frac{\left( p-\psi \right) }{p}\frac{\mu }{%
\psi }=\mu \left( \frac{\psi -p}{\psi p}+\frac{\theta }{\psi}\right) ,
\end{equation*}%
which proves (\ref{eq: supp restr NK B}).\hfill\openbox


\subsubsection{Proof of Proposition \ref{prop: NK-OP sup res}} \label{app: s: prop NK-OP sup res}
We first look at the absorbing (or steady) state, where $\epsilon _{t}=0.$
Then, the system to solve is 
\begin{equation}\label{eq: AS/AD abs NK_OP}
\ \hat{\pi}=\frac{\lambda }{1-\beta }\hat{x}\quad AS\quad ;\quad \hat{\pi}
=\max \left\{
\begin{array}
[c]{c}
-\frac{\gamma}{\lambda}\hat{x}\quad AD^{OP}\\
-\mu \qquad AD^{ZLB}
\end{array}.
\right.
\end{equation}
This is depicted in Figure \ref{app: fig: nk-OP_abs}. In contrast with the NK-TR case, there are two inequalities to satisfy: the ZLB and the slackness condition on optimal policy, i.e., (\ref{eq: NK OP}). 
In the NK-TR case, there is only the former inequality, while the
Taylor rule is expressed as equality, thus graphically a feasible point above the ZLB needs to be on the $AD^{TR}$ line. Here instead, a feasible point can be below the first order conditions for optimal policy.\footnote{An alternative way to say the same thing is to note that the graph now shows that the $AD$ is
a correspondence and not a function, as in the case in the Taylor
rule case.} In Figure \ref{app: fig: nk-OP_abs} both the PIR and the ZIR are feasible steady states.
The PIR equilibrium is feasible because it satisfies the ZLB constraint,
i.e., is above the horizontal $AD^{ZLB}$ ZLB line.
The ZIR equilibrium is feasible because it satisfies the slackness condition on the first order conditions on optimal policy constraint, i.e., is below the $AD^{OP}$ line.\footnote{Note that there is an upper bound for the output gap defined jointly by optimal policy and the ZLB constraint. This value is given by the intersection of $AD^{OP}$ and $AD^{ZLB}$\ hence: $\hat{x}^{UB}=\frac{\lambda\mu}{\gamma}.$ If monetary authority tries to increase output further along the $AD^{OP}$ then\ eventually it hits the ZLB constraint.}
It is immediately obvious that the necessary support restriction for existence of a solution is $\mu \geq 0,$ i.e., $r^{-1}\leq \pi _{\ast }$. When this holds, there are two possible solutions: 1) PIR: $(\hat{\pi},\hat{x},\hat{R}%
)=(0,0,0)$; and 2) ZIR: $(\hat{\pi},\hat{x},\hat{R})=(-\mu ,-\mu \frac{%
(1-\beta )}{\lambda },-\mu )$. 
\begin{figure}
\centering
\includegraphics[scale = 0.5]{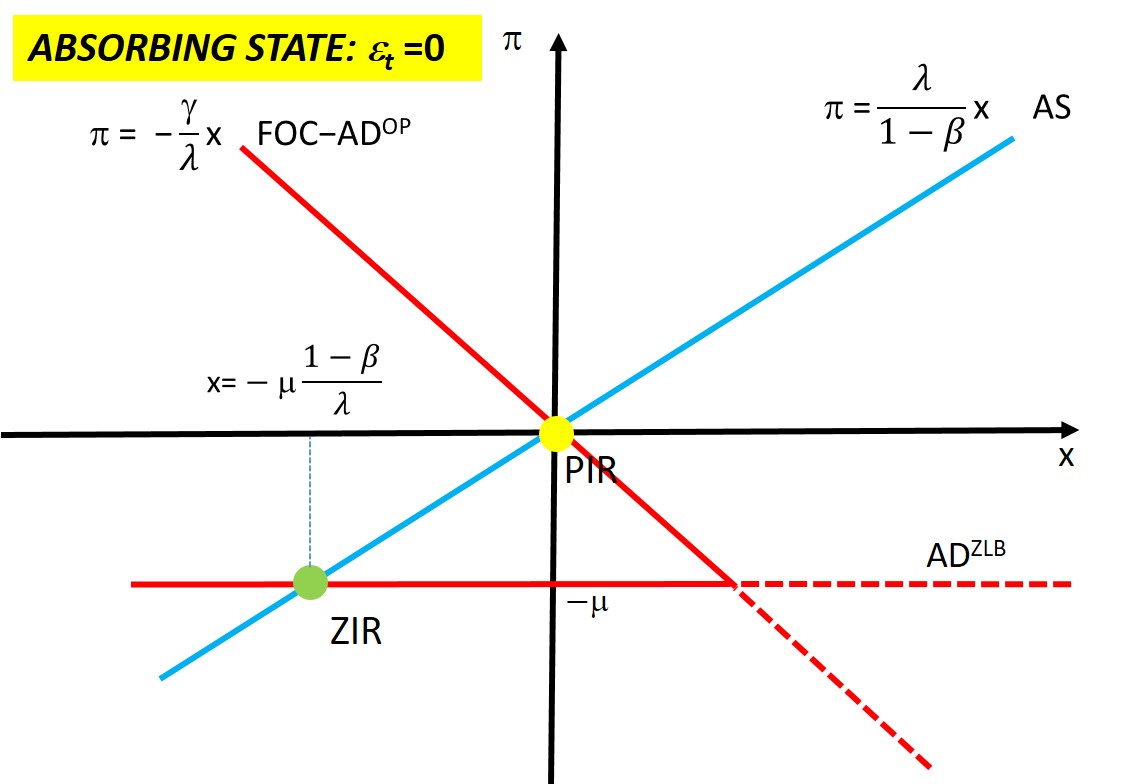}
\caption{The absorbing state in the NK-OP model.} 
\label{app: fig: nk-OP_abs}
\end{figure}

Next, turn to the transitory state. Here, there are four possibilities
depending on the value of $\theta ,$ and the equilibrium in the absorbing
state. These are depicted in Figure \ref{app: fig: nk-OP_temp}. The
derivations of those cases is as follows. 

As before, the temporary state lasts for a random time $T$, after which the economy
jumps to the absorbing state, because the model is completely
forward-looking with no endogenous persistence. In the transitory state $%
\epsilon _{t}=-\sigma \hat{M}_{t+1|t}=\sigma pr^{L}<0$, the equilibrium will
be ($\hat{\pi}^{L},\hat{x}^{L}$) and with probability $(1-p)$ we are back in
the absorbing state. The latter can be a PIR one or a ZIR one. 
\begin{figure}
\centering\includegraphics[scale = 0.5]{optpol_temporary_all_cases.jpg}
\caption{The temporary state in the NK-OP model.} 
\label{app: fig: nk-OP_temp}
\end{figure}

When the absorbing state is PIR and the ZLB does not bind, the system becomes
\begin{align}
\hat{\pi}^{L}  &  =\frac{\lambda}{1-\beta p}\hat{x}^{L}\qquad AS\nonumber\\
\hat{\pi}^{L}  &  =-\frac{\gamma}{\lambda}\hat{x}^{L}\qquad
AD^{OP}\nonumber\\
\hat{\pi}^{L}  &  >\hat{\pi}^{L,ZLB}=\hat{x}^{L}\frac{1-p}{\sigma p}-\frac
{\mu}{p}-r^{L}\qquad AD^{ZLB}\label{sys1}
\end{align}
When the absorbing state is PIR and the ZLB binds, then  $\hat{\pi}^{L}=\hat{\pi}^{L,ZLB}$ and $\hat{\pi}^{L,ZLB}$ needs to be smaller than the one the central bank would have chosen to satisfy the first order conditions: $\hat{\pi}^{L} \leq-\frac{\gamma}{\lambda}\hat{x}^{L}.$ The system becomes
\begin{align}
\hat{\pi}^{L}  &  =\frac{\lambda}{1-\beta p}\hat{x}^{L}\qquad
AS\nonumber\\
\hat{\pi}^{L}  &  \leq-\frac{\gamma}{\lambda}\hat{x}^{L}\qquad
AD^{OP}\nonumber\\
\hat{\pi}^{L}  &  =\hat{\pi}^{L,ZLB}=\hat{x}^{L}\frac{1-p}{\sigma p}-\frac
{\mu}{p}-r^{L}\qquad AD^{ZLB}\label{sys2}
\end{align}
The inequality in (\ref{sys1}) states that the equilibrium is above the $AD^{ZLB};$ the inequality (\ref{sys2}) states you that the equilibrium is below the $AD^{OP}.$
These curves are plotted in the top row of Figure \ref{app: fig: nk-OP_temp} for the cases $\theta >1$ on the left, i.e., panel (a), where $AS$ is flatter than $AD^{ZLB}$, and $\theta <1$ on the right, i.e., panel (b), where $AS$ is steeper than $AD^{ZLB}.$ An increase in $-r^{L}$, i.e., an increase in the absolute value of the negative discount factor shock, shifts the $AD^{ZLB}$ upwards. 
In both cases, there exists a threshold level of $(-r^{L}) =  \frac{\mu}{p}$ such that the PIR coincides with the ZIR, that is, such that the intersection between $AS$ and $AD^{OP}$ coincides with the intersection between $AS$ and $AD^{ZLB}$. Hence:

(i) when $\theta>1$, there is unique equilibrium that is a ZIR if $-pr^{L}>\mu$ and a PIR if $-pr^{L}<\mu$;

(ii) when $\theta<1$, there is no equilibrium if $-pr^{L}>\mu$ and 2 equilibria (both a ZIR and a PIR) if $-pr^{L}<\mu$.

When the absorbing state is ZIR, instead, expectations in the temporary
equilibrium are different, and given by\footnote{Note that this exactly as in the Taylor rule case, because the absorbing ZIR is not affected by the policy rule.}
\begin{align*}
E_{t}\left(  \hat{\pi}_{t+1}\right)   &  =p\hat{\pi}^{L}-\mu(1-p),\\
E_{t}\left(  \hat{x}_{t+1}\right)   &  =p\ast(\hat{x}^{L})+(1-p)\ast\left(
-\mu\frac{(1-\beta)}{\lambda}\right)  =p\hat{x}^{L}-\mu\frac{(1-\beta
)(1-p)}{\lambda}.
\end{align*}
When the absorbing state is ZIR and the ZLB does not bind, the system becomes
\begin{align*}
\hat{\pi}^{L}  &  =\frac{\lambda}{1-\beta p}\hat{x}^{L}-\frac{\beta\mu
(1-p)}{1-\beta p}\qquad AS\\
\hat{\pi}^{L}  &  =-\frac{\gamma}{\lambda}\hat{x}^{L}\qquad
AD^{OP}\nonumber\\
\hat{\pi}^{L}  &  >\hat{\pi}^{L,ZLB}=\hat{x}^{L}\frac{1-p}{\sigma p}-r^{L}%
+\mu\left(  \frac{1-p}{p}\frac{1-\beta}{\sigma\lambda}-1\right)  \qquad AD^{ZLB}\nonumber
\end{align*}
In the ZLB, instead
\begin{align*}
\hat{\pi}^{L}  &  =\frac{\lambda}{1-\beta p}\hat{x}^{L}-\frac{\beta\mu
(1-p)}{1-\beta p}\text{ \ \ \ \ \ }AS\\
\hat{\pi}^{L}  &  \leq-\frac{\gamma}{\lambda}\hat{x}^{L}\text{ \ \ \ \ \ \ \ }%
AD^{OP}\nonumber\\
\hat{\pi}^{L}  &  =\hat{\pi}^{L,ZLB}=\hat{x}^{L}\frac{1-p}{\sigma p}-r^{L}%
+\mu\left(  \frac{1-p}{p}\frac{1-\beta}{\sigma\lambda}-1\right)  \text{
\ \ \ \ \ \ \ }AD^{ZLB}\nonumber
\end{align*}
These curves are plotted in the bottom row of Figure \ref{app: fig: nk-OP_temp} for the cases $\theta >1$ on the left, i.e., panel (c), where $AS$ is flatter than $AD^{ZLB}$, and $\theta <1$ on the right, i.e., panel (d), where $AS$ is steeper than $AD^{ZLB}.$
In both cases, $r^{L}$ shifts the $AD^{ZLB}$ and there exists a threshold level of $(-r^{L}) = \overline{\left(  -r^{L}\right)} >  \frac{\mu}{p}$ such that the PIR coincides with the ZIR, that is, such that the intersection between $AS$ and $AD^{OP}$ coincides with the intersection between $AS$ and $AD^{ZLB}$. Hence:

(i) when $\theta>1$, there is unique equilibrium that is a ZIR if $(-r^{L}) > \overline{\left(  -r^{L}\right)}$ and a PIR if $(-r^{L}) < \overline{\left(  -r^{L}\right)}$;

(ii) when $\theta<1$, there is no equilibrium if $(-r^{L}) > \overline{\left(  -r^{L}\right)}$ and 2 equilibria (both a ZIR and a PIR) if $(-r^{L}) < \overline{\left(  -r^{L}\right)}$.

When $\theta >1$, thus, for PIR absorbing (panel (a)) and ZIR
absorbing (panel (c)) there is always a solution in both cases. We
therefore conclude that when $\theta >1,$ the only necessary support
restriction is $r^{-1}\leq \pi _{\ast }$ for existence of an equilibrium in the absorbing state. This proves (\ref{eq: supp restr NK-OP E}). When $\theta <1$, as evident from the graph and easy to prove, $ \overline{\left(  -r^{L}\right)} < \frac{\mu}{p}$. Thus, the relevant support restriction for coherency is given by $-r^{L}<\mu/p$, which is (\ref{eq: supp restr NK-OP B}).\hfill\openbox

\subsubsection{Existence of sunspot equilibria in NK-TR model}\label{app: s: sunspots}
Consider the NK-TR model in Proposition \ref{prop: NK-TR sup res} with the
additional restriction $\epsilon _{t}=0$ and suppose there is a sunspot
shock $\varsigma _{t}\in \left\{ 0,1\right\} $ with transition matrix $K$.
In this case, the vector of exogenous state variables in the canonical
representation (\ref{eq: canon}) can be written as $X_{t}=\left( 1,\varsigma
_{t}\right) ^{\prime }$. The model can be written as a piecewise linear
system of equations $F\left( \mathbf{Y}\right) =\kappa ,$ where $F\left(
\cdot \right) $ is given by (\ref{eq: F}) with $\mathcal{A}_{J}$ given by
Proposition \ref{prop: NK-TR canonical} as before, since the sunspot shock
affects the expectations in exactly the same way as a real shock would have.
The RHS terms $\kappa $ can be obtained from (\ref{eq: NK single eq}) with $%
\psi _{x}=0$ and $\epsilon =0$, that is,%
\begin{equation*}
\begin{tabular}{lll}
$\kappa _{J_{1}}=0_{2\times 1},$ & $J_{1}=\left\{ 1,2\right\} $ & $\text{%
(PIR,PIR)}$ \\ 
$\kappa _{J_{2}}=\lambda \sigma \mu e_{1},$ & $J_{2}=\left\{ 2\right\} $ & $%
\text{(ZIR,PIR)}$ \\ 
$\kappa _{J_{3}}=\lambda \sigma \mu e_{2},$ & $J_{2}=\left\{ 1\right\} $ & $%
\text{(PIR,ZIR)}$ \\ 
$\kappa _{J_{4}}=\lambda \sigma \mu \iota _{2},$ & $J_{4}=\varnothing $ & 
(ZIR,ZIR).%
\end{tabular}%
\end{equation*}%
The four potential equilibria (solutions) are given by $\hat{\pi}_{J}:=%
\mathcal{A}_{J}^{-1}\kappa _{J},$ i.e., 
\begin{equation}
\begin{tabular}{ll}
$\hat{\pi}_{J_{1}}=0_{2\times 2},$ & $\text{(PIR,PIR)}$ \\ 
$\hat{\pi}_{J_{2}}=\mu 
\begin{pmatrix}
\frac{a_{q}+\sigma \lambda -\sigma \lambda \psi }{\psi a_{p}+\sigma \lambda
\left( \psi -\psi _{p,q,\beta ,\sigma \lambda }\right) } \\ 
\frac{a_{q}}{\psi a_{p}+\sigma \lambda \left( \psi -\psi _{p,q,\beta ,\sigma
\lambda }\right) }%
\end{pmatrix}%
,$ & $\text{(ZIR,PIR)}$ \\ 
$\hat{\pi}_{J_{3}}=\mu 
\begin{pmatrix}
\frac{a_{p}}{\psi a_{q}+\sigma \lambda \left( \psi -\psi _{p,q,\beta ,\sigma
\lambda }\right) } \\ 
\frac{a_{p}+\sigma \lambda -\sigma \lambda \psi }{\psi a_{q}+\sigma \lambda
\left( \psi -\psi _{p,q,\beta ,\sigma \lambda }\right) }%
\end{pmatrix}%
,$ & $\text{(PIR,ZIR)}$ \\ 
$\hat{\pi}_{J_{4}}=-\mu \iota _{2},$ & (ZIR,ZIR),%
\end{tabular}
\label{eq: sunspot sols}
\end{equation}%
where we used the definitions%
\begin{eqnarray*}
a_{q} &:=\left( q-1\right) \left( \beta \left( 1-p-q\right) +\sigma \lambda
+1\right) \leq 0 \\
a_{p} &:=\left( p-1\right) \left( \beta \left( 1-p-q\right) +\sigma \lambda
+1\right) \leq 0,
\end{eqnarray*}%
for compactness, and the fact that 
\begin{eqnarray*}
a_{p}+a_{q}+\sigma \lambda  &=&\left( q-1\right) \left( \beta \left(
1-p-q\right) +\sigma \lambda +1\right) + \\
&&\left( p-1\right) \left( \beta \left( 1-p-q\right) +\sigma \lambda
+1\right) +\sigma \lambda  \\
&=&\left( p+q-2\right) \left( \beta \left( 1-p-q\right) +1\right) +\left(
p+q-1\right) \sigma \lambda  \\
&=&\sigma \lambda \psi _{p,q,\beta ,\sigma \lambda }.
\end{eqnarray*}

Note that the PIR,PIR and ZIR,ZIR equilibria are actually sunspotless in the
sense that they are completely independent of the sunspot process. (They
don't depend on $K,$ i.e., $p,q$). This is perfectly intuitive, because the
sunspot would be effectively choosing over two identical outcomes in each
state. For existence of any of those equlibria, the support restirction is $%
\mu \geq 0$. So, it remains to show that there is no weaker condition that
can support any of the other two equilibria ZIR,PIR or PIR,ZIR. That is, we
need to check if any of the two sets of inequalities:%
\begin{equation*}
\begin{pmatrix}
\frac{\left( a_{q}+\sigma \lambda -\sigma \lambda \psi \right) \mu }{\psi
a_{p}+\sigma \lambda \left( \psi -\psi _{p,q,\beta ,\sigma \lambda }\right) }%
\leq -\frac{\mu }{\psi } \\ 
\frac{a_{q}\mu }{\psi a_{p}+\sigma \lambda \left( \psi -\psi _{p,q,\beta
,\sigma \lambda }\right) }>-\frac{\mu }{\psi }%
\end{pmatrix}%
\text{\quad or\quad }%
\begin{pmatrix}
\frac{a_{p}\mu }{\psi a_{q}+\sigma \lambda \left( \psi -\psi _{p,q,\beta
,\sigma \lambda }\right) }>-\frac{\mu }{\psi } \\ 
\frac{\left( a_{p}+\sigma \lambda -\sigma \lambda \psi \right) \mu }{\psi
a_{q}+\sigma \lambda \left( \psi -\psi _{p,q,\beta ,\sigma \lambda }\right) }%
\leq -\frac{\mu }{\psi }%
\end{pmatrix}%
\end{equation*}%
can be satisfied when $\mu <0$. 

Assuming $\mu <0$ and cancelling out $\mu ,$ we have 
\begin{equation}
\begin{pmatrix}
\frac{\left( a_{q}+\sigma \lambda -\sigma \lambda \psi \right) }{\psi
a_{p}+\sigma \lambda \left( \psi -\psi _{p,q,\beta ,\sigma \lambda }\right) }%
\geq -\frac{1}{\psi } \\ 
\frac{a_{q}}{\psi a_{p}+\sigma \lambda \left( \psi -\psi _{p,q,\beta ,\sigma
\lambda }\right) }<-\frac{1}{\psi }%
\end{pmatrix}%
\text{\quad or\quad }%
\begin{pmatrix}
\frac{a_{p}}{\psi a_{q}+\sigma \lambda \left( \psi -\psi _{p,q,\beta ,\sigma
\lambda }\right) }<-\frac{1}{\psi } \\ 
\frac{\left( a_{p}+\sigma \lambda -\sigma \lambda \psi \right) }{\psi
a_{q}+\sigma \lambda \left( \psi -\psi _{p,q,\beta ,\sigma \lambda }\right) }%
\geq -\frac{1}{\psi }%
\end{pmatrix}%
.  \label{eq: inequalities}
\end{equation}%
Given that $a_{p}\leq 0$ and $a_{q}\leq 0,$ the bottom inequality on the LHS
and the top inequality on the RHS both imply that $\psi a_{p}+\sigma \lambda
\left( \psi -\psi _{p,q,\beta ,\sigma \lambda }\right) >0$ and $\psi
a_{q}+\sigma \lambda \left( \psi -\psi _{p,q,\beta ,\sigma \lambda }\right)
>0$, respectively. Then, in a ZIR,PIR equilibrium, the top inequality on the
left of (\ref{eq: inequalities}) implies%
\begin{eqnarray}
\psi \left( a_{q}+a_{p}+\sigma \lambda -\sigma \lambda \psi \right) & \geq  &
-\sigma \lambda \left( \psi -\psi _{p,q,\beta ,\sigma \lambda }\right), \quad \text{or}\\ \nonumber
-\psi \sigma \lambda \left( \psi -\psi _{p,q,\beta ,\sigma \lambda }\right)
& \geq & -\sigma \lambda \left( \psi -\psi _{p,q,\beta ,\sigma \lambda }\right) ,
 \label{eq: inequality2}
\end{eqnarray}
which cannot hold, since $\psi >1$ and $\psi _{p,q,\beta ,\sigma \lambda
}\leq 1$. An entirely symmetric argument can be
used to rule out a PIR,ZIR -- the top inequality on the RHS of (\ref{eq:
inequalities}) also leads to (\ref{eq: inequality2}). 

\subsubsection{Relationship to \citet[Proposition 1]{Nakataschmidt2019JME}}\label{app: NS}

The model in \cite{Nakataschmidt2019JME} (henceforth NS) corresponds to
(\ref{eq: NK NKPC}) with $u_{t}=0,$ (\ref{eq: NK EE}) with $\epsilon
_{t}=-\sigma\hat{M}_{t+1|t}$ and (\ref{eq: NK OP}). They denote their AD shock
as $r_{t}^{n}:=\mu-\hat{M}_{t+1|t},$ in our notation, and assume that it
follows a two-state Markov process with support $\left\{  r_{L}^{n},r_{H}%
^{n}\right\}  ,$ where $r_{L}^{n}<0<r_{H}^{n},$ and transition probabilities
$\Pr\left(  r_{t+1}^{n}=r_{L}^{n}|r_{t}^{n}=r_{j}^{n}\right)  =p_{j}$ for
$j\in\left\{  L,H\right\}  .$ This translates in our notation to $0>r_{L}%
^{n}=\mu+pr^{L},$ i.e., $-r^{L}p>\mu,$ and $0<r_{H}^{n}=\mu,$ i.e.,
$r^{-1}<\pi_{\ast}.$ The transition probabilities are in our notation
$p_{L}=p$ and $p_{H}=1-q.$ When the `high' state is absorbing ($q=1$),
we have $p_{H}=0$ in their notation.

Specializing to the case $p_{H}=0,$ NS Proposition 1 states that an
equilibrium exists if and only if the following condition holds%
\begin{equation}
p\leq p_{L}^{\ast}\text{ \ \ and \ \ }0\leq p_{H}^{\ast}%
,\label{eq: NS conditions}%
\end{equation}
where
\begin{align*}
p_{L}^{\ast}  & =\frac{-q_{1}+\sqrt{q_{1}^{2}-4q_{2}q_{0}}}{2q_{2}},\\
q_{0}  & =-\left(  \lambda^{2}+\gamma\left(  1-\beta\right)  \right)  \frac
{1}{\sigma\lambda}<0,\\
q_{1}  & =\left(  \lambda^{2}+\gamma\left(  1-\beta\right)  \right)  \left(
\frac{1+\beta}{\sigma\lambda}+1\right)  =-q_{0}\left(  1+\beta+\sigma
\lambda\right)  >0,\\
q_{2}  & =-\left(  \lambda^{2}+\gamma\left(  1-\beta\right)  \right)
\frac{\beta}{\sigma\lambda}=\beta q_{0}<0,
\end{align*}
so that
\begin{align}
p_{L}^{\ast} &  =\frac{-q_{1}+\sqrt{q_{1}^{2}-4\beta q_{0}^{2}}}{2\beta q_{0}%
}=\frac{q_{0}\left(  1+\beta+\sigma\lambda\right)  -q_{0}\sqrt{\left(
1+\beta+\sigma\lambda\right)  ^{2}-4\beta}}{2\beta q_{0}}\nonumber\\
&  =\frac{1+\beta+\sigma\lambda -\sqrt{\left(  1+\beta
+\sigma\lambda\right)  ^{2}-4\beta}}{2\beta},\label{eq: p_L^*}%
\end{align}
and%
\begin{align}
p_{H}^{\ast} &  =\frac{-\phi_{1}-\sqrt{\phi_{1}^{2}-4\phi_{2}\phi_{0}}}%
{2\phi_{2}},\label{eq: p_H^*}\\
\phi_{0} &  =-\left(  \frac{1-p}{\sigma\lambda}\left(  1-\beta p\right)
-p\right)  \frac{\mu}{\mu+pr^{L}}>0,\nonumber\\
\phi_{1} &  =-\frac{1-\beta p+\left(  1-p\right)  \beta\frac{\mu}{\mu+pr^{L}}%
}{\sigma\lambda}-\frac{\lambda^{2}+\left(  1-\beta\frac{\mu}{\mu+pr^{L}%
}\right)  \gamma}{\lambda^{2}+\gamma\left(  1-\beta\right)  },\nonumber\\
\phi_{2} &  =-\frac{\beta}{\sigma\lambda}<0.\label{eq: phi2}%
\end{align}

Substituting for $p_{L}^{\ast}$ in the first inequality in
(\ref{eq: NS conditions}) using (\ref{eq: p_L^*}), we obtain%
\begin{equation}
p<\frac{ 1+\beta+\sigma\lambda -\sqrt{\left(  1+\beta
+\sigma\lambda\right)  ^{2}-4\beta}}{2\beta}.\label{eq: NS cond 1}%
\end{equation}
This is equivalent to the condition $\theta>1$ in
(\ref{eq: supp restr NK-OP E}). Specifically, note that $\theta=\frac{\left(
1-p\right)  \left(  1-\beta p\right)  }{\sigma\lambda p}>1$ is equivalent to%
\begin{equation}
\left(  1-p\right)  \left(  1-\beta p\right)  -\sigma\lambda
p>0.\label{eq: theta quadratic}%
\end{equation}
The discriminant of the quadratic equation $\left(  1-p\right)  \left(
1-\beta p\right)  -\sigma\lambda p=0$ is $\left(  1+\beta+\sigma
\lambda\right)  ^{2}-4\beta=\left(  1-\beta\right)  ^{2}+2\sigma\lambda
+\sigma^{2}\lambda^{2}+2\sigma\beta\lambda>0,$ so the equation has real roots
$p_{1}\leq p_{2}$ given by%
\[
p_{1}=\frac{1+\beta+\sigma\lambda  - \sqrt{\left(
1+\beta+\sigma\lambda\right)  ^{2}-4\beta}}{2\beta},\ p_{2}=\frac{
1+\beta+\sigma\lambda  +\sqrt{\left(  1+\beta+\sigma\lambda\right)
^{2}-4\beta}}{2\beta}.
\]
Thus, $\theta>1$ is equivalent to $p<p_{1}=p_{L}^{\ast},$ which is NS's condition (\ref{eq: NS cond 1})$.$

Next, turn to the second inequality ($p_{H}^{\ast}\geq0$) in
(\ref{eq: NS conditions}). From (\ref{eq: p_H^*}) and (\ref{eq: phi2}), this
is equivalent to%
\[
-\phi_{1}\leq\sqrt{\phi_{1}^{2}-4\phi_{2}\phi_{0}}.
\]
The inequality is obviously satisfied for $\phi_{1}>0,$ and therefore, it is
only a restriction on how negative $\phi_{1}$ can be. In particular, it cannot
fall below $-\sqrt{\phi_{1}^{2}-4\phi_{2}\phi_{0}},$ so, equivalently, when
$\phi_{1}<0,$ we must have $\left\vert \phi_{1}\right\vert \leq\sqrt{\phi
_{1}^{2}-4\phi_{2}\phi_{0}}$, which is clearly equivalent to $\phi_{2}\phi
_{0}\leq0$. Hence, the second condition of NS is equivalent to%
\[
\phi_{2}\phi_{0}=\frac{\beta}{\sigma\lambda}\left(  \frac{\left(  1-p\right)
\left(  1-\beta p\right)  }{\sigma\lambda}-p\right)  \frac{\mu}{\mu+pr^{L}%
}=\frac{\beta p\left(  \theta-1\right)  }{\sigma\lambda}\frac{\mu}{\mu+pr^{L}%
}\leq0.
\]
Since NS assumed $\mu+pr^{L}<0$ and $\mu>0,$ it must be that $\theta>1$. So,
under NS's restrictions on the support $r_{L}^{n}<0<r_{H}^{n},$ the condition
(\ref{eq: NS conditions}) in NS Proposition 1 is equivalent to $\theta>1$ in
our Proposition \ref{prop: NK-OP sup res}.

\subsection{Derivation of results in Subsection \ref{s: cc conditions k}}

\begin{proof}[Proof of Proposition \ref{prop: NK cutoff}]
Proposition \ref{prop: NK-TR canonical} expresses the model in the form (\ref{eq: F}) and gives $\det \mathcal{A}_{J_{i}},$ $%
i=1,...,4$. We need to find the range of parameters for which all $\det \mathcal{A}_{J_{i}}$ are of the same sign. Inspection of (\ref{eq: determinants}) shows we need to consider the following two cases. 

\paragraph{Case $\protect\psi_{p,q,\protect\beta,\protect\sigma\protect%
\lambda} > 0$.}

For CC we need all determinants to be positive. First, observe that $\psi
_{p,q,\beta ,\sigma \lambda }\allowbreak =p+q-1-\allowbreak \frac{\left(
1-\left( p+q-1\right) \beta \right) \left( 2-p-q\right) }{\sigma \lambda }%
\leq 1,$ because $p+q-1\leq 1$ and  $(1-\allowbreak (p+q-1)\beta
)\allowbreak (2-p-q)\allowbreak \geq 0$. Thus, $\det \mathcal{A}_{J_{1}}>0$
implies%
\begin{equation}
\psi <\psi _{p,q,\beta ,\sigma \lambda }\quad \text{or\quad }\psi >1.
\label{eq: detA1>0}
\end{equation}%
For $\det \mathcal{A}_{J_{2}}>0$ we need%
\begin{equation*}
\sigma \lambda \psi \left( \left( 1-p\right) \left( 1-\left( p+q-1\right)
\beta \right) -p\sigma \lambda \right) +\sigma ^{2}\lambda ^{2}\psi
_{p,q,\beta ,\sigma \lambda }>0.
\end{equation*}%
Now, observe that $\psi _{p,q,\beta ,\sigma \lambda }>0$ implies $%
(p+q-1)\lambda \sigma \allowbreak >(1-\allowbreak (p+q-1)\beta )(2-p-q),$
which, in turn, implies 
\begin{equation*}
\left( 1-p\right) \left( 1-\left( p+q-1\right) \beta \right) -p\sigma
\lambda <-\left( 1-q\right) \left( \lambda \sigma +\left( 1-\left(
p+q-1\right) \beta \right) \right) <0,
\end{equation*}%
Therefore, $\det \mathcal{A}_{J_{2}}>0$ implies 
\begin{eqnarray}
\psi  &<&\frac{\sigma \lambda \psi _{p,q,\beta ,\sigma \lambda }}{p\sigma
\lambda -\left( 1-p\right) \left( 1-\left( p+q-1\right) \beta \right) } 
\notag \\
&=&\frac{\left( p+q-1\right) \sigma \lambda -\left( 1-\left( p+q-1\right)
\beta \right) \left( 2-p-q\right) }{p\sigma \lambda -\left( 1-p\right)
\left( 1-\left( p+q-1\right) \beta \right) }<1,  \label{eq: psi bound}
\end{eqnarray}%
the last inequality following from 
\begin{align}
& \left( p+q-1\right) \sigma \lambda -\left( 1-\left( p+q-1\right) \beta
\right) \left( 2-p-q\right) -p\sigma \lambda +\left( 1-p\right) \left(
1-\left( p+q-1\right) \beta \right)   \notag \\
& =-\left( 1-q\right) \sigma \lambda -\left( 1-\left( p+q-1\right) \beta
\right) \left( 1-q\right) <0.  \label{eq: aux ineq}
\end{align}%
An entirely symmetric argument applies for $\det \mathcal{A}_{J_{3}}$.
Hence, combining (\ref{eq: psi bound}) and (\ref{eq: detA1>0}), we obtain $%
\psi <\psi _{p,q,\beta ,\sigma \lambda }$, which is (\ref{eq: CC-NK-2s B}).

\paragraph{Case $\protect\psi_{p,q,\protect\beta,\protect\sigma\protect%
\lambda}<0$.}

The CC now requires $\det \mathcal{A}_{J_{i}}<0$ for all $i.$ For $\det 
\mathcal{A}_{J_{1}}<0,$ we need $\psi_{p,q,\beta,\sigma\lambda}<\psi <1.$
Next, we turn to $\det \mathcal{A}_{J_{2}}<0$%
\begin{equation*}
\sigma^2\lambda^2\psi_{p,q,\beta,\sigma\lambda}+\sigma \lambda \psi \left(
\left( 1-p\right) \left( 1-\left( p+q-1\right) \beta \right) -p\sigma
\lambda \right) <0.
\end{equation*}%
If $\left( 1-p\right) \left( 1-\left( p+q-1\right) \beta \right) \allowbreak
-\allowbreak p\sigma \lambda <0,$ then 
\begin{equation*}
\psi >\frac{\sigma\lambda\psi_{p,q,\beta,\sigma\lambda}}{\left( p\sigma
\lambda -\left( 1-p\right) \left( 1-\left( p+q-1\right) \beta \right)
\right) }=\frac{\psi_{p,q,\beta,\sigma\lambda}}{\left( p-\frac{\left(
1-p\right) \left( 1-\left( p+q-1\right) \beta \right) }{\sigma \lambda }%
\right) }<\psi_{p,q,\beta,\sigma\lambda}.
\end{equation*}%
So, this condition is satisfied for all $\psi >\psi_{p,q,\beta,\sigma\lambda}
$. Next, if $\left( 1-p\right) \left( 1-\left( p+q-1\right) \beta \right)
\allowbreak -\allowbreak p\sigma \lambda >0$, then

\begin{align*}
\frac{1}{\sigma^2 \lambda^2 }\det \mathcal{A}_{J_{2}}&
=\psi_{p,q,\beta,\sigma\lambda}+\psi\frac{ \left( 1-p\right) \left( 1-\left(
p+q-1\right) \beta \right)}{\sigma\lambda} \\
\text{ }& <\psi_{p,q,\beta,\sigma\lambda}+\frac{\left( 1-p\right) \left(
1-\left( p+q-1\right) \beta \right)}{\sigma \lambda}-p <0,
\end{align*}%
where the first inequality follows from $\psi <1$ and the second inequality follows from $\psi_{p,q,\beta,\sigma\lambda}<0$
and (\ref{eq: aux ineq}). An entirely symmetric argument applies for $\det 
\mathcal{A}_{J_{3}}<0$. Hence, we have established that the CC condition in
this case is $\psi_{p,q,\beta,\sigma\lambda}<\psi <1$, which is (\ref{eq: CC-NK-2s E}).
\end{proof}

\subsection{Derivation of results in Subsection \ref{s: UMP}}\label{s: QE}

\begin{proof}[Derivation of equation (\ref{eq: NK EE UMP})]

This is a simplified version of the New Keynesian model of bond market
segmentation that appears in \cite{IkedaLiMavroeidisZanetti2020} and \cite{Mavroeidis2019}, and is based
on \cite{ChenCurdiaFerrero2012}. The economy consists of two types of
households. A fraction $\omega_{r}$ of type `r' households can only trade
long-term government bonds. The remaining $1-\omega_{r}$ households of type
`u' can purchase both short-term and long-term government bonds, the latter
subject to a trading cost $\zeta_{t}$. This trading cost gives rise to a term
premium, i.e., a spread between long-term and short-term yields, that the
central bank can manipulate by purchasing long-term bonds. The term premium
affects aggregate demand through the consumption decisions of constrained
households. This generates an UMP channel.

Households choose consumption to maximize an isoelastic utility function and firms set prices
subject to Calvo frictions. These give rise to an Euler equation for output
and a Phillips curve, respectively. Equation (\ref{eq: NK EE UMP})
can be derived from these Euler equations and an assumption about the policy rule for long-term asset purchases. For simplicity, we omit the AD shock $\epsilon_t$ from this derivation, as it is straightforward to add. 

Up to a loglinear approximation, the relevant first-order conditions of the
households' optimization problem can be written as%
\begin{align}
0  &  =E_{t}\left[  -\frac{1}{\sigma}\left(  \hat{c}_{t+1}^{u}-\hat{c}_{t}%
^{u}\right)  +\hat{R}_{t}-\hat{\pi}_{t+1}\right]  ,\label{l_HHu2}\\
\frac{\zeta}{1+\zeta}\hat{\zeta}_{t}  &  =E_{t}\left[  -\frac{1}{\sigma
}\left(  \hat{c}_{t+1}^{u}-\hat{c}_{t}^{u}\right)  +\hat{R}_{L,t+1}-\hat{\pi}
_{t+1}\right]  ,\label{l_HHu3}\\
0  &  =E_{t}\left[  -\frac{1}{\sigma}\left(  \hat{c}_{t+1}^{r}-\hat{c}_{t}%
^{r}\right)  +\hat{R}_{L,t+1}-\hat{\pi}_{t+1}\right]  , \label{l_HHr2}%
\end{align}
where $\sigma$ is the elasticity of intertemporal substitution, $\zeta$ is the
steady state value of $\zeta_{t},$ hatted variables denote log-deviations from
steady state, $c_{t}^{j}$ is consumption of household $j\in\left\{
u,r\right\}  ,$ $R_{t}$ is the short-term nominal interest rate, and $R_{L,t}$
is the gross yield on long-term government bonds from period $t-1$ to
$t$. Goods market clearing yields
\begin{equation}
\hat{x}_{t}=\omega_{r}\hat{c}_{t}^{r}+\left(  1-\omega_{r}\right)  \hat{c}%
_{t}^{u}, \label{eq: good mkt clearing}%
\end{equation}
where $x_{t}$ is output, and we have assumed, for simplicity, that in steady
state $c^{u}=c^{r}$, which implies $c^{u}=c^{r}=x.$ Multiplying (\ref{l_HHu2})
and (\ref{l_HHr2}) by $\left(  1-\omega_{r}\right)  $ and $\omega_{r},$
respectively, and adding them yields%
\begin{equation}
\hat{x}_{t}=E_{t}\hat{x}_{t+1}-\sigma E_{t}\left[  \left(  1-\omega
_{r}\right)  \hat{R}_{t}+\omega_{r}\hat{R}_{L,t+1}-\hat{\pi}_{t+1}\right]  .
\label{eq: EE with RL}%
\end{equation}
Subtracting (\ref{l_HHu2}) from (\ref{l_HHu3}) yields%
\begin{equation}
E_{t}\left(  \hat{R}_{L,t+1}\right)  =\hat{R}_{t}+\frac{\zeta}{1+\zeta}%
\hat{\zeta}_{t}, \label{eq: yield spread}%
\end{equation}
which establishes that the term premium between long and short yields is
proportional to $\hat{\zeta}_{t}.$ Substituting for $E_{t}\left(  \hat
{R}_{L,t+1}\right)  $ in (\ref{eq: EE with RL}) using (\ref{eq: yield spread})
yields%
\begin{equation}
\hat{x}_{t}=E_{t}\hat{x}_{t+1}-\sigma\left(  \hat{R}_{t}+\omega_{r}\frac
{\zeta}{1+\zeta}\hat{\zeta}_{t}\right)  +\sigma E_{t}\left(  \hat{\pi}_{t+1}\right)
\label{eq: EE with zeta}%
\end{equation}
Next, assume that the cost of trading long-term bonds depends on their supply,
$b_{L,t},$ i.e.,%
\[
\hat{\zeta}_{t}=\rho_{\zeta}\hat{b}_{L,t},\quad\rho_{\zeta}\geq0.
\]
Substituting for $\hat{\zeta}_{t}$ in (\ref{eq: EE with zeta}) yields the
Euler equation%
\begin{equation}
\hat{x}_{t}=E_{t}\hat{x}_{t+1}-\sigma\left(  \hat{R}_{t}+\omega_{r}\frac
{\zeta}{1+\zeta}\rho_{\zeta}\hat{b}_{L,t}\right)  +\sigma E_{t}\left(
\hat{\pi}_{t+1}\right)  . \label{eq: EE UMP}%
\end{equation}
Suppose that UMP follows
the policy rule
\begin{equation}\label{eq: UMP rule}
\hat{b}_{L,t}=\alpha \min\left\{\hat{R}_{t}^{\ast}+\mu,0\right\},
\end{equation}
where $\hat{R}_{t}^{\ast}$ is the shadow rate prescribed by the Taylor rule (\ref{eq: NK TR UMP}), and $\alpha>0$ is a factor of proportionality that can be interpreted as varying the intensity of UMP -- a bigger $\alpha$ corresponds to a larger intervention for any given deviation of inflation and output from target. Substituting for $\hat{b}_{L,t}$ in (\ref{eq: EE UMP}) using (\ref{eq: UMP rule}), and using the fact that $\min\left\{\hat{R}_{t}^{\ast}+\mu,0\right\} =\allowbreak \hat{R}_{t}^{\ast} - \max\left\{\hat{R}_{t}^{\ast},-\mu\right\} =\allowbreak \hat{R}_{t}^{\ast} - \hat{R}_t$ yields (\ref{eq: NK EE UMP}) with $\xi:=\alpha\omega_{r}\frac{\zeta}{1+\zeta}\rho_{\zeta}$.
\end{proof}

\begin{proof}[Proof of Proposition \protect\ref{prop: NK cutoff UMP}]

The proof can follow the same steps as the proof of Proposition \ref{prop:
NK cutoff}, but because of the absorbing state assumption, it is easier to
proceed graphically. First, we look at the absorbing (or steady) state. The $%
AS$ curve is the same as (\ref{eq: AS/AD abs NK}), but the $AD$ curve is
different: 
\begin{equation*}
\hat{\pi}=\frac{\lambda }{1-\beta }\hat{x}\quad AS\quad ;
\qquad \hat{\pi}=\left( 1-\xi \right) \max \left\{ -\mu ,\psi \hat{\pi}\right\} +\xi \psi \hat{\pi}\qquad AD
\end{equation*}%
If the AS curve is everywhere steeper or everywhere flatter than the AD\
curve, then there will always be a unique steady state for any value of $\mu
.$ This holds if and only if: 
\begin{equation*}
\xi \psi >1,\quad \text{and \quad }\psi >1,\quad \text{OR\quad }\xi \psi
<1,\quad \text{and \quad }\psi <1.
\end{equation*}%
The steady state is a PIR, and it is given by $\hat{\pi}=\hat{x}=\hat{R}=0$
(because the value of the shock is zero at the absorbing state).

Suppose that in the transitory state $\epsilon _{t}=-\sigma \hat{M}_{t+1|t}=\sigma
pr^{L}<0$, for comparability with the standard NK model (this does not matter for the argument, since we only need to look at the slope of the AD curve). The MSV solution, if it exists, will be constant ($\hat{\pi}^{L},\hat{x}^{L}$)
and with probability $(1-p)$ we are back in the absorbing state. The $AS$
curve is given by (\ref{app: eq: AS-absPIR}), but the $AD$ curve (\ref{app:
eq: AD-absPIR}) now becomes

\begin{equation}
\hat{\pi}^{L}=\left\{ 
\begin{array}{c}
\frac{1-p}{\sigma\left(  p-\psi\right)  }\hat{x}^{L}-\frac{ pr^{L}}{\left(  p-\psi\right)  }\qquad \hspace{2em} AD^{TR} \hspace{0.5em}
\text{ for }\pi >-\frac{\mu }{\psi } \\ 
\frac{1-p}{\sigma \left( p- \xi\psi \right)}\hat{x}^{L}  - \frac{\left( 1-\xi \right)\mu+pr^{L}}{\left( p-\xi \psi \right)} \qquad AD^{ZLB}\text{ for }\pi \leq -\frac{\mu }{\psi }.
\end{array}
\right.   \label{app: eq: AD-UMP}
\end{equation}

Again, coherency requires that $AD^{TR}$ and $AD^{ZLB}$ be either both
flatter or both steeper than $AS.$ For $AD^{TR},AD^{ZLB}$ both to be flatter
than $AS$ we need 
\begin{equation*}
\psi <p-\frac{\left( 1-p\right) \left( 1-\beta p\right) }{\sigma \lambda }%
=\psi _{p,1,\beta ,\sigma \lambda },\text{ \ and \ }\xi \psi <\psi
_{p,1,\beta ,\sigma \lambda }.
\end{equation*}%
Alternatively, $AD^{TR}$ and $AD^{ZLB}$ must be both steeper than $AS,$
which requires 
\begin{equation*}
\psi >\psi _{p,1,\beta ,\sigma \lambda }\quad \text{and\quad }\xi \psi >\psi
_{p,1,\beta ,\sigma \lambda }.
\end{equation*}

Combining with the inequalities in the absorbing state, and using the fact
that $\psi _{p,1,\beta ,\sigma \lambda }\leq 0$ and $\xi >0,$ we obtain (\ref%
{eq: CC-NK-2s-UMP}).
\end{proof}

\subsection{Derivation of results in Subsection \ref{s: endog}} \label{app: s: inertia}

\subsubsection{Coefficients in \nameref{ex: NK inert}} \label{app: s: coeffNKITR}

The coefficients in the canonical representation of the model are:
\begin{align*}
A_{0}  &  =%
\begin{pmatrix}
1 & -\lambda & 0\\
0 & 1 & 0\\
0 & 0 & 1
\end{pmatrix}
,\quad A_{1}=%
\begin{pmatrix}
1 & -\lambda & 0\\
0 & 1 & \sigma\\
-\psi & -\psi_{x} & 1
\end{pmatrix}
,\quad B_{0}=B_{1}=%
\begin{pmatrix}
-\beta & 0 & 0\\
-\sigma & -1 & 0\\
0 & 0 & 0
\end{pmatrix}
,\\
C_{0}  &  =%
\begin{pmatrix}
-1 & 0 & 0 & 0\\
0 & -1 & 0 & -\sigma\mu\\
0 & 0 & 0 & -\mu
\end{pmatrix}
,\quad C_{1}=%
\begin{pmatrix}
-1 & 0 & 0 & 0\\
0 & -1 & 0 & 0\\
0 & 0 & -1 & 0
\end{pmatrix}
,\quad D_{0}=D_{1}=0_{3\times4},
\end{align*}
$H_{0}=0_{3\times3}$, $H_{1}=-\phi a a'$, $a=\left(  0,0,1\right)
^{\prime}$, $b=0_{3\times1}$, $c=\left(  0,0,0,\mu\right)  ^{\prime}$,
$d=0_{4\times1}$ and $h=0_{3\times1}$. 

\subsubsection{Brute force method for checking coherency}\label{app: s: bruteforcealg}

To derive (\ref{eq: equations for G and Z}), first note that $E(Y_{t+1}|Y_{t}\allowbreak=\mathbf{Y}_{t}%
 e_{i},X_{t}\allowbreak=\mathbf{X}e_{i})\allowbreak=\left( \mathbf{G}g'\mathbf{Y}_{t}e_i+\mathbf{Z}\right)
 K^{\prime}e_{i}$, because the support of $Y_{t+1}$ conditional on $Y_t=\mathbf{Y}_{t} e_{i}$ is $\mathbf{G}g'\mathbf{Y}_{t}e_i+\mathbf{Z}$, recalling the definition $y_t:=g'Y_t$. Substituting this and $Y_t = (\mathbf{G}y_{t-1}+\mathbf{Z})e_i$ into (\ref{eq: canon endog}) yields (\ref{eq: equations for G and Z}). 
 
We can solve the model backwards from some date $T$ at which it is known that
$\mathbf{Y}_{T}=\mathbf{G}_{J_{0}}y_{T-1}+\mathbf{Z}_{J_{0}}$, where $J_{0}%
\in\mathcal{J}$ denotes the regime configuration across the exogenous states
at $T$, and the set $\mathcal{J}$ has $2^{k}$ elements. We will treat
$\mathbf{G}_{J_{0}},$ $\mathbf{Z}_{J_{0}}$ as known for the ensuing
discussion. For example, if $J_{0}$ is PIR-only, i.e., the constraint never
binds, $\mathbf{G}_{J_{0}},$ $\mathbf{Z}_{J_{0}}$ can be obtained using the
\cite{Blan80} method. More generally, $\mathbf{G}_{J_{0}},$ $\mathbf{Z}_{J_{0}}$ can be solved from the identities implied by (\ref{eq: equations for G and Z}), i.e.,
\begin{align}
0 &=A_{s_{t,i}}\mathbf{G}e_{i}+h_{s_{t,i}}+B\mathbf{G}K^{\prime}e_{i}g^{\prime
}\mathbf{G}e_{i},\quad \text{\ and}\label{eq: equation for G}\\
0 &=\left(  A_{s_{t,i}}\mathbf{Z}+B_{s_{t,i}}\mathbf{G}K^{\prime}e_{i}g^{\prime
}\mathbf{Z}+B_{s_{t,i}}\mathbf{Z}K^{\prime}+C_{s_{t,i}}\mathbf{X}+D_{s_{t,i}%
}\mathbf{X}K^{\prime}\right)  e_{i},\label{eq: equation for Z}%
\end{align}
for all $i=1,\ldots,k.$

Given $\mathbf{Y}_{T}=\mathbf{G}_{J_{0}}y_{T-1}+\mathbf{Z}_{J_{0}},$ we solve
for $\mathbf{Y}_{T-1}$ as a function of $y_{T-2}$ from%
\begin{align*}
0 &  =\left(  A_{s_{T-1,i}}\mathbf{Y}_{T-1}+B_{s_{T-1,i}}\mathbf{Y}_{T}%
^{i}K^{\prime}+C_{s_{T-1,i}}\mathbf{X}+D_{s_{T-1,i}}\mathbf{X}K^{\prime
}\right)  e_{i}+h_{s_{T-1,i}}y_{T-2}\\
&  =\left(  A_{s_{T-1,i}}\mathbf{Y}_{T-1}+B_{s_{T-1,i}}\left(  \mathbf{G}%
_{J_{0}}g^{\prime}\mathbf{Y}_{T-1}e_{i}+\mathbf{Z}_{J_{0}}\right)  K^{\prime
}+C_{s_{T-1,i}}\mathbf{X}+D_{s_{T-1,i}}\mathbf{X}K^{\prime}\right)
e_{i}\\ & +h_{s_{T-1,i}}y_{T-2}\\
&  =\left(  A_{s_{T-1,i}}+B_{s_{T-1,i}}\mathbf{G}_{J_{0}}K^{\prime}%
e_{i}g^{\prime}\right)  \mathbf{Y}_{T-1}e_{i}\\
&  +\left(  B_{s_{T-1,i}}\mathbf{Z}_{J_{0}}K^{\prime}+C_{s_{T-1,i}}%
\mathbf{X}+D_{s_{T-1,i}}\mathbf{X}K^{\prime}\right)  e_{i}+h_{s_{T-1,i}%
}y_{T-2}.
\end{align*}
Since we can now treat $\mathbf{G}_{J_{0}},\mathbf{Z}_{J_{0}}$ as fixed for
solving backwards, given $J_{0}\in\mathcal{J}$, the CC condition is that all
of the $2^{k}$ determinants%
\begin{equation}
\det\mathcal{A}_{J_{0}J_{1}}=\prod_{i=1}^{k}\det\left(  A_{s_{T-1,i}%
}+B_{s_{T-1,i}}\mathbf{G}_{J_{0}}K^{\prime}e_{i}g^{\prime}\right)  ,\quad
J_{1}\in\mathcal{J}\label{eq: detAJ0J1}%
\end{equation}
should have the same sign:%
\begin{equation}
\det\mathcal{A}_{J_{0}J_{1}}\text{ has the same sign }\forall J_{1}%
\in\mathcal{J}.\label{eq: CC at T-1}%
\end{equation}
For example, if $k=2,$ then the determinants can be written as%
\[%
\begin{tabular}
[c]{lll}%
$\det\mathcal{A}_{J_{0}\left\{  1,2\right\}  }$ & $=\det\left(  A_{1}%
+B_{1}\mathbf{G}_{J_{0}}K^{\prime}e_{1}g^{\prime}\right)  \det\left(
A_{1}+B_{1}\mathbf{G}_{J_{0}}K^{\prime}e_{2}g^{\prime}\right)  $ & (P,P)\\
$\det\mathcal{A}_{J_{0}\left\{  2\right\}  }$ & $=\det\left(  A_{0}%
+B_{0}\mathbf{G}_{J_{0}}K^{\prime}e_{1}g^{\prime}\right)  \det\left(
A_{1}+B_{1}\mathbf{G}_{J_{0}}K^{\prime}e_{2}g^{\prime}\right)  $ & (Z,P)\\
$\det\mathcal{A}_{J_{0}\left\{  1\right\}  }$ & $=\det\left(  A_{1}%
+B_{1}\mathbf{G}_{J_{0}}K^{\prime}e_{1}g^{\prime}\right)  \det\left(
A_{0}+B_{0}\mathbf{G}_{J_{0}}K^{\prime}e_{2}g^{\prime}\right)  $ & (P,Z)\\
$\det\mathcal{A}_{J_{0}\varnothing}$ & $=\det\left(  A_{0}+B_{0}%
\mathbf{G}_{J_{0}}K^{\prime}e_{1}g^{\prime}\right)  \det\left(  A_{0}%
+B_{0}\mathbf{G}_{J_{0}}K^{\prime}e_{2}g^{\prime}\right)  $ & (Z,Z).
\end{tabular}
\ \
\]
If the CC condition (\ref{eq: CC at T-1}) is violated, we need support
restrictions. Otherwise, the solution will be given by%
\begin{multline}\label{eq: sol Y at T-1}
\mathbf{Y}_{T-1}e_{i}=-\left(  A_{s_{T-1,i}}+B_{s_{T-1,i}}\mathbf{G}_{J_{0}%
}K^{\prime}e_{i}g^{\prime}\right)  ^{-1}\\
\left[  \left(  B_{s_{T-1,i}}\mathbf{Z}_{J_{0}}K^{\prime}+C_{s_{T-1,i}%
}\mathbf{X}+D_{s_{T-1,i}}\mathbf{X}K^{\prime}\right)  e_{i}+h_{s_{T-1,i}%
}y_{T-2}\right]
\end{multline}%
for all $i=1,\ldots,k$, depending on which of the above satisfies the
inequality implied by the regime configuration $J_{1}$. Collecting all the
states, the solutions (\ref{eq: sol Y at T-1}) can be written as
$\mathbf{Y}_{T-1}=\mathbf{G}_{J_{0}J_{1}}y_{T-2}+\mathbf{Z}_{J_{0}J_{1}},$
with
\begin{equation}%
\begin{tabular}
[c]{l}%
$\mathbf{G}_{J_{0}J_{1},i}:=-\left(  A_{s_{T-1,i}}+B_{s_{T-1,i}}%
\mathbf{G}_{J_{0}}K^{\prime}e_{i}g^{\prime}\right)  ^{-1}h_{s_{T-1,i}}%
,\quad\text{and}$\\
$\mathbf{Z}_{J_{0}J_{1},i}:=-\left(  A_{s_{T-1,i}}+B_{s_{T-1,i}}%
\mathbf{G}_{J_{0}}K^{\prime}e_{i}g^{\prime}\right)  ^{-1}$\\ $\qquad\qquad\qquad \left(  B_{s_{T-1,i}%
}\mathbf{Z}_{J_{0}}K^{\prime}+C_{s_{T-1,i}}\mathbf{X}+D_{s_{T-1,i}}%
\mathbf{X}K^{\prime}\right)  e_{i},$%
\end{tabular}
\ \ \ \ \label{eq: sol G,Z at T-1}%
\end{equation}
for all $i=1,\ldots,k$. Note that the double subscript in $\mathbf{G}%
_{J_{0}J_{1}}$ and $\mathbf{Z}_{J_{0}J_{1}}$ shows that there will
be $2^{k}$ different solutions $J_{1}\in\mathcal{J}$ at $T-1$ corresponding to
each regime configuration $J_{0}\in\mathcal{J}$ at $T$. So, there will be
$2^{2k}$ different cases.

Substituting backwards to any date $t<T$, it is clear that the CC condition
would be%
\[
\det\mathcal{A}_{J_{0}\ldots J_{T-t}}\text{ has the same sign }\forall
J_{T-t}\in\mathcal{J},
\]
where%
\[
\det\mathcal{A}_{J_{0}\ldots J_{T-t}}=\prod_{i=1}^{k}\det\left(  A_{s_{t,i}%
}+B_{s_{t,i}}\mathbf{G}_{J_{0}\ldots J_{T-t-1}}K^{\prime}e_{i}q^{\prime
}\right)  ,\quad J_{T-t}\in\mathcal{J}%
\]
and the solution will be given by $\mathbf{Y}_{t}=\mathbf{G}_{J_{0}\ldots J_{T-t}}y_{t-1}+\mathbf{Z}%
_{_{J_{0}\ldots J_{T-t}}},
$
where $\mathbf{G}_{J_{0}\ldots J_{T-t}},$ $\mathbf{Z}_{_{J_{0}\ldots J_{T-t}}%
}$ are computed recursively by%
\begin{equation}%
\begin{tabular}
[c]{l}%
$\mathbf{G}_{J_{0}\ldots J_{T-t},i}:=-\left(  A_{s_{t,i}}+B_{s_{t,i}%
}\mathbf{G}_{J_{0}\ldots J_{T-t-1}}K^{\prime}e_{i}g^{\prime}\right)
^{-1}h_{s_{t,i}},\quad\text{and}$\\
$\mathbf{Z}_{J_{0}\ldots J_{T-t},i}:=-\left(  A_{s_{t,i}}+B_{s_{t,i}%
}\mathbf{G}_{J_{0}\ldots J_{T-t-1}}K^{\prime}e_{i}g^{\prime}\right)
^{-1}$\\ $\qquad\qquad\qquad\quad\left(  B_{s_{t,i}}\mathbf{Z}_{J_{0}\ldots J_{T-t-1}}K^{\prime
}+C_{s_{t,i}}\mathbf{X}+D_{s_{t,i}}\mathbf{X}K^{\prime}\right)  e_{i}.$%
\end{tabular}
\ \ \ \label{eq: sol G,Z at t}%
\end{equation}

At the end of this recursion at $t=1$ we will have $2^{\left(  T-1\right)  k}$
paths. The initial condition $y_{0}$ will then pick the path(s) that satisfy
the inequalities at all $t$. If the CC condition is satisfied at all $t$, then
there will be a unique solution path for that particular $y_{0}$. Otherwise,
there may be 0 (incoherency) or multiple (incompleteness) solutions.

This suggests the following algorithm for checking the coherency of the model.

\begin{algorithm}
[Coherency in model with endogenous states]Set a date $T>1.$

\begin{enumerate}
\item For each possible regime configuration $J_{0}\in\mathcal{J}$ ($2^{k}$ elements):

\begin{enumerate}
\item Solve (\ref{eq: equation for G}) and (\ref{eq: equation for Z}) to
obtain $\mathbf{G}_{J_{0}}$ and $\mathbf{Z}_{J_{0}}$.

\item For each $J_{1}\in\mathcal{J}$ ($2^{k}$ elements):

\begin{enumerate}
\item Compute $\det\mathcal{A}_{J_{0}J_{1}}$ from (\ref{eq: detAJ0J1}).

\item If $sign\left(  \det\mathcal{A}_{J_{0}J_{1}}\right)  $ is different from
previous $J_{1}$, break the loop and go to next $J_{0}$.

\item Otherwise compute $\mathbf{G}_{J_{0}J_{1}}$ and $\mathbf{Z}_{J_{0}J_{1}%
}$ using (\ref{eq: sol G,Z at T-1})

\item Continue with a list of nested loops for each $J_{T-t},$ for $t=T-2$
till $t=1$.
\end{enumerate}
\end{enumerate}

\item If there is no $J_{0}\in\mathcal{J}$ for which you reach $t=1,$ conclude
that there is no equilibrium without support restrictions.

\item Otherwise, there will be a unique solution. The solution along any
sequence $i_{t},$ $t=1,\ldots,T$ of exogenous shocks can be determined as follows:

\begin{enumerate}
\item Pick a $\hat{J}_{0},\ldots,\hat{J}_{T-2}\in\mathcal{J}^{T-1}$.

\item Find the (unique) $\hat{J}_{T-1}\in\mathcal{J}$ that ensures
$\mathbf{G}_{J_{0}\ldots\hat{J}_{T-1}}y_{0}+\mathbf{Z}_{J_{0}\ldots\hat
{J}_{T-1}}$ satisfies the inequalities determined by regime $\hat{J}_{T-1}$.

\item For $t=2$ to $T-1$,

\begin{enumerate}
\item Compute $y_{t-1}=g^{\prime}\left(  \mathbf{G}_{\hat{J}_{0}\ldots\hat
{J}_{T-t+1}}y_{t-2}+\mathbf{Z}_{\hat{J}_{0}\ldots\hat{J}_{T-t+1}}\right)
e_{i_{t}}$.

\item If $\mathbf{G}_{\hat{J}_{0}\ldots\hat{J}_{T-t}}y_{0}+\mathbf{Z}_{\hat
{J}_{0}\ldots\hat{J}_{T-t}}$ satisfies the inequalities determined by regime
$\hat{J}_{T-t}$, you have found the unique solution with regime configuration
$\hat{J}_{0},\ldots,\hat{J}_{T-1}$. 

\item Otherwise, exit the loop, go back to 3.(a) and pick the next element in
$\mathcal{J}^{T-2}$.
\end{enumerate}
\end{enumerate}
\end{enumerate}
\end{algorithm}



\subsubsection{Derivation of the analytical results in \nameref{ex: NK inert}} \label{app: s: NKITR_analytics}

We proceed as for the proof of Proposition \ref{prop: NK-TR sup res}, but now the Taylor rule is given by $\hat{R}_{t} =\max\{-\mu,\allowbreak\phi\hat{R}_{t-1}+\psi\hat{\pi}_{t}\}$. 
First look at the steady state, where $\epsilon _{t}=0.$  Then, we need to solve a system as (\ref{eq: AS/AD abs NK}), where the only difference now is the $AD^{TR}$ equation given by $\hat{\pi} = (\phi +\psi)\frac{\lambda}{1-\beta}\hat{x}.$
The graphical representation would be the same as Figure \ref{app: fig: nk_abs}, but with a steeper $AD^{TR}$. Whenever $ (\phi +\psi) >1$, the necessary support restriction for existence of a solution is $\mu\geq 0,$ i.e., $\left( r\pi _{\ast }\right) ^{-1}\leq 1.$ When this holds, there are two possible solutions: 1) PIR: $(\hat{\pi},\hat{x},\hat{R})=(0,0,0)$; and 2) ZIR: $(\hat{\pi},\hat{x},\hat{R})=(-\mu ,-\mu \frac{(1-\beta )}{\lambda },-\mu )$. 
However, in this case, the absorbing state admits endogenous dynamics because of the presence of the endogenous state variable $\hat{R}.$ Outside the two steady states then the economy will travel along a stable trajectory that leads to one of the 2 steady states. Let's see under which condition the following solution exists: (i) when the shock disappears the economy will converge to the PIR along the stable manifold; (ii) the solution is MSV in the sense that it depends just on state variables; (iii) in the transitory state where $\epsilon_{t}
=-\sigma\hat{M}_{t+1\mid t}=\sigma pr^{L}<0,$ the economy will be in a ZIR. Under these assumptions, once the shock disappears then we must be on the unique stable manifold that leads to the PIR, i.e., the `intended steady state'.
Assumption (i) hence is key because it pins down the expectations in the absorbing state. This is similar to the proof of Proposition \ref{prop: NK-TR sup res}. However, rather than jump to the intended steady state as when the model is forward-looking, we will arrive there inertially along the unique stable manifold.
To find the MSV solution of the PIR system, we use undetermined coefficients and assume a solution of this form:
\begin{equation}
    \hat{\pi}_{t}=\gamma_{\pi}\hat{R}_{t-1};\qquad \hat{x}_{t}=\gamma
_{x}\hat{R}_{t-1};\qquad \hat{R}_{t}=\gamma_{R}\hat{R}_{t-1}.
\label{app: eq: UCPIR}
\end{equation}
Substituting in the \nameref{ex: NK inert} system yields the following cubic equation in $\gamma_{R}$
\begin{equation}
    \beta\gamma_{R}^{3}+\gamma_{R}^{2}\left(  \psi\sigma-1-\beta-\beta\phi
-\lambda\sigma\right)  +\gamma_{R}\left(  1+\phi+\beta\phi+\lambda\sigma
\phi\right)  -\phi=0
\label{app: eq: cubicPIR}
\end{equation}
Let us now assume that it exist a unique solution within the unit circle, i.e., $\left\vert \gamma_{R}\right\vert <1$, as it would be in most applications.\footnote{Just as an example, using: $\psi=1.5,\sigma =1,\beta=0.99,\phi=0.8,\lambda=0.02,$ then the unique stable solution would be $\gamma_{R}=0.35206.$} Then, the dynamics along the stable trajectory is given by the recursion:
\begin{equation}
    \hat{\pi}_{t+j}=\gamma_{\pi}\gamma_{R}^{j}\hat{R}_{t-1};\text{ \ \ \ }\hat
{x}_{t+j}=\gamma_{x}\gamma_{R}^{j}\hat{R}_{t-1};\text{ \ \ \ }\hat{R}%
_{t+j}=\gamma_{R}^{j+1}\hat{R}_{t-1}.
\end{equation}
Note that if $\hat{R}_{t-1}=-\mu,$ then simply
\[
\hat{\pi}_{t}=-\gamma_{\pi}\mu;\qquad \hat{x}_{t}=-\gamma_{x}\mu;\qquad \hat{R}_{t}=-\gamma_{R}\mu,
\]
hence, the system will never be in a ZIR when the shock vanishes, because $\hat{R}_{t}=-\gamma_{R}\mu > - \mu$ if $\left\vert \gamma_{R}\right\vert <1$.

Next, turn to the transitory state. Here, we just want to study the situation in which the system is ZIR in the transitory state. In this case the system becomes
\begin{equation*}
    \hat{\pi}_{t} =\beta\hat{\pi}_{t+1|t}+\lambda\hat{x}_{t}, \quad
\hat{x}_{t}   =\hat{x}_{t+1|t}-\sigma\left(  -\mu-\hat{\pi}
_{t+1|t}\right)  +\sigma pr^{L}.
\end{equation*}
Note that this system is completely forward looking and not inertial because it does not have endogenous state variables, since by assumption $\hat{R}_{t-1}=-\mu$. Hence we can follow the same steps we did for Proposition \ref{prop: NK-TR sup res}, because the MSV solution, if it exists, will be constant $ \left (\hat{\pi}^{L}_{t},\hat{x}^{L}_{t} \right) $ and with probability $(1-p)$ we are back on the manifold of the PIR absorbing state.
The expectations thus are: $\hat{\pi}_{t+1|t}=p\hat{\pi}^{L}+(1-p)\left(  \gamma_{\pi}(-\mu)\right)  ; \quad
\hat{x}_{t+1|t}=p\hat{x}^{L}+(1-p)\left( \gamma_{x}(-\mu)\right).$ 
Substitute into the above ZIR system to get

\begin{equation}
\hat{\pi}^{L}=\frac{\lambda}{1-\beta p}\hat{x}^{L}-\frac{\beta(1-p)\gamma
_{\pi}}{1-\beta p}\mu \qquad AS
\end{equation}
\begin{equation}
\hat{\pi}^{L}=\frac{1-p}{\sigma p}\hat{x}^{L}-\frac{\mu}{p}\left(
1-\frac{(1-p)\left(  \gamma_{x}+\sigma\gamma_{\pi}\right)  }{\sigma}\right)
-r^{L}\qquad AD^{ZLB}%
\end{equation}

Note that if $\gamma_{x}=\gamma_{\pi}=0,$ we are back to equation (\ref{app: eq: AS-absPIR}) and (\ref{app: eq: AD-absPIR}) and figure \ref{app: fig: nk_psilargerthanp}. In this case, a graph would be very similar, since the intercepts are different, but the slopes are not affected. The same reasoning therefore applies. For the solution to hold it must be that: $-\mu>\phi\left(  -\mu\right)
+\psi\hat{\pi}=>\hat{\pi}\leq-\frac{\mu\left(  1-\phi\right)  }{\psi}.$ To find the cutoff equates the two equations when $\hat{\pi}=-\frac{\mu\left(  1-\phi\right)  }{\psi},$ to get
\begin{equation}
    -\bar{r}^{L} =\mu\left(  \frac{\psi
-p}{\psi p}+\frac{\theta}{\psi}+\frac{\phi}{\psi}\left(  1-\theta\right)  -(1-p)\frac
{\lambda\gamma_{x}+\gamma_{\pi}\left[  \beta(1-p)+\lambda\sigma\right]
}{\lambda\sigma p}\right),
\end{equation}
which is (\ref{eq: rl NK-ITR}).

\subsubsection{Quasi differencing derivations} \label{app: s: simutri}

Premultiplying (\ref{eq: canon}) by
the $n_{2}\times n$ matrix $\left(  Q^{-1}\right)  _{22}^{-1}\left(
Q^{-1}\right)  _{2\cdot},$ where $\left(  Q^{-1}\right)  _{22}$ is the bottom
right $n_{2}\times n_{2}$\ submatrix of $Q^{-1}$ and $\left(  Q^{-1}\right)
_{2\cdot}$ consists of the bottom $n_{2}$ rows of $Q^{-1},$ we get
\begin{align}
0  & =\widetilde{A}_{s}\widetilde{Y}_{t}+\widetilde{Y}_{t+1|t}+\widetilde
{C}_{s}X_{t}+\widetilde{D}_{s}X_{t+1|t}\label{eq: canon reduced}\\
s  & =1_{\left\{  \widetilde{a}^{\prime}\widetilde{Y}_{t}+\widetilde
{b}^{\prime}\widetilde{Y}_{t+1|t}+c^{\prime}X_{t}+d^{\prime}X_{t+1|t}%
>0\right\}  },\nonumber
\end{align}
where $\widetilde{Y}_{t}=\left(  Q^{-1}\right)  _{22}^{-1}\left(
Q^{-1}\right)  _{2\cdot}Y_{t}=Y_{2t}+\left(  Q^{-1}\right)  _{22}^{-1}\left(
Q^{-1}\right)  _{21}Y_{1t},$ $\widetilde{A}_{s}=\left(  Q^{-1}\right)
_{22}^{-1}\allowbreak \Lambda_{s,22}\allowbreak\left(  Q^{-1}\right)  _{22}$, $\widetilde{C}%
_{s}=\left(  Q^{-1}\right)  _{22}^{-1}\left(  Q^{-1}\right)  _{2\cdot}C_{s}$
and $\widetilde{D}_{s}=\left(  Q^{-1}\right)  _{22}^{-1}\left(  Q^{-1}\right)
_{2\cdot}D_{s}$, $\widetilde{a}^{\prime}=a^{\prime}Q_{2}\left(  Q^{-1}\right)
_{22}$ and $\widetilde{b}^{\prime}=b^{\prime}Q_{2}\left(  Q^{-1}\right)
_{22}$.

$\widetilde{Y}_{t}$ and $\widetilde{A}_{s}$ can be derived from:
\begin{align*}
\widetilde{A}_{s}\widetilde{Y}_{t}  &  =\left(  Q^{-1}\right)  _{22}%
^{-1}\left(  Q^{-1}\right)  _{2\cdot}%
\begin{pmatrix}
Q_{1} & Q_{2}%
\end{pmatrix}
\begin{pmatrix}
\Lambda_{s,11} & \Lambda_{s,12}\\
0 & \Lambda_{s,22}%
\end{pmatrix}%
\begin{pmatrix}
\left(  Q^{-1}\right)  _{1\cdot}Y_{t}\\
\left(  Q^{-1}\right)  _{2\cdot}Y_{t}%
\end{pmatrix}
\\
&  =%
\begin{pmatrix}
0 & \left(  Q^{-1}\right)  _{22}^{-1}\Lambda_{s,22}%
\end{pmatrix}%
\begin{pmatrix}
\left(  Q^{-1}\right)  _{1\cdot}Y_{t}\\
\left(  Q^{-1}\right)  _{2\cdot}Y_{t}%
\end{pmatrix}
\\
&  =\left(  Q^{-1}\right)  _{22}^{-1}\Lambda_{s,22}\left(  Q^{-1}\right)
_{2\cdot}Y_{t}=\left[  \left(  Q^{-1}\right)  _{22}^{-1}\Lambda_{s,22}\left(
Q^{-1}\right)  _{22}\right]  \left[  \left(  Q^{-1}\right)  _{22}^{-1}\left(
Q^{-1}\right)  _{2\cdot}Y_{t}\right]  .
\end{align*}
$\widetilde{a}$ (and similarly $\widetilde{b}$) follows from
\begin{align*}
a^{\prime}Y_{t}  &  =a^{\prime}%
\begin{pmatrix}
Q_{1} & Q_{2}%
\end{pmatrix}%
\begin{pmatrix}
\left(  Q^{-1}\right)  _{1\cdot}Y_{t}\\
\left(  Q^{-1}\right)  _{2\cdot}Y_{t}%
\end{pmatrix}
=%
\begin{pmatrix}
0 & a^{\prime}Q_{2}%
\end{pmatrix}%
\begin{pmatrix}
\left(  Q^{-1}\right)  _{1\cdot}Y_{t}\\
\left(  Q^{-1}\right)  _{2\cdot}Y_{t}%
\end{pmatrix}
\\
&  =a^{\prime}Q_{2}\left(  Q^{-1}\right)  _{2\cdot}Y_{t}=a^{\prime}%
Q_{2}\left(  Q^{-1}\right)  _{22}\left[  \left(  Q^{-1}\right)  _{22}%
^{-1}\left(  Q^{-1}\right)  _{2\cdot}Y_{t}\right]  ,
\end{align*}
where $a^{\prime}Q_{1}=0$ follows by Assumption \ref{ass: ST}.

\subsubsection{Proof of claims in \nameref{ex: ACS inert}} \label{app: s: ACS-STR}

The model is:
\begin{align*}
\hat{R}_{t}  &  =\max\left(  -\mu,\phi\hat{R}_{t-1}+\psi\hat{\pi}_{t}\right)
\\
\hat{\pi}_{t+1|t}  &  =\hat{R}_{t}+\hat{M}_{t+1|t}.
\end{align*}
Let $Y_{t}=\left(  \hat{\pi}_{t},\hat{R}_{t-1}\right)  ^{\prime}.$ At a PIR we
have%
\begin{equation}
\underbrace{%
\begin{pmatrix}
\phi & \psi\\
0 & 0
\end{pmatrix}
}_{A_{1}}%
\begin{pmatrix}
\hat{R}_{t-1}\\
\hat{\pi}_{t}%
\end{pmatrix}
+\underbrace{%
\begin{pmatrix}
-1 & 0\\
1 & -1
\end{pmatrix}
}_{B_{1}}%
\begin{pmatrix}
\hat{R}_{t}\\
\hat{\pi}_{t+1|t}%
\end{pmatrix}
+\underbrace{%
\begin{pmatrix}
0 & 0\\
1 & 0
\end{pmatrix}
}_{D_{1}}%
\begin{pmatrix}
\hat{M}_{t+1|t}\\
1
\end{pmatrix}
=0 \label{eq: pir}%
\end{equation}
while at a ZIR we have%
\[
\underbrace{%
\begin{pmatrix}
0 & 0\\
0 & 0
\end{pmatrix}
}_{A_{0}}%
\begin{pmatrix}
\hat{R}_{t-1}\\
\hat{\pi}_{t}%
\end{pmatrix}
+\underbrace{%
\begin{pmatrix}
-1 & 0\\
1 & -1
\end{pmatrix}
}_{B_{0}}%
\begin{pmatrix}
\hat{R}_{t}\\
\hat{\pi}_{t+1|t}%
\end{pmatrix}
+\underbrace{%
\begin{pmatrix}
0 & -\mu\\
1 & 0
\end{pmatrix}
}_{D_{0}}%
\begin{pmatrix}
\hat{M}_{t+1|t}\\
1
\end{pmatrix}
=0.
\]
Since $B_{0}=B_{1}$ is clearly invertible and $A_{0}=0,$ the matrices
$B_{0}^{-1}A_{0}$ and $B_{1}^{-1}A_{1}$ clearly commute, satisfying the first
part of Assumption \ref{ass: ST}. Because $B_{0}^{-1}A_{0}=0,$ we may choose
$Q\Lambda_{1}Q^{-1}$ as the Jordan decomposition of $B_{1}^{-1}A_{1},$ where%
\begin{equation}
Q=%
\begin{pmatrix}
1 & 1\\
-\frac{\phi}{\psi} & 1
\end{pmatrix}
,\quad\Lambda_{1}=%
\begin{pmatrix}
0 & 0\\
0 & -\psi-\phi
\end{pmatrix}
. \label{eq: Q and Lambda1}%
\end{equation}
The occasionally binding constraint is $\phi\hat{R}_{t-1}+\psi\hat{\pi}%
_{t}+\mu>0$, so $a=\left(  \phi,\psi\right)  ^{\prime},$ $b=0,$ $c=\left(
0,\mu\right)  $ in $s_{t}=1_{\left\{  a^{\prime}Y_{t}+b^{\prime}Y_{t+1|t}+c^{\prime}%
X_{t}+d^{\prime}X_{t+1|t}>0\right\}}$. From (\ref{eq: Q and Lambda1}), we see that $Q_{1}=\left(  1,-\frac{\phi}%
{\psi}\right)^{\prime},$ so $a^{\prime}Q_{1}=0$, thus verifying the second
part of Assumption \ref{ass: ST}.

The model can be written in the form (\ref{eq: canon reduced}) with
$\widetilde{Y}_{t}=\hat{\pi}_{t}+\frac{\phi}{\psi}\hat{R}_{t-1},$
$\widetilde{a}=\left(  \psi+\phi\right)  \frac{\psi}{\phi+\psi}=\psi$ and
\begin{align}
-\left(  \psi+\phi\right)  \widetilde{Y}_{t}+\widetilde{Y}_{t+1|t}-\hat
{M}_{t+1|t}  &  =0,\quad\text{if }\psi\widetilde{Y}_{t}>-\mu,\text{
}\label{eq: tilde pir}\\
\widetilde{Y}_{t+1|t}-\hat{M}_{t+1|t}+\mu\frac{\phi+\psi}{\psi}  &
=0,\quad\text{if }\psi\widetilde{Y}_{t}\leq-\mu. \label{eq: tilde zir}%
\end{align}
This is a piecewise linear model. If $\hat{M}_{t}$ follows a $2$-state Markov
Chain, it can be put in GLM\ form (\ref{eq: F}) with%
\begin{equation}%
\begin{array}
[c]{ll}%
\mathcal{A}_{1}=K-\left(  \phi+\psi\right)  I_{2}, & J_{1}=\left\{
1,2\right\}  \text{ (PIR,PIR)}\\
\mathcal{A}_{2}=-\left(  \phi+\psi\right)  e_{2}e_{2}^{\prime}+K,\quad &
J_{2}=\left\{  2\right\}  \text{ (ZIR,PIR)}\\
\mathcal{A}_{3}=-\left(  \phi+\psi\right)  e_{1}e_{1}^{\prime}+K, &
J_{2}=\left\{  1\right\}  \text{ (PIR,ZIR)}\\
\mathcal{A}_{4}=K, & J_{4}=\varnothing\text{ (ZIR,ZIR).}%
\end{array}
\label{eq: F in ACS TTRS}%
\end{equation}
The algebra to analyse its coherency properties is exactly the same as for the
noninertial case with $\phi=0$. Specifically, under Assumption
\ref{ass: M nonlinear absorbing}, the CC condition of the \nameref{th: GLM}
Theorem holds if and only if $\psi+\phi<p,$ which nests the noninertial case
$\psi<p$.\footnote{The latter can be derived from Proposition
\ref{prop: NK cutoff} with $\sigma=\infty$ and $q=1$.} This means that if
$\phi>p,$ then for all $\psi>0$ this model will not be generically coherent,
meaning that we will require support restrictions for existence of an equilibrium.

Finally, it is fairly straightforward to infer the support restriction
$-r^{L}\leq\mu\frac{\psi+\phi-p}{\psi p}$ by following the steps in the proof
of Proposition \ref{prop: NK-TR sup res}, i.e., by solving the model under all
four regime configurations. For brevity, it suffices to give the solutions for
the cases PIR,PIR and ZIR,PIR. For PIR,PIR, we have
\[
\hat{\pi}_{t}=\left\{
\begin{array}
[c]{ll}%
-\frac{\phi}{\psi}\hat{R}_{t-1}+\frac{p}{\psi+\phi-p}r^{L}, & \text{if }%
\hat{M}_{t}=-r^{L}\\
-\frac{\phi}{\psi}\hat{R}_{t-1} & \text{if }\hat{M}_{t}=0,
\end{array}
\right.
\]
which requires the support restrictions%
\[
\psi\hat{\pi}_{t}+\phi\hat{R}_{t-1}=\left\{
\begin{array}
[c]{ll}%
\frac{\psi p}{\psi+\phi-p}r^{L}\geq-\mu, & \text{if }\hat{M}_{t}=-r^{L}\\
0\geq-\mu, & \text{if }\hat{M}_{t}=0,
\end{array}
\right.
\]
i.e.,
\begin{equation}
-r^{L}\leq\mu\frac{\psi+\phi-p}{\psi p}. \label{eq: supp ACS inert}%
\end{equation}
For ZIR,PIR, the solution is%
\[
\hat{\pi}_{t}=\left\{
\begin{array}
[c]{ll}%
\frac{\psi+(1-p)\phi}{\psi p}\hat{R}_{t-1}-r^{L}, & \hat{M}_{t}=-r^{L}\\
-\frac{\phi}{\psi}\hat{R}_{t-1} & \hat{M}_{t}=0.
\end{array}
\right.
\]
which requires the support restrictions%
\[
\psi\hat{\pi}_{t}+\phi\hat{R}_{t-1}=\left\{
\begin{array}
[c]{ll}%
-\frac{\psi+(1-p)\phi}{p}\mu-\psi r^{L}-\phi\mu\leq-\mu, & \text{if }\hat
{M}_{t}=-r^{L}\\
0\geq-\mu, & \text{if }\hat{M}_{t}=0,
\end{array}
\right.
\]
which is also (\ref{eq: supp ACS inert}). \hfill\openbox

\subsection{Derivation of the equilibria in Table \ref{tab: equilibria simple}}\label{app:simple ex}

Here we derive the analytical expressions for the equilibria in Table \ref{tab: equilibria simple}. Assume to be in a period $t$, where the negative shock hits the economy, i.e., $\hat{M}_{t}=-r^{L}>0$. To solve for the possible equilibria of 
\begin{equation}
\hat{\pi}_{t+1|t}-\hat{M}_{t+1|t}-\max\left\{  -\mu,\psi\hat{\pi}_{t}\right\}
=0,\label{eq: univariate contTR}%
\end{equation}
one needs to solve for the expectations terms, that takes into account the possibility of ending up in the absorbing steady state. As we saw in the main text (see panel A in Figure \ref{fig: 2states}), when $\psi>1,$ there are two possible steady state outcomes in the absorbing state: PIR where the economy is at the intended steady state inflation target, i.e., $(\hat{M},\hat{\pi},\hat{R})=(0,0,0);$ ZIR where the economy steady state hits the ZLB constraint, i.e., $(\hat{M},\hat{\pi},\hat{R})=(0,\hat{\pi}^{ZIR}=-\mu,-\mu).$ Hence, in the temporary state in $t$, agents might expect to end up in PIR or in ZIR. If the agents expect to end up in PIR in the absorbing state, then the expectations terms will be
\begin{align}
E_{t}\left(  \hat{\pi}_{t+1}\right)   &  =p\hat{\pi}+(1-p)0=p\hat{\pi
},\label{app: eq: exppai}\\
E_{t}\left(  \hat{M}_{t+1}\right)   &  =p(-r^{L})+(1-p)0=-pr^{L},
\label{app: eq: expm}
\end{align}
and thus (\ref{eq: univariate contTR}) becomes
\begin{equation}
p\hat{\pi}=\max\left\{  -\mu,\psi\hat{\pi}\right\}  -pr^{L}.
\label{app: eq: panelc}
\end{equation}
Panel B in Figure \ref{fig: 2states} displays this equation in a graph.
There are two changes with respect to Panel A that shows the absorbing state given by the equation $\hat{\pi}=\max\left\{  -\mu,\psi\hat{\pi}\right\}.$ First the blue line is flatter, because the slope is $p$ rather than 1. Second, the negative $\hat{r}_{t}$ (i.e., positive $\hat{M}_t$) shifts the red curve upwards. The two equilibria in Panel B survive only if the real interest rate is not too low, in which case the red line shifts above the blue line and there is no possible equilibrium (incoherency). It is easy to show that the two equilibria in Panel B are given by
\begin{align}
\hat{\pi}_{t}  &  =\left\{
\begin{array}
[c]{ll}%
r^{L}\frac{p}{\psi-p}, & \text{if }\hat{M}_{t}=-r^{L}\in\left(  0,\mu
\frac{\psi-p}{\psi p}\right) \\
0, & \text{if }\hat{M}_{t}=0,
\label{app: eq: PIRPIR}
\end{array}
\right. \\
\hat{\pi}_{t}  &  =\left\{
\begin{array}
[c]{ll}
-r^{L}-\frac{\mu}{p}, & \text{if }\hat{M}_{t}=-r^{L}\in\left(  0,\mu\frac
{\psi-p}{\psi p}\right) \\
0, & \text{if }\hat{M}_{t}=0.
\label{app: eq: ZIRPIR}
\end{array}
\right.
\end{align}
These are the (PIR, PIR) and (ZIR, PIR) equilibria in Table \ref{tab: equilibria simple}. The second one implies a liquidity trap equilibrium in the temporary state. If $r^{L}<-\frac{\psi-p}{\psi p}\mu,$ there is no equilibrium.

If the agents expect to end up in ZIR in the absorbing state, instead,
then the expectations terms will be
\begin{align}
E_{t}\left(  \hat{\pi}_{t+1}\right)   &  =p\hat{\pi}+(1-p)(-\mu
),\\
E_{t}\left(  \hat{M}_{t+1}\right)   &  =p(-r^{L})+(1-p)0=-pr^{L},
\end{align}
and thus (\ref{eq: univariate contTR}) becomes%
\begin{equation}
p\hat{\pi}-\mu(1-p)=\max\left\{  -\mu,\psi\hat{\pi}\right\}  -pr^{L}.
\end{equation}
Panel C shows this case. With respect to Panel B, the blue line (LHS) now shifts down, because of the expectation of the possibility of a (permanent) liquidity trap equilibrium in the future (i.e., $(1-p)(-\mu)$). The two possible equilibria are
\begin{align}
\hat{\pi}_{t}  &  =\left\{
\begin{array}
[c]{ll}%
\frac{pr^{L}-\left(  1-p\right)  \mu}{\psi-p}, & \text{if }\hat{M}_{t}%
=-r^{L}\in\left(  0,\mu\frac{\psi-1}{\psi}\right) \\
-\mu, & \text{if }\hat{M}_{t}=0,
\label{app: eq: PIRZIR}
\end{array}
\right. \\
\hat{\pi}_{t}  &  =\left\{
\begin{array}
[c]{ll}%
-r^{L}-\mu, & \text{if }\hat{M}_{t}=-r^{L}\in\left(  0,\mu\frac{\psi-1}{\psi
}\right) \\
-\mu, & \text{if }\hat{M}_{t}=0.
\label{app: eq: ZIRZIR}
\end{array}
\right.
\end{align}
These are the (PIR, ZIR) and (ZIR, ZIR) equilibria in Table \ref{tab: equilibria simple}. Again, the second one implies a liquidity trap in the temporary state, and if $r^{L}<-\mu\frac{\psi-1}{\psi}$ there is no equilibrium.

\subsection{Further numerical results on multiple equilibria}\label{app: s: kfigures}
Figures \ref{fig: mult sol k=4} and \ref{fig: mult sol k=5} give solutions to the model of Section \ref{s: incompleteness} with $k=4$ and $k=5$ states.

\begin{figure}
\centering\includegraphics{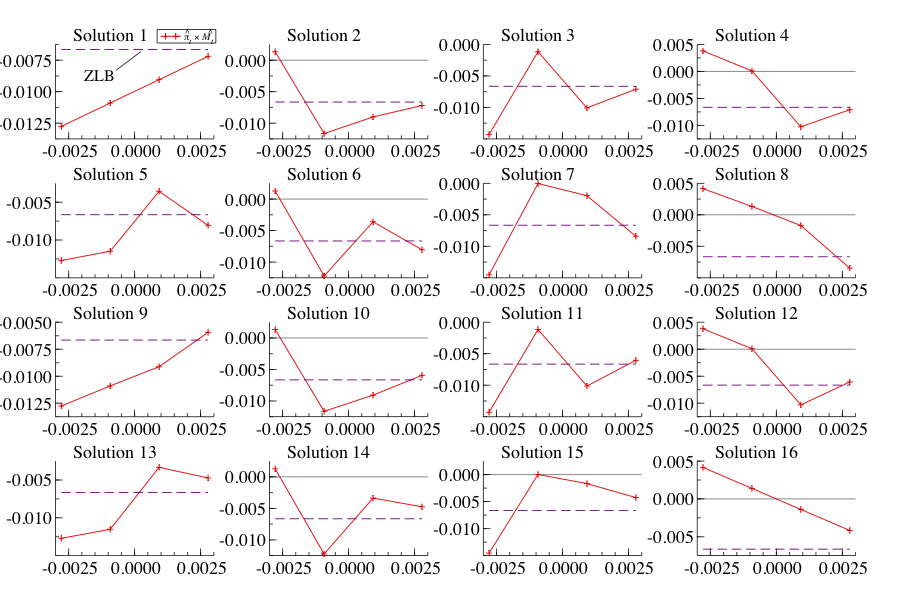}
\caption{The equilibria of model  $\hat{\pi}_{t|t+1} = max(-\mu,\psi \hat{\pi}_t)+\hat{M}_{t+1|t}$, when $\mu=0.01$, $\psi=1.5$ and $\hat{M}_t$ follows a 4-state Markov Chain with mean 0, conditional st.~dev.~$\sigma = 0.0007$, and autocorrelation $\rho=0.9$.\label{fig: mult sol k=4}} 
\end{figure}

\begin{figure}
\centering\includegraphics{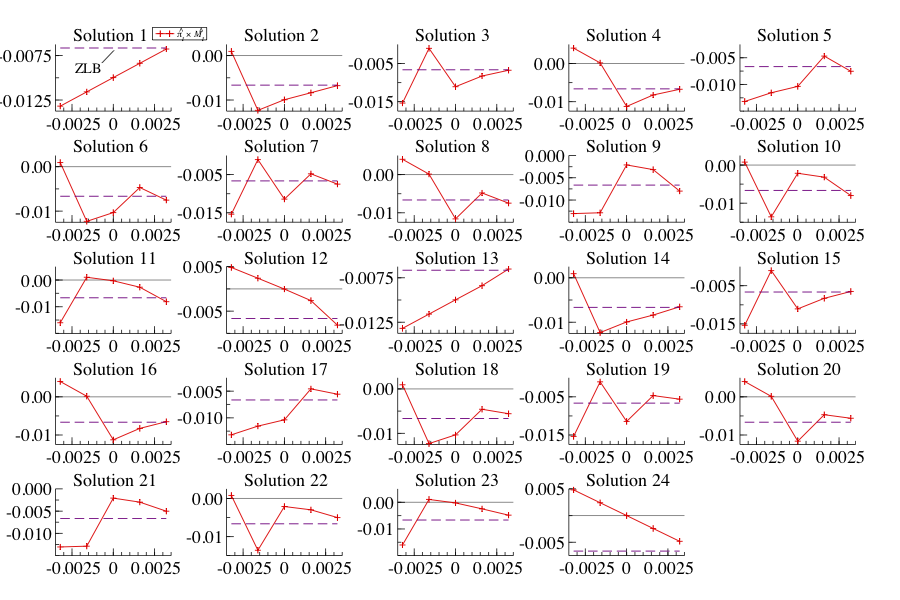}
\caption{The equilibria of model  $\hat{\pi}_{t|t+1} = max(-\mu,\psi \hat{\pi}_t)+\hat{M}_{t+1|t}$, when $\mu=0.01$, $\psi=1.5$ and $\hat{M}_t$ follows a 5-state Markov Chain with mean 0, conditional st.~dev.~$\sigma = 0.007$, and autocorrelation $\rho=0.9$.\label{fig: mult sol k=5}} 
\end{figure}

\subsection{A model with ZLB on inflation expectations}\label{app: s: ZLB expectations}
In this section we exemplify how the coherency of a model with a second inequality constraint can be analysed using the methodology of this paper. We consider \nameref{ex: ACS} with an addition ZLB on inflation expectation, motivated by \cite{GorodnichenkoSergeyev21}. The model is given by
\begin{equation}\label{eq: univariate}
\max\left\{  \hat{\pi}_{t+1|t},0\right\}  =\max\left\{  -\mu,\psi\hat{\pi}%
_{t}\right\}  +\hat{M}_{t+1|t}.%
\end{equation}
Suppose $\hat{M}_{t}$ follows a $k$-state Markov chain, with states $m$ and
transition probability kernel given by the $k\times k$ matrix $K$ with
$K_{ij}=\Pr\left(  \hat{M}_{t+1}=m_{j}|\hat{M}_{t}=m_{i}\right)  $.
\[
E_{t}\left(  \hat{\pi}_{t+1}\right)  =K\pi,\quad E_{t}\left(  \hat{M}%
_{t+1}\right)  =Km,
\]
and so the equation to be solved, (\ref{eq: univariate}), can be written as%
\[
\max\left(  K\pi,0\right)  =Km+\max\left(  -\mu\iota_{k},\psi\pi\right)  .
\]
This is a system of piecewise linear equations with two inequality constraints
$K\pi\geq0$ and $\pi\geq-\mu/\psi.$ This defines at most $4^{k}$ cones because
some inequality combinations may be impossible. The following graph
illustrates for the case $k=2$ where the first state is transitory and
persists with probability $p$ and the second state is absorbing. The ZLB on
the interest rate generates the inequalities we defined previously, i.e.,
$\pi_{t}>-\mu/\psi$ is a PIR and $\pi_{t}\leq-\mu$ is a ZIR. The ZLB on
expectations is%
\[
E_{t}\left(  \pi_{t+1}\right)  \geq0\Rightarrow\left\{
\begin{array}
[c]{ll}%
p\pi_{1}+\left(  1-p\right)  \pi_{2}\geq0, & \text{if }\hat{M}_{t}\text{
transitory}\\
\pi_{2}\geq0, & \text{if }\hat{M}_{t}\text{ absorbing.}%
\end{array}
\right.
\]
The possible regime configurations induced by combining the two inequalities are depicted in Figure \ref{fig: ACS with ZLB on exp}. We see that
these inequalities split $\Re^{2}$ into 10 cones. Let Z,P denote the interest
rate regimes ZIR and PIR, respectively, and Z$_{e}$,P$_{e}$ the expectations
regime. So, (P,P;P$_{e}$,P$_{e}$) denotes positive interest rate and positive
expectations in both states. The model can then be written in canonical form (\ref{eq: canon})
with $\mathcal{A}_{J}$ for each cone defined as shown in Table \ref{t: ZLB expectations}.
\begin{figure}
    \centering
    \includegraphics[trim={6cm 2cm 4cm 1.8cm},clip]{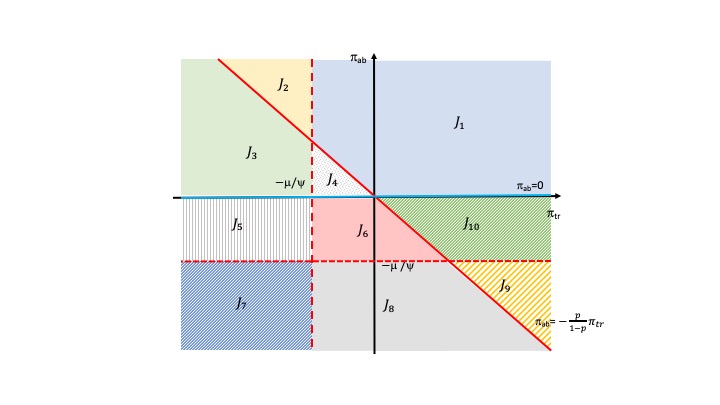}
    \caption{Combination of regimes in \nameref{ex: ACS} with additional ZLB on inflation expectations. Dotted lines delineate interest rate regimes. Solid lines delineate expectations regimes in transitory (red) and absorbing (blue) states. Regimes are denoted by $J_i$.}
    \label{fig: ACS with ZLB on exp}
\end{figure}
\begin{table}[tbp] \centering
\begin{tabular}
[c]{lcc}\hline\hline
Coefficient matrix ($2\times2$) & \multicolumn{1}{|c}{Int. rate regime} &
\multicolumn{1}{|c}{Infl. exp. regime }\\\hline
$\mathcal{A}_{J_{1}}=K-\psi I$ & \multicolumn{1}{|c}{P,P} &
\multicolumn{1}{|c}{P$_{e}$,P$_{e}$}\\
$\mathcal{A}_{J_{2}}=K-\psi e_{2}e_{2}^{T}$ & \multicolumn{1}{|c}{Z,P} &
\multicolumn{1}{|c}{P$_{e}$,P$_{e}$}\\
$\mathcal{A}_{J_{3}}=e_{2}e_{2}^{T}K-\psi e_{2}e_{2}^{T}$ &
\multicolumn{1}{|c}{Z,P} & \multicolumn{1}{|c}{Z$_{e}$,P$_{e}$}\\
$\mathcal{A}_{J_{4}}=e_{2}e_{2}^{T}K-\psi I$ & \multicolumn{1}{|c}{P,P} &
\multicolumn{1}{|c}{Z$_{e}$,P$_{e}$}\\
$\mathcal{A}_{J_{5}}=-\psi e_{2}e_{2}^{T}$ & \multicolumn{1}{|c}{Z,P} &
\multicolumn{1}{|c}{Z$_{e}$,Z$_{e}$}\\
$\mathcal{A}_{J_{6}}=-\psi I$ & \multicolumn{1}{|c}{P,P} &
\multicolumn{1}{|c}{Z$_{e}$,Z$_{e}$}\\
$\mathcal{A}_{J_{7}}=0$ & \multicolumn{1}{|c}{Z,Z} &
\multicolumn{1}{|c}{Z$_{e}$,Z$_{e}$}\\
$\mathcal{A}_{J_{8}}=-\psi e_{1}e_{1}^{T}$ & \multicolumn{1}{|c}{P,Z} &
\multicolumn{1}{|c}{Z$_{e}$,Z$_{e}$}\\
$\mathcal{A}_{J_{9}}=e_{1}e_{1}^{T}K-\psi e_{1}e_{1}^{T}$ &
\multicolumn{1}{|c}{P,Z} & \multicolumn{1}{|c}{P$_{e}$,Z$_{e}$}\\
$\mathcal{A}_{J_{10}}=e_{1}e_{1}^{T}K-\psi I$ & \multicolumn{1}{|c}{P,P} & \multicolumn{1}{|c}{P$_{e}$,Z$_{e}%
$}\\\hline\hline
\end{tabular}
\caption{Coefficients of canonical representation (\ref{eq: canon}) of \nameref{ex: ACS} with an additional ZLB on inflation expectations.}\label{t: ZLB expectations}%
\end{table}%

We see by inspection that the CC\ condition in the GLM Theorem is violated,
since some of the $\mathcal{A}_{J}$ are evidently singular. Even if we
restrict attention to $\pi_{2}\geq0,$ i.e., regimes $J_{1}$ to $J_{4}$, the
determinants are (for $\psi>1$): $\det\mathcal{A}_{J_{1}}=\left(
\psi-1\right)  \left(  \psi-p\right)  >0$, $\det\mathcal{A}_{J_{2}}=p\left(
1-\psi\right)  <0$, $\det\mathcal{A}_{J_{3}}=0$ and $\det\mathcal{A}_{J_{4}%
}=\psi\left(  \psi-1\right)  >0$. So, this model is not generically coherent.

\end{document}